\DeclareMathOperator{\SNR}{\mathsf{SNR}}
\newtheorem{theorem}{Theorem}
\newtheorem*{theorem*}{Theorem}
\newtheorem{lemma}{Lemma}
\theoremstyle{definition}  %%% for non-italic theorem environments
\newtheorem{definition}{Definition}
\newtheorem{remark}{Remark}
\newtheorem{corollary}{Corollary}
\begin{document}
\title{Fundamental Limits of Wireless Caching Under Mixed Cacheable and Uncacheable Traffic
\footnotetext{This work was supported in part by the European Research Council (ERC) under the ERC grant agreement N. 789190 (project CARENET),
and the ERC grant agreement N. 725929 (project DUALITY).
%%%
This paper was presented in part at the 2020 IEEE International Symposium on Information Theory.
\\
H. Joudeh was with the Faculty of Electrical Engineering and Computer Science, Technische Universit\"{a}t Berlin, 10587 Berlin, Germany.
%%%%
He is now with the  Department of Electrical Engineering, Eindhoven University of Technology, 
5600 MB Eindhoven, The Netherlands (e-mail: h.joudeh@tue.nl).
%%%%
E. Lampiris  was with the Faculty of Electrical Engineering and Computer Science, Technische Universit\"{a}t Berlin, 10587 Berlin, Germany.
He is now with the Communication Systems Department, EURECOM, 06410 Sophia Antipolis, France (e-mail: lampiris@eurecom.fr).
P. Elia is with the Communication Systems Department, EURECOM, 06410 Sophia Antipolis, France (e-mail: elia@eurecom.fr).
G. Caire is with the Faculty of Electrical Engineering and Computer Science, Technische Universit\"{a}t Berlin, 10587 Berlin, Germany (e-mail: caire@tu-berlin.de). 
}}
%%%%%%%%%%%%%%%%%%%%%%%%%%%%%%%%%%%%%%%%%%%%%%%%%%%%%%%%%%%%%%%%%%
\author{Hamdi~Joudeh, Eleftherios~Lampiris, Petros~Elia and Giuseppe~Caire}
%%%%%%%%%%%%%%
\date{}
%%%%%%%%%%%%%%%%%%%%%%%%%%%%%%%%%%%%%%%%%%%%%%%%%%%%%%%%%%%%%%%%%%
\maketitle
%%%%
\begin{abstract}
%%%
We consider cache-aided wireless communication scenarios where each user
requests both a file from an a-priori generated cacheable library (referred to as `content'), 
and an uncacheable `non-content' message generated at 
the start of the wireless transmission session.
%%%
This scenario is easily found in real-world wireless networks, where the two 
types of traffic coexist and share limited radio resources.
%%%
We focus on single-transmitter, single-antenna wireless networks with cache-aided receivers, where the wireless channel is modelled by a degraded Gaussian broadcast channel (GBC).
%%%
For this setting, we study the delay-rate trade-off, which characterizes the content delivery time and non-content communication rates that can be achieved simultaneously. 
%%%
We propose a scheme based on the separation principle, which isolates the coded caching and multicasting problem from the physical layer transmission problem.
%%%
We show that this separation-based scheme is sufficient for achieving an information-theoretically order-optimal performance, 
up to a multiplicative factor of $2.01$ for the content delivery time, when working in the generalized degrees of freedom (GDoF) limit.
%%%
We further show that the achievable performance is near-optimal after relaxing the GDoF limit, up to an additional additive factor of $2$ bits per dimension for 
the non-content rates.  
%%%
%%%
A key insight emerging from our scheme is that in some scenarios considerable amounts of non-content traffic can be communicated while maintaining the minimum 
content delivery time, achieved in the absence of non-content messages; compliments of `topological holes' arising from 
asymmetries in wireless channel gains.
\end{abstract}
%%%%%
%%%%%
\newpage
\section{Introduction}
\label{sec:introduction}
%%%%%%%%%%%%%%
Cache-aided architectures have emerged as an essential next step in the evolution of communication networks \cite{Paschos2018}. 
%%%
This is backed by two key factors: the explosion in \emph{cacheable} data traffic due to on-demand access to internet content (e.g. video-streaming); and the low cost and ubiquity of large on-board storage memory. 
%%%%
In the caching paradigm, popular content is pro-actively stored across network nodes during off-peak times, when network resources are underutilized, and then the pre-stored content is leveraged to alleviate the traffic load during congested peak times \cite{Shanmugam2013}. 
%%%%

%%%%
The recent few years have seen the emergence of information-theoretic  studies that aim at establishing the fundamental limits of communication over cache-aided networks. These studies have been initiated by the seminal work of Maddah-Ali and Niesen in \cite{Maddah-Ali2014}.
%%%
For an idealized symmetric broadcast channel (BC), in which cache-equipped users (receivers) are connected to a server (transmitter) through a noiseless shared link, Maddah-Ali and Niesen showed that a novel  cache-aided 
\emph{coded-multicasting} scheme 
can serve an arbitrarily large number of users with finite resources (e.g. time and bandwidth).
%%%
The achievable performance in \cite{Maddah-Ali2014}, characterized in terms of the shared link \emph{normalized load},\footnote{This can also be seen as a \emph{normalized delivery time} (NDT) measure, where one unit of time (i.e. time slot) is equivalent to the time required to deliver a single file in the absence of caches.} was shown to be order-optimal in the information-theoretic sense,  maintaining a constant multiplicative factor from the optimal performance.  
%%%
The information-theoretic optimality result in \cite{Maddah-Ali2014} was tightened later on in \cite{Wan2016,Yu2018} under the restriction of uncoded cache placement, and in \cite{Yu2019} for the general unrestricted case. 
%%%
\subsection{Wireless caching}
%%%
The bulk of data traffic nowadays is generated for wireless and mobile devices, a trend foreseen to continue and grow in the forthcoming years.
%%%
This has driven a surge of interest in extending the information-theoretic coded caching paradigm to wireless network settings.
%%%
Such settings differ from their idealized counterparts (e.g. \cite{Maddah-Ali2014,Shariatpanahi2016}) in several important aspects, which most notably include: the noisiness of wireless channels; the uneven and asymmetric nature of wireless network topologies; 
and the crucial impact of fading and channel state information (CSI) feedback. 
%%%%

%%%%
In the context of single-transmitter networks, the coded caching paradigm has been extended to noisy settings, 
including erasure and degraded BCs \cite{Ghorbel2016,Amiri2018a,Amiri2018b,Amiri2018,Salman2019,Bidokhti2017a,Bidokhti2018,Zhang2017a,Lampiris2019}, and multi-antenna 
BCs \cite{Zhang2015,Zhang2017,Piovano2017,Ngo2018,Lampiris2018a,Piovano2019,Bergel2018,Shariatpanahi2019,Cao2019}.
%%%
For multi-transmitter settings, coded caching has been studied in device-to-device (D2D) networks \cite{Ji2016}, interference networks with caches at the transmitters only or at both  transmitters and receivers 
\cite{Maddah-Ali2015,Naderializadeh2017,Xu2017,Cao2017,Hachem2018,Lampiris2017,Piovano2020}, 
and fog radio access networks (F-RANs)  \cite{Sengupta2017,Zhang2019}, among other settings.
%%%
The interplay between CSI feedback and coded caching in multi-antenna and multi-transmitter networks has been investigated in \cite{Zhang2015,Zhang2017,Piovano2017,Ngo2018,Lampiris2018a,Piovano2019,Lampiris2017,Piovano2020}.
%%%
Moreover, some recent works explore the role of multi-antenna transmitters and shared receiver caches in alleviating the subpacketization complexity bottleneck of coded caching and multicasting schemes \cite{Lampiris2018,Parrinello2020}. 
%%%%
While many of the above works focus on simpler symmetric topologies, aspects specific to uneven topologies are of paramount importance, as various studies show  \cite{Amiri2018b,Amiri2018,Salman2019,Bidokhti2017a,Bidokhti2018,Zhang2017a,Lampiris2019,Ngo2018,Bergel2018}.
%%% 
%%%
\subsection{Mixed cacheable and uncacheable traffic}
%%%
%%%%
All the above-mentioned works consider scenarios in which the network carries a single
type of traffic that takes the form of (popular) content drawn from an a-priori generated library (or database).
%%%%
This approach has been very useful and successful in gaining insights into the 
fundamental limits of cache-aided wireless networks, and the design of optimal and near optimal caching and coded multicasting schemes.
%%%%
Nevertheless, wireless data traffic does not comprise of only \emph{cacheable} content.
%%%%
\emph{Uncacheable} (non-content) traffic, generated from a plethora of interactive or real-time applications, as well as voice and video calls, to name a few examples, also constitutes  
a significant portion of overall wireless data traffic (estimated as $40$ percent \cite{Paschos2016}). 
%%%%
Moreover,  content popularity profiles in reality are far from static and may
change on a daily or even hourly basis \cite{Paschos2018,Paschos2016}. 
%%%%
Therefore, newly generated content can be both in high demand as well as not yet registered or available in caches. 
%%%

%%%
Motivated by this mixed nature of data traffic, we initiate the study of cache-aided wireless networks with both \emph{cacheable}  and \emph{uncacheable}
types of traffic.
%%%
Throughout this work, we use \emph{content} traffic to describe pre-generated \emph{cacheable} files; and \emph{non-content} traffic to describe \emph{uncacheable} messages, instantaneously generated at the start of wireless transmission sessions. 
%%%
These definitions are further clarified below and in Section \ref{sec:system model}, where the problem setting is formally described. 
%%%
\subsection{Considered setting and adopted approach}
\label{subsec:intro_setting}
%%%
The setting of focus is a cache aided degraded Gaussian BC (GBC), comprising a single transmitter and $K$ receivers.
%%%
The transmitter has access to a pre-generated content library of $N$ equal-size files, while each user has a cache memory that can store 
content of size equal to the size of $M$ files during a \emph{placement phase}, which takes place well before communication sessions commence. 
%%%
The normalized cache size is defined as $\mu \triangleq \frac{M}{N}$, where 
$\mu \in [0,1]$. 
%%%
At the beginning of a communication session, known as a \emph{delivery phase}, 
each user requests a content file, as well as an instantaneously generated non-content message, not known a-priori  to the transmitter.
%%%
The setup is illustrated in Fig. \ref{fig:system_model}.
%%%
%%%%%%
\begin{figure}[h]
\centering
\includegraphics[width = 0.55\textwidth]{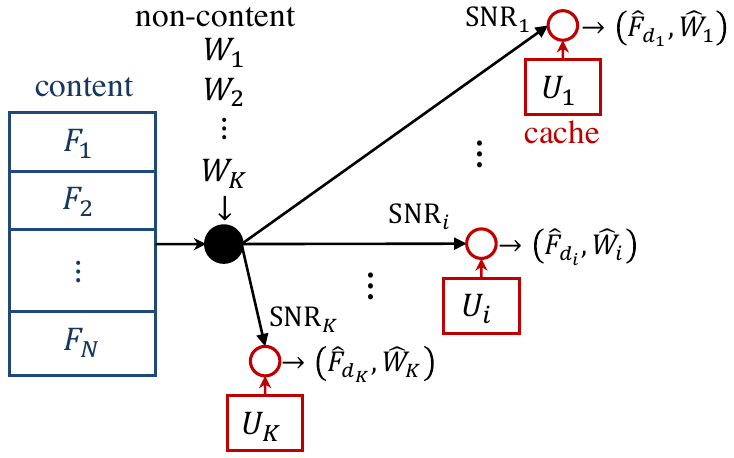}
\caption{\small $K$-user cache-aided degraded GBC with a content library and non-content messages.}
%%%%
\label{fig:system_model}
\end{figure}
%%%%%%

%%%
To gain initial insight, let us consider the special case of a symmetric physical channel with equal signal-to-noise ratio (SNR) across users.
%%%
We start by taking non-content messages out of the picture, and focus on the delivery of $K$ distinct content files.\footnote{Here we assume that $N \geq K$ for ease of exposition. This assumption is relaxed further on.}
%%%
In this case, the original Maddah-Ali and Niesen coded caching and multicasting scheme  (referred to as the MN scheme henceforth) \cite{Maddah-Ali2014}, coupled with a standard channel coding argument, achieves a (per-bit) communication 
delay of
%%%
\begin{equation}
\label{eq:delay_intro_only_content}
\mathcal{T} = \frac{1}{\log(1 + \SNR)} \cdot \frac{K (1 - \mu) } {1 + K \mu}.
\end{equation}
%%%
The above measure of delay corresponds to the number of (physical) channel uses required to deliver one bit of content for each user,
in the Shannon limit (i.e. as the file size approaches infinity).
%%%

%%%
Now consider the additional transmission of $K$ non-content messages, each intended to a unique user, at possibly distinct communication rates of $R_{1}, \ldots, R_{K}$.
%%%
The physical channel is now shared between multicast messages (coded content) and unicast messages (non-content), such that 
\begin{equation}
\label{eq:capacit_region_phy_K_user_sym}
R_{[K]} + \sum_{i = 1}^{K} {R}_{i} \leq \log(1 + \SNR)
\end{equation}
%%%
where $R_{[K]}$ denotes the rate of communicating multicast messages over the wireless channel.
%%%
For any feasible tuple of non-content rates $(R_{1}, \ldots, R_{K})$, one can readily achieve a communication delay of\footnote{We are interested in the total delay of the entire communication session, i.e. the time required to deliver both content files and non-content messages, in channel uses normalized by the number of file bits---see Section \ref{subsec:codes_rate_delay}.} 
%%%
\begin{equation}
\label{eq:delay_intro_non_content}
\mathcal{T} (R_{1},\ldots , R_{K}) = \frac{1}{\big( \log(1 + \SNR) - \sum_{i = 1}^{K} R_{i} \big)} \cdot \frac{K (1 - \mu) } {1 + K \mu}.
\end{equation}
%%%
The delay-rate trade-off in \eqref{eq:delay_intro_non_content} can be achieved  by employing 
a \emph{separation} approach, which separates the 
coded caching and multicasting problem from the physical channel transmission problem, see, e.g., \cite{Hachem2018}.
%%%
In particular, caching, generating coded multicast messages and recovering requested files from received coded messages and local cache contents are carried out at the bit level in the standard shared link fashion, using the MN scheme \cite{Maddah-Ali2014}.
%%%
On the other hand, in this same separation approach, the physical-layer sees a collection of multicast messages and unicast messages, and communicates them using standard channel coding.
%%%
As it turns out, for this particular \emph{symmetric} case, the trade-off in \eqref{eq:delay_intro_non_content} is order-optimal, i.e. within a constant multiplicative factor from the information-theoretic trade-off (see Section \ref{sec:converse}).
%%%

%%%
The reader may have noticed that as far as the physical channel is concerned, the rates in \eqref{eq:capacit_region_phy_K_user_sym} can be achieved using time-sharing, i.e. multicast and unicast messages are mapped into independent signals, communicated sequentially over distinct time slots. 
%%%
%%%
As one would imagine, the sufficiency of time-sharing in this setting is by virtue of symmetry.
%%%%
In general, time-sharing  is rendered suboptimal by the
superposition and asymmetric nature of wireless channels, epitomized through the degraded GBC---one may envisage the superiority of superposition coding 
in asymmetric scenarios.
%%%
Nevertheless,  the current treatment in the literature of delivering content and non-content traffic
as two independent problems, necessarily 
leads to time-sharing-like schemes, where the two types of traffic are scheduled  on orthogonal physical-layer resource blocks. 
%%%
As we will see in this paper, this orthogonalization is suboptimal in general, specifically in asymmetric settings.
%%%

%%%
We propose to treat the two problems jointly.
%%%
In particular, while we maintain a separation architecture that isolates coded caching and multicasting from channel coding and physical-layer transmission, 
%%%
the transmission of messages corresponding to content and non-content traffics over the physical channel is carried out in a joint (non-orthogonal) manner, by leveraging power control with superposition coding and successive decoding.
%%%
This leads to an order-optimal performance in the information-theoretic sense, as we later show in this paper. 
%%%
\subsection{Generalized degrees of freedom regime}
%%%%
Attempting to settle the above question by pursuing \emph{exact} delay-rate trade-off characterizations is bound to yield intricate solutions, which are not necessarily malleable for analysis or useful for gaining practical insights. 
%%%%
To see this, let us consider a simple setting with $K = N = 2$ and $M = 1$.
%%% 
In this case, we know from the MN scheme that the delivery of one 
coded multicast message of normalized size given by $\frac{K (1 - \mu) } {1 + K \mu} = 1/2$ is sufficient to satisfy distinct user demands. 
%%%

%%%
Using the above-described separation approach,  a delay-rate trade-off of
%%%
\begin{equation}
\label{eq:intro_delay_2_user}
\mathcal{T}(R_{1},R_{2}) = \frac{1}{ R_{12} } \cdot \frac{1} {2}
\end{equation}
%%%
can be achieved for any non-negative rate tuple $(R_{1},R_{2},R_{12})$ that satisfies 
\begin{equation}
\label{eq:capacit_region_phy_2_user}
\begin{aligned}
R_{12} + R_{1} & \leq  \log \left(1 + \frac{q\SNR_{1}} {1 + \bar{q} \SNR_{1}} \right) \\
R_{2} & \leq  \log \big(1 +  \bar{q} \SNR_{2} \big)
\end{aligned}
\end{equation}
%%%
for some power control variables $q \in [0,1]$ and $\bar{q} \triangleq 1 - q$, under a unit average power constraint.
%%%
Note that in the above, we assume without loss of generality that $\SNR_{1} \leq \SNR_{2}$.
%%%
The inequalities in \eqref{eq:capacit_region_phy_2_user} characterize the capacity region of the $2$-user degraded GBC with unicast and multicast (i.e. common) messages. 
%%%
This region, and hence the delay achieved by separation in \eqref{eq:intro_delay_2_user}, 
crucially depend on the auxiliary power control variable $q$, and in general cannot be expressed explicitly in terms of fixed channel parameters only  (i.e. $\SNR_{1}$ and $\SNR_{2}$).
%%%
This dependency on auxiliary power control variable(s) becomes more problematic in larger networks with arbitrary $K$, 
where the physical channel communicates multiple nested sets of multicast messages, 
giving rise to delay-rate characterizations which are difficult to analyse.
%%%
Effects of this complexity are seen through previous results on coded caching in the degraded GBC, 
see, e.g., \cite{Amiri2018b,Amiri2018,Salman2019}.
%%%

%%%
In this work, we circumvent the above-described complexity issue by taking a step back from the \emph{exact} delay-rate trade-off, and instead pursuing an \emph{approximate} characterization based on the \emph{generalized degrees of freedom} (GDoF)  measure \cite{Etkin2008}.
%%%
In the GDoF sense, the capacity region of the physical channel in \eqref{eq:capacit_region_phy_2_user} reduces to all non-negative GDoF tuples $(r_{1},r_{2},r_{12})$ that satisfy
%%%
\begin{equation}
\label{eq:GDoF_region_phy_2_user}
\begin{aligned}
r_{12} + r_{1} & \leq  \alpha_{1} \\
r_{12} + r_{1} + r_{2} & \leq  \alpha_{2}
\end{aligned}
\end{equation}
%%% 
where $\alpha_{1}$ and $\alpha_{2}$ are GDoF-type channel strength parameters  that correspond to 
$\SNR_{1}$ and $\SNR_{2}$, respectively (see Section \ref{subsec:phy_channel}).
%%%
The GDoF region in \eqref{eq:GDoF_region_phy_2_user} is a polyhedron, 
and has the desirable
property of admitting a reduced explicit description in terms of fixed channel parameters only (i.e. $\alpha_{1}$ and $\alpha_{2}$), 
without the need for auxiliary power control variables. 
%%%
From \eqref{eq:GDoF_region_phy_2_user}, the \emph{generalized normalized delivery time} (GNDT), i.e. the GDoF counterpart of the delay in \eqref{eq:intro_delay_2_user}, is  given by 
%%%
\begin{equation}
\label{eq:intro_GNDT_2_user}
\tau =  \max \left\{ \frac{1}{  ( \alpha_{1} - r_{1} ) } , \frac{1}{ ( \alpha_{2} - (r_{1} + r_{2})  )} \right\} \cdot \frac{1} {2}
\end{equation}
%%%
obtained from the MN scheme and \eqref{eq:GDoF_region_phy_2_user}  by observing that for any feasible non-content GDoF tuple $(r_{1},r_{2})$, a multicast GDoF of $r_{12} = \max \left\{ \frac{1}{  ( \alpha_{1} - r_{1} ) } , \frac{1}{ ( \alpha_{2} - (r_{1} + r_{2})  )} \right\}$ is achievable.
%%%
As we will see in Section \ref{sec:converse}, the simplicity of the linear inequalities in \eqref{eq:GDoF_region_phy_2_user} allows for a direct comparison with counterpart information-theoretic outer bounds, from which we prove order-optimality.
%%%

The explicit nature of the above GNDT-GDoF trade-off allows for deriving 
useful operational insights.
%%%
For instance, \eqref{eq:intro_GNDT_2_user} suggests that we can communicate a non-content message at a GDoF of $r_{2} \leq \alpha_{2} - \alpha_{1}$ to user $2$, while simultaneously maintaining the GNDT  achieved in the absence of non-content messages.
%%% 
As we will see in Section \ref{sec:main_results}, the GNDT-GDoF trade-off allows us to precisely quantify the gains due to the asymmetry in channel strengths for an arbitrary number of users. 
%%%
Through these \emph{topological holes} that appear as a result of asymmetry, we can communicate non-content messages at no cost in content delivery time.
%allows us to precisely quantify the 
%\emph{topological holes}, arising due to the asymmetry in channel gains, and through which %we can communicate non-content messages at no cost in content delivery time.
%%%
Moreover, we will also see that the GNDT-GDoF characterization leads to an approximate delay-rate characterization, up to a small  gap.
%%% 
A detailed exposition of the main results and insights is given in Section \ref{sec:main_results}.
%%%
\subsection{Overview of contributions and related works}
%%%
We conclude this section by highlighting the contributions of this work and relationship to prior art.
%%%
In the main result of this paper (Theorem \ref{theorem:optimal_trade_off}, Section \ref{sec:main_results}), 
we obtain an achievable GNDT-GDoF trade-off for the cache-aided degraded GBC with mixed content and non-content traffic; 
and we prove that this trade-off is order-optimal in the information-theoretic sense, up to a multiplicative factor of $2.01$.
%%% 
Furthermore, we show that the GNDT-GDoF characterization leads to a counterpart delay-rate trade-off, which is information-theoretically optimal up to a multiplicative factor of $2.01$ for the content delay, and an additive factor of $2$ bits (per dimension) for 
the non-content rates, at all finite SNR values (i.e. after relaxing the GDoF limit).
%%%

%%%
The achievability of our result is based on the separation principle, where the coded caching side of the problem is separated
from the physical-layer communication side.
%%%
This separation approach gives rise to a new physical-layer problem concerning the characterization of the GDoF and capacity regions of 
the $K$-user degraded GBC with unicast and multiple multicast message sets.
%%%
We give a complete characterization of the GDoF region of this channel, and its capacity region up to a constant additive gap (see Section \ref{sec:degraded_GBC_unicast_multicast}).
%%%
This result may be of interest in its own right.
%%%

%%%
The converse proof of our main result is based on a non-trivial augmentation of the argument by Yu et al. \cite{Yu2019}, proposed for the idealized shared link setting.
%%%
Guided by separation in the achievability scheme, we devise a sequence of steps that separate the information-theoretic bounds 
into a set of terms that capture the physical channel capacity, and a second set of terms that capture the load due to the content caching 
and delivery.
%%%
The former are bounded by extending classical properties of the degraded BC, while the latter are bounded by invoking techniques from 
\cite{Yu2019}.
%%%

%%%
\textbf{Related works:} As a special case of our result, we recover the previous result in \cite{Lampiris2019}, where a similar setting was 
considered in the absence of non-content messages. 
%%%
In addition to generalizing  \cite{Lampiris2019} to scenarios with both content and non-content traffic, our work improves upon this previous result in several ways.
%%%
First, our new converse leads to a tighter order-optimality result, reducing the multiplicative factor obtained in \cite{Lampiris2019} 
from $4.02$ to $2.01$.
%%%
Second, our result extends the achievability argument in \cite{Lampiris2019} to the case with non-integer normalized aggregate cache size $K \mu$, and the case with more users than files (i.e. $N < K$).
%%%
We show that for non integer $K \mu$, a direct application of the memory-sharing principle yields a strictly suboptimal GNDT, and a superior performance
is achieved by treating the physical-layer transmission problem as one with two nested sets of multicast messages.
%%%%
To address the case of $N < K$, we base our achievability on the Yu, Maddah-Ali and Avestimehr  (YMA) scheme \cite{Yu2018,Yu2019}, in contrast to the MN scheme adopted in \cite{Lampiris2019}.
%%%%
Third,  we take a few steps beyond \cite{Lampiris2019}, and refine and relax the GNDT-GDoF results to obtain 
approximate delay-rate characterizations, which hold at all finite SNR values.

Another set of closely related results for the cache-aided degraded GBC, in the absence of non-content messages, are found in 
\cite{Amiri2018b,Amiri2018,Salman2019}, where the exact delay measure (or its rate reciprocal) is considered instead of the GNDT approximation.
%%%
In \cite{Amiri2018b}, the authors  focus on minimizing the transmit power subject to a delay constraint---a dual to the more common problem of delay minimization subject to a transmit power constraint.
%%%
The scheme in \cite{Amiri2018b} can be seen as a special case of the scheme we propose here, after eliminating uncacheable non-content messages, and the derived achievable performance has the merit of exactness.
%%%
Nevertheless, the achievable delay characterization in \cite{Amiri2018b}  highly depends on auxiliary power allocation variables that require further optimization, rendering it less flexible for direct analysis compared to the GNDT characterization we obtain here---see \eqref{eq:intro_delay_2_user} and \eqref{eq:intro_GNDT_2_user}.  
%%%
Moreover, the outer bound in \cite{Amiri2018b} is restricted to uncoded placement schemes and there are no guarantees of order-optimality 
(examined numerically in \cite{Amiri2018b}).
%%%

%%%
In \cite{Amiri2018}, a setting with multi-layered content is studied, 
where each file is described by several independent layers  representing refinements of the same content (e.g. higher quality), and which are  communicated opportunistically depending on users' SNRs.
%%% 
This multi-layered setting shares an important aspect with the mixed traffic setting we study here, i.e. the opportunity to exploit 
\emph{topological holes} to communicate additional (asymmetric) messages beyond (symmetric) content files. 
%%%
On the other hand, there are also key discrepancies, including the assumption that all file layers are cacheable, and the dependency of caching schemes on the wireless network topology in \cite{Amiri2018}. 
%%%
Moreover, the results in \cite{Amiri2018} inherit some of the limitations in \cite{Amiri2018b}, e.g. the inexplicit characterizations which are strongly coupled with auxiliary optimization variables, as well as the lack of  information-theoretic optimality guarantees. 
%%%
Finally, in \cite{Salman2019} the authors obtain a complete characterization of the optimal delay in the $2$-user cache-aided GBC under the restriction 
of uncoded caching schemes.
%%%
Nevertheless, it is not yet clear whether the techniques can be extended to more general setting, with an arbitrary number of users and 
(possibly) coded cache placement.
%%%
\subsection{Notation}
%%%
For positive integers $z_{1}$ and $z_{2}$, with $z_{1} \leq z_{2}$, the sets $\{1,2,\ldots,z_{1}\}$ and $\{z_{1},z_{1}+1,\ldots,z_{2}\}$ are  denoted by $[ z_{1} ]$ and $[ z_{1}:z_{2} ]$, respectively. $\binom{z_{2}}{z_{1}}$ denotes the binomial coefficient.
%%%
For a real number $a$, we denote $\max\{0,a\}$ by $(a)^{+}$.
%%%
The tuple 
$(a_{1}, \ldots, a_{y}, b_{1}, \ldots, b_{z})$ is denoted by $(a_{i} : i  \! \in \!  [y], b_{i} : i \!  \in \!  [z] )$.
%%%
The cardinality of set
$\mathcal{A}$ is denoted by $|\mathcal{A}|$.
%%%
For sets $\mathcal{A}$ and $\mathcal{B}$, the set of elements in $\mathcal{A}$ and not in $\mathcal{B}$ is denoted by $\mathcal{A} \setminus \mathcal{B}$.
%%%
For any $\mathcal{A} \subseteq \mathbb{R}^{K}$,  the closure of set $\mathcal{A}$ is denoted by  $\mathrm{cl}\{ \mathcal{A} \}$.
%%%%
\section{Problem Setting}
\label{sec:system model}
%%%%
In this section, we formally describe the system model introduced in Section \ref{subsec:intro_setting},
and then proceed to define the performance measures and formulate the problem.
%%%
As mentioned earlier, we consider a wireless network consisting of a single transmitter
(server) and $K$ receivers (users).
%%%%
The transmitter has access to a content library of $N$ files, 
denoted by $F_{1},\ldots,F_{N}$, each of size $B$ bits.
%%%
Each user $k$ is equipped with an isolated cache memory of size $MB$ bits, where $M \in [0,N]$.
%%%
The network operates in two phases: a \emph{placement phase} and a \emph{delivery phase}.
%%%
\begin{enumerate}
\item \emph{Placement phase}: During this phase, users have access to the entire library of files to fill the content of their caches. This occurs without knowledge of future file requests.
%%%
\item \emph{Delivery phase}:  
Each user $k$ requests a \emph{content} file $F_{d_{k}}$,
where $d_{k} \in [N]$ is the corresponding demand index.
%%%
Moreover, the transmitter generates $K$ \emph{non-content} messages
$W_{1}, \ldots, W_{K}$, intended to users $1,\ldots,K$, respectively.
%%%
These messages are mutually independent, independent of the content library, and may vary in size.
%%%
During the delivery phase, the transmitter sends a codeword over the physical channel; while each user $k$ receives a corresponding noisy signal and tries to recover $\big(F_{d_{k}},W_{k}\big)$ from this signal and the local cache content. 
\end{enumerate}
%%%%
\begin{remark} \textbf{(Files and Messages).}
\label{remark:files_messages}
The word ``\emph{files}" is used to describe $F_{1},\ldots,F_{N}$, which are pre-generated \emph{content} messages, known beforehand to the server 
and revealed to users during the placement phase.
%%%
Files represent predictable types of traffic, e.g. popular internet content.
%%%
On the other hand, the word ``\emph{messages}"  describes 
$W_{1},\ldots,W_{K}$, which are classical \emph{non-content} messages generated in real time just ahead of transmission during the delivery phase.
%%%
Messages represent unpredictable types of traffic, e.g. voice and video calls, or recent internet content.
%%%
\end{remark}
%%%%
\subsection{Physical channel}
\label{subsec:phy_channel}
%%%%
The physical channel is a $K$-user degraded GBC.
%%%
In the $t$-th use of the physical channel, where $t \in \mathbb{N}$, the  input-output relationship is described as:
%%%
\begin{equation}
Y_{k}(t) = h_{k} X(t) + Z_{k}(t).
\end{equation}
%%%
In the above, $X(t) \in \mathbb{C}$ is the input signal; while 
$Y_{k}(t),Z_{k}(t), h_{k} \in \mathbb{C}$ are the output signal, 
zero-mean, unit-variance additive white Gaussian noise (AWGN) signal, and the (fixed) 
channel coefficient of user $k$, respectively.
%%%
Communication  occurs over $T$ channel uses,
in which the transmitter is subject to a unit average power constraint given by
\begin{equation}
\label{eq:power_const}
\frac{1}{T} \sum_{t = 1}^{T} | X(t) |^{2}  \leq 1.
\end{equation}
%%%

%%%
For each user $k$, the SNR is determined by the corresponding channel coefficient, and is  given by $ \SNR_{k} = |h_{k}|^{2}$.
%%%
We assume, without loss of generality, that the following order holds:
%%%
\begin{equation}
\label{eq:order_SNR}
\SNR_{1} \leq \SNR_{2} \leq \cdots \leq \SNR_{K}.
\end{equation}
%%%
For GDoF (and GNDT) purposes, we express the SNR of each user $k$ as 
%%%
\begin{equation}
\SNR_{k} = P^{\alpha_{k}}
\end{equation}
where the exponent $\alpha_{k}$ is known as the \emph{channel strength level},
while $P > 1$ is a \emph{nominal power parameter} which approaches infinity to define
 the GDoF limit---see \cite{Piovano2019,Etkin2008,Bresler2010,Jafar2010}.
%%%
We assume, without loss of generality, that $\alpha_{k} \in (0,1]$ and $\alpha_{K} = 1$, which alongside the order in \eqref{eq:order_SNR} translate to
%%%
\begin{equation}
\label{eq:order_alpha}
0 < \alpha_{1} \leq \alpha_{2} \leq \cdots \leq \alpha_{K} = 1.
\end{equation}
%%%
The channel strength tuple is given by $\bm{\alpha} \triangleq (\alpha_{1} , \ldots , \alpha_{K})$.
%%%
\begin{remark}
\label{remark_GDoF_model}
%%%
Truncating channel strength levels such that $\alpha_{k} > 0$  translates to  $\SNR_{k}  > 1$ for all  users $k \in [K]$ (recall that $P > 1$), which is common practice in GDoF studies. 
This excludes scenarios where $\SNR_{i}  \leq 1$ for some users $i \in [K]$,
as such users receive their desired signals at the same level of noise (at best), and hence achieve zero GDoF.
%%%
In the constant-gap capacity sense,  $\SNR_{i}  \leq 1$ leads to an achievable rate which is bounded above by $1$ bit per channel use.
%%%
Therefore, an achievable rate region which excludes user $i$, e.g. by setting the corresponding rate to zero,
may still be within $1$ bit per channel use from the capacity region. 
\end{remark}
%%%
\subsection{Codes, Rates and Delay}
\label{subsec:codes_rate_delay}
%%%
Files $F_{1},\ldots,F_{N}$ are independent random variables, each uniformly distributed over the set $[2^{\lfloor B \rfloor}]$.
%%%
To define asymptotic limits, we scale the file size $B$ with the number of physical channel uses $T$ as $B = T R_{F}$, where $R_{F}$ is the content rate in bits per channel use.
%%%
Messages $W_{1},\ldots,W_{K}$ are also  independent random variables, yet not necessarily identical. Each $W_{k}$ is uniformly distributed over the set $[2^{ \lfloor TR_{k} \rfloor }]$, where $R_{k}$ is the corresponding  message rate and  $\mathbf{R} \triangleq (R_{1},\ldots,R_{K})$ denotes a message rate tuple.
%%%
A demand tuple is defined as $\mathbf{d} \triangleq  (d_{1},\ldots,d_{K}) \in [N]^{K}$.
%%%

A code $(T,R_{F},\mathbf{R},M)$ consists of the above file and message sets in addition to the following:
%%%
\begin{itemize}
\item  A caching strategy  $\bm{\phi} \triangleq (\phi_{1}, \ldots, \phi_{K})$,
comprising $K$ caching functions. Each caching function $\phi_{k}: [2^{\lfloor TR_{F} \rfloor}]^{N} \rightarrow [2^{\lfloor TR_{F} M \rfloor}]  $ is a map between the $N$ library files and 
the  cache content of the corresponding user $k$, denoted by $U_{k}$. That is
%%%
\begin{equation}
U_{k} = \phi_{k}(F_{1},\ldots,F_{N}).
\end{equation}
%%%
\item An encoding function $\psi: [N]^{K}  \times [2^{ \lfloor TR_{F} \rfloor }]^{N} \times [2^{\lfloor TR_{1} \rfloor}] \times \cdots \times [2^{\lfloor TR_{K} \rfloor}] \rightarrow \mathbb{C}^{T}$
which maps the demand tuple, $N$ files and $K$ messages to a codeword 
$X^{T} \triangleq \big(X(1), \ldots, X(T) \big)$, which satisfies the power constraint in \eqref{eq:power_const}.
%%%
In particular, we have 
%%%
\begin{equation}
X^{T}  = \psi (\mathbf{d},F_{1},\ldots,F_{N}, W_{1}, \ldots, W_{K}).
\end{equation}
%%%
\item A decoding strategy $\bm{\eta} \triangleq (\eta_{1}, \ldots, \eta_{K})$, comprising $K$ decoding functions.
%%%
Each decoding function $\eta_{k}: [N]^{K}  \times \mathbb{C}^{T}  \times [2^{\lfloor TR_{F} M \rfloor}]  \rightarrow [2^{\lfloor TR_{F} \rfloor}] \times [2^{\lfloor TR_{k} \rfloor}]$ maps the demand tuple, received signal $Y_{k}^{T} \triangleq \big(Y_{k}(1), \ldots, Y_{k}(T) \big) $ 
and local cache content to an estimate of $(F_{d_{k}} , W_{k}) $, i.e.
%%%
\begin{equation}
(\hat{F}_{d_{k}} , \hat{W}_{k})  = \eta_{k} \big(\mathbf{d}, Y_{k}^{T} , U_{k} \big).
\end{equation}
%%%
\end{itemize}
%%%
For any code $(T,R_{F},\mathbf{R},M)$,  the probability of decoding error is defined as
%%%
\begin{equation}
\label{eq:error_prob}
P_{e,T} \triangleq \max_{ \mathbf{d} \in [N]^{K} }  \max_{k \in [K] } \  \Pr  \left\{ (\hat{F}_{d_{k}} , \hat{W}_{k})  \neq (F_{d_{k}} , W_{k})  \right\}  
\end{equation}
%%%
%%%
\begin{comment}
\begin{equation}
P_{e,T} \triangleq \Pr \left\{ \bigcup_{\mathbf{d} \in [N]^{K}}  \bigcup_{k \in [K]} \left\{ (\hat{F}_{d_{k}} , \hat{W}_{k})  \neq (F_{d_{k}} , W_{k})  \right\}  \right\}
\end{equation}
\end{comment}
%%%
which accounts for the worst-case file demand tuple amongst  all $N^{K}$ possible user demands.
%%%

It is instructive to work with the reciprocal of the content rate $R_{F}$, 
which enjoys desirable analytical properties, see, e.g., \cite{Maddah-Ali2015,Hachem2018}.
%%%
To this end, we define 
%%%
\begin{equation}
\mathcal{T} \triangleq \frac{1} { R_{F} } = \frac{T}{B}
\end{equation}
%%%
which is the number of physical channel uses required to communicate one bit of content to each user.
%%%
Since channel uses often correspond to time instance, $\mathcal{T} $ is referred to as the  \emph{delivery time} or \emph{delay}, used interchangeably.
%%%  
Given a memory size $M$, a delay-rate trade-off is denoted by the tuple $(\mathcal{T},\mathbf{R} ; M)$,
which  is achievable if there exists a sequence of 
$(T,1/\mathcal{T},\mathbf{R},M)$ codes such that $P_{e,T}  \rightarrow 0$ as $T  \rightarrow \infty$.
%%%
For any $(\mathbf{R};M)$, the optimal (content) delivery time is defined as:\footnote{When describing a trade-off of performance measures (e.g. $\mathcal{C}(\mathcal{T} ; M)$), a semicolon separates performance measure arguments (e.g. $\mathcal{T}$) from arguments representing fixed system parameters (e.g. $M$).}
%%%
\begin{equation}
\label{eq:delay_rate_trade_off}
\mathcal{T}^{\star}(\mathbf{R} ; M) \triangleq \inf \big\{ \mathcal{T} :  (\mathcal{T},\mathbf{R} ; M) \ \text{is achievable}\big\}.
\end{equation}
%%%
Conversely, for any  $(\mathcal{T} ; M)$, the (non-content) capacity region is defined  as:
%%%
\begin{equation}
\label{eq:capacity_delay_trade_off}
\mathcal{C}(\mathcal{T} ; M) \triangleq \mathrm{cl} \big\{ \mathbf{R} : (\mathcal{T},\mathbf{R} ; M) \ 
\text{is achievable}  \big\}. 
\end{equation}
%%%
\begin{remark}
\label{remark:worst_case}
\textbf{(Worst-case demands).} 
%%%
We adopt a worst-case definition of 
performance measures (e.g. delay and capacity region) with respect to user demands---see the decoding error probability in \eqref{eq:error_prob}. 
%%%
Therefore, without loss of generality, we assume henceforth that demand tuples $\mathbf{d}$ comprise  $\min \{K,N \} $ distinct user demands.
%%%
Moreover, in scenarios where $ N < K$, worst-case demands occur when the first (i.e. weakest)
$N$ users make distinct file demands, as we will see further on in Section \ref{sec:achievability}.
%%%
A similar observation regarding the form of worst-case demand tuples when $N < K$ was made in \cite{Amiri2018b}, where the focus is on minimizing the transmit power subject to a constraint on $R_{F}$ (or $\mathcal{T}$) in the cache-aided degraded GBC,  in the absence of non-content messages.
%%%
\end{remark}
%%%
\subsection{GDoF and GNDT}
\label{subsec:GDoF_GNDT}
%%%
%%%
In defining the GDoF and GNDT limits, the dependency of the rates and delivery time on $P$ is highlighted.
That is, for any given $M$ and $P$, an achievable delay-rate tuple is denoted by $\big( \mathcal{T}(P),\mathbf{R}(P) ; M \big)$,
while $\mathcal{T}^{\star}(\mathbf{R}; M, P)$ and $\mathcal{C}(\mathcal{T} ; M,P)$ describe the optimal trade-offs.
%%%

%%%
We denote a GDoF tuple by $\mathbf{r} \triangleq (r_{1},\ldots, r_{K})$, where $r_{k}$ is the GDoF of user $k$,
while the GNDT is denoted by $\tau$.
%%%
For given $M$, a GNDT-GDoF trade-off $(\tau,\mathbf{r} ; M)$  is achievable if there exists 
a sequence of achievable delay-rate tuples $\big( \mathcal{T}(P),\mathbf{R}(P) ; M \big)$, for all $P$, such that
%%%
\begin{align}
r_{k} & = \lim_{P \rightarrow \infty} \frac{R_{k}(P)}{\log(1+P)}, \ \forall k \in [K] \\
\tau & =  \lim_{P \rightarrow \infty} \mathcal{T}(P) \log(1+P).
\end{align}
%%%%
For any $(\mathbf{r} ; M)$, the optimal GNDT is defined as
%%%
\begin{equation}
\tau^{\star}(\mathbf{r}; M) \triangleq \inf \big\{ \tau :  (\tau,\mathbf{r} ; M) \ \text{is achievable}\big\}.
\end{equation}
%%%
On other hand, for any pair $(\tau ; M)$, the optimal GDoF region is defined as: 
%%%
\begin{equation}
\mathcal{D}(\tau ; M ) = \mathrm{cl}\big\{ \mathbf{r} : (\tau,\mathbf{r} ; M) \ 
\text{is achievable}  \big\}. 
\end{equation}
%%%
\begin{remark} 
%%%
\label{rem:time_slots}
\textbf{(Time slots).} 
%%%
We measure the GNDT in \emph{time slots}, where $1$ time slot corresponds to the overall delay of delivering a single file to the strongest user 
(i.e. user $K$) in the absence of caches, interference and instantaneous messages, as $P$ approaches infinity.
%%%
In this isolated single-user scenario, the delivery time (per bit) is given by $\mathcal{T}_{0}(P) = 1 / \log(1 + P)$, and the overall delay is given by $B \mathcal{T}_{0}(P) = B / \log(1 + P)$  in channel uses.
%%%
Now suppose that in a general setting with arbitrary number of users and cache sizes, we deliver a file of size $B$ to each user with delay $\mathcal{T}(P)$.
%%%
The corresponding GNDT is given by
%%%
\begin{equation}
\label{eq:GNDT_remark}
\tau \triangleq  \lim_{P \rightarrow \infty}  \mathcal{T}(P) \log(1+P) = \lim_{P \rightarrow \infty} \frac{B\mathcal{T}(P)}{B\mathcal{T}_{0}(P)}.
\end{equation}
%%%
The ratio in \eqref{eq:GNDT_remark} makes the definition of the GNDT and its time slot unit all the more clear.
%%%
\end{remark}
%%%
\begin{remark}
The delay, capacity, GNDT and GDoF characterizations we obtain in this work all depend on the normalized memory size $\mu$ instead of the actual memory size $M$.
%%%
This is reflected in the arguments  of the performance measures in the following sections, where $\mu$ replaces $M$.
%%%
Moreover, we highlight the dependency on the channel strength levels, e.g. $\tau^{\star}(\mathbf{r}; \mu, \bm{\alpha}) $
and  $\mathcal{D}(  \tau ; \mu, \bm{\alpha}) $,
and the nominal power parameter, e.g.   $\mathcal{C}( \mathcal{T}; \mu,  \bm{\alpha},P) $ and 
$\mathcal{T}^{\star}(\mathbf{R} ; \mu,  \bm{\alpha},P)$.
\end{remark}
%%%
\section{Main Result and Insights}
\label{sec:main_results}
%%%
We start this section by defining an  upper bound for the GNDT given any GDoF tuple $\mathbf{r}$.
%%%
\begin{definition}
\label{def:GNDT_ub}
%%%
For any $\mu$,  $\bm{\alpha}$ and $\mathbf{r}$, where the GDoF tuple $\mathbf{r}$ is feasible with components satisfying $\sum_{i \in [k]}  r_{i} \leq \alpha_{k}$ for all $k \in [K]$, we define\footnote{In \eqref{eq:tau_ub}, and throughout this work, we use the convention $\binom{n}{k} = 0$, for all $n < k$.} 
%%%
\begin{equation}
\label{eq:tau_ub}
\tau^{\mathrm{ub}}(\mathbf{r} ; \mu,  \bm{\alpha}) \triangleq   \max_{k \in [K]} \left\{ 
\frac{ 1}{ \big( \alpha_{k} - \sum_{i \in [k]} r_{i} \big) }  \cdot  \mathrm{conv} \left(
\frac{ \binom{K}{K \mu +1} - \binom{K - \min\{k,N\}}{K \mu + 1}  }{  \binom{K}{K \mu}  }   \right) \right\}
\end{equation}
%%%
where $ \mathrm{conv}  \big( f(K \mu) \big)$ denotes the lower convex envelope of the 
points $\left\{ \big( K \mu , f( K \mu) \big) : K \mu \in [0:K] \right\}$.
%%%
\end{definition}
%%%
Equipped with Definition \ref{def:GNDT_ub}, we are now ready to present the main theorem of this work.
%%%
\begin{theorem}
\label{theorem:optimal_trade_off}
%%%
The GNDT-GDoF trade-off described by $\tau^{\mathrm{ub}}(\mathbf{r} ; \mu,  \bm{\alpha}) $  in \eqref{eq:tau_ub} is achievable. 
%%%
Moreover, $\tau^{\mathrm{ub}}(\mathbf{r} ; \mu,  \bm{\alpha}) $ is within a multiplicative factor of $2.01$ from the optimal trade-off,
that is\footnote{The multiplicative factor is in fact $2.00884$, which is consistent with the result of Yu et al. \cite{Yu2019}.}
%%%
\begin{equation}
\label{eq:order_optimality}
\frac{1}{2.01} \cdot \tau^{\mathrm{ub}}(\mathbf{r} ; \mu,  \bm{\alpha})  \leq \tau^{\star}(\mathbf{r} ; \mu,  \bm{\alpha})  \leq \tau^{\mathrm{ub}}(\mathbf{r} ; \mu,  \bm{\alpha}) .
\end{equation}
%%%%
\end{theorem}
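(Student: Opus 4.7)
The proof splits naturally into an achievability part and a converse part, with the order-optimality factor $2.01$ arising entirely from the converse side (the achievability is exact for the formula in Definition~\ref{def:GNDT_ub}). For achievability, the plan is to instantiate a separation-based scheme: use a YMA-type caching and multicast-message construction at the bit level, then pass the resulting collection of unicast messages $W_1, \ldots, W_K$ together with a family of nested multicast messages to a physical-layer code for the degraded GBC with unicast and multiple nested multicast message sets (the stand-alone characterization of which is promised in Section~\ref{sec:degraded_GBC_unicast_multicast}). For the converse, the plan is to adapt the Yu--Maddah-Ali--Avestimehr combinatorial bound for the shared link problem to the wireless setting, by first ``peeling off'' the physical-channel contribution via classical degraded-BC inequalities and then applying the Yu et al. argument to the residual multicast load.

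\textbf{Achievability.} I would first handle the case where $K\mu$ is an integer. Apply YMA placement with parameter $t = K\mu$. Given a worst-case demand vector, which by Remark~\ref{remark:worst_case} contains $N^{\ast} \triangleq \min\{K,N\}$ distinct requests, the YMA delivery procedure outputs coded multicast symbols whose normalized total length equals $\bigl(\binom{K}{t+1} - \binom{K-N^\ast}{t+1}\bigr)/\binom{K}{t}$. The key structural observation is that each such coded symbol is intended for some subset $\mathcal{S} \subseteq [K]$ of users, and the aggregate multicast load destined to any subset containing user $k$ as its weakest member can be matched, in the GDoF sense, to a multicast message that user $k$ must decode. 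This lets me reduce the caching problem to a physical-layer problem with unicasts $W_1,\ldots,W_K$ at GDoF $r_1,\ldots,r_K$ plus one multicast message per ``layer'' corresponding to each $k \in [K]$. By the GDoF region of the degraded GBC with nested multicast and unicast (the natural extension of \eqref{eq:GDoF_region_phy_2_user}, obtained via superposition coding with successive decoding), the multicast message destined to users $\{k, k+1, \ldots, K\}$ can be delivered at GDoF up to $\bigl(\alpha_k - \sum_{i \in [k]} r_i\bigr)^+$. The total delivery time is the maximum over $k$ of (multicast load at layer $k$) divided by (available multicast GDoF at layer $k$), which is exactly the bracket in \eqref{eq:tau_ub}. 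For non-integer $K\mu$, rather than time-sharing between integer regimes (which would be strictly suboptimal as noted in the paper), I would use memory-sharing only at the caching level, yielding two nested families of multicast messages that are simultaneously transmitted by superposition coding in the degraded GBC; the convex envelope operator $\mathrm{conv}(\cdot)$ then encodes the optimal mixing ratio.

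\textbf{Converse.} Fix an achievable $(\mathcal{T}(P), \mathbf{R}(P); \mu)$-code and focus on a particular $k \in [K]$. The plan is to derive, for each $k$, a single scalar bound of the form
\begin{equation*}
\Bigl( \alpha_k - \sum_{i \in [k]} r_i \Bigr) \tau \;\geq\; \mathrm{conv}\bigl( L_k(K\mu) \bigr) / 2.00884,
\end{equation*}
where $L_k(t) \triangleq \bigl(\binom{K}{t+1} - \binom{K-\min\{k,N\}}{t+1}\bigr)/\binom{K}{t}$, by a genie-aided argument. Concretely, I would give user $k$ access to the signals $Y_{k+1}^T, \ldots, Y_K^T$ and to the caches and messages $(U_i, W_i)_{i > k}$ (standard for degraded BC converses); this reduces the situation to one in which the weakest receiver in the enlarged system is user $k$, whose output carries at most $\alpha_k \log(1+P) + o(\log P)$ bits per channel use. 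Subtracting the unicast contributions $R_1(P) + \cdots + R_k(P)$ leaves a residual rate of $\bigl(\alpha_k - \sum_{i\le k} r_i\bigr)^+ \log(1+P) + o(\log P)$ available for conveying the content demanded by users $[k]$. The remaining step is to lower-bound the total bits of content that must be conveyed to users $[k]$ under worst-case demands having $\min\{k,N\}$ distinct files, using the Yu--Maddah-Ali--Avestimehr combinatorial argument: consider all permutations of demands restricted to the first $k$ users, form the associated sliding-window entropy inequalities, and apply Han's inequality as in \cite{Yu2019}. This produces exactly the envelope $\mathrm{conv}(L_k(K\mu))$ with the universal multiplicative slack of $2.00884$.

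\textbf{Main obstacle.} The delicate step is the clean separation of physical-layer and caching contributions in the converse: the Yu et al. argument is combinatorial and works at the bit level, while the degraded-GBC bounds are information-theoretic and involve the auxiliary variables implicit in superposition coding. I expect the cleanest path is to prove a \emph{single-letter} outer bound on the number of bits jointly communicated to users $[k]$ in the degraded GBC with genie-augmentation, isolate this bound from the specific code structure, and only then feed the resulting bit-budget into the Yu-style combinatorial inequality. Executing this in a way that preserves the $2.00884$ factor uniformly in $\bm{\alpha}$, $\mu$, and $(\mathbf{r},\tau)$ — and in particular handles the $(\cdot)^+$ in a way that does not inflate the multiplicative gap — is the crux of the argument, and also the place where the GDoF limit is essential (since it absorbs the $o(\log P)$ overheads that would otherwise spoil the constant-factor guarantee).
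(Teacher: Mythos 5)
Your overall architecture---separation-based achievability plus a Yu--Maddah-Ali--Avestimehr-style converse in which the physical channel is ``peeled off'' via degraded-BC arguments---matches the paper, and the appeal to the degraded-GBC-with-unicast-and-nested-multicast GDoF region on the achievability side is exactly the paper's route (Sections~\ref{sec:degraded_GBC_unicast_multicast} and~\ref{sec:achievability}, with the missing step being the Fourier--Motzkin elimination of the power-allocation exponents in Appendix~\ref{appendix:GDoF_region}). Your treatment of non-integer $K\mu$ via memory-sharing at the bit level but joint superposition over the GBC is also the paper's idea (Appendix~\ref{appendix:non_integer_K_mu}). One imprecision worth flagging on the achievability side: you assert that the multicast ``layer $k$'' can be delivered at GDoF $(\alpha_k-\sum_{i\in[k]}r_i)^+$, but the constraints in Theorem~\ref{theorem:GDoF_phy} are \emph{cumulative}---$\alpha_k$ bounds the sum of all unicast GDoF for $[k]$ \emph{and} all multicast GDoF with weakest member in $[k]$. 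Fortunately the final formula you quote does read off the cumulative load, so the end result is right even if the per-layer phrasing is misleading.

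The genuine gap is in the converse genie. You propose to hand user $k$ the outputs $Y_{k+1}^T,\ldots,Y_K^T$ and the data $(U_i,W_i)_{i>k}$ of \emph{stronger} users, and then claim the resulting observation carries at most $\alpha_k\log(1+P)$ bits per channel use. This is backwards: augmenting with $Y_{k+1}^T,\ldots,Y_K^T$ makes user $k$ at least as informed as user $K$, so its budget becomes $\alpha_K\log(1+P)$, not $\alpha_k\log(1+P)$---which is useless for a bound that must scale with $\alpha_k$. Moreover, there is no way for your side information $(U_i,W_i)_{i>k}$ to interface with the Yu-et-al.\ symmetrization: the combinatorial bound of \cite{Yu2019} lower-bounds $\sum_k H(F_{q(k)}\mid U_{p(k)},S_{p(k)})$ by averaging over permutations $(p,q)$, and it requires each user's conditioning to contain the demanded files and caches of the users that \emph{precede} it in the permutation, not the stronger users. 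The paper's converse (Section~\ref{sec:converse}) does something quite different. For a subset $[s]$, it never introduces outputs of users outside $[s]$; instead, Lemma~\ref{lemma:degraded_with_caches} upgrades $Y_{p(k)}^T$ to $Y_s^T$ for all $p(k)\in[s]$ via the Markov chain $(W_k,F_{d_k})\rightarrow(X^T,U_k,S_k)\rightarrow(Y_s^T,U_k,S_k)\rightarrow(Y_k^T,U_k,S_k)$, which gives the $\alpha_s$ budget. The side information is $S_{p(k)}=(W_{p(i)},F_{q(i)},U_{p(i)}:i\in[k-1])$, chosen so that $S_{p(k+1)}$ absorbs $(W_{p(k)},F_{q(k)},U_{p(k)},S_{p(k)})$; this nesting makes the telescoping
\begin{equation*}
\sum_{k=1}^{s}\Bigl[h\bigl(Y_s^T\mid U_{p(k)},S_{p(k)}\bigr)-h\bigl(Y_s^T\mid U_{p(k)},S_{p(k)},W_{p(k)},F_{q(k)}\bigr)\Bigr]\le h\bigl(Y_s^T\mid U_{p(1)}\bigr)-h\bigl(Z_s^T\bigr)
\end{equation*}
work and simultaneously produces the $\sum_{k}H(F_{q(k)}\mid U_{p(k)},S_{p(k)})$ terms that, after averaging over $(p,q)$, feed into Lemmas~1 and 3 of \cite{Yu2019} to yield the $\mathrm{conv}(\cdot)/2.01$ lower bound. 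Your plan would need to be reorganized around this forward-in-permutation, within-$[s]$ genie for both the telescoping and the Yu-et-al.\ step to go through; you would also need the separate treatment of $s\in[\min\{K,N\}+1:K]$ used in the paper when $N<K$.
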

%%%
The achievability of Theorem \ref{theorem:optimal_trade_off} is presented in Sections \ref{sec:degraded_GBC_unicast_multicast} and \ref{sec:achievability}, with some details relegated to Appendix \ref{appendix:GDoF_region}. 
%%%
For ease of exposition, the focus of these sections is on integer values of $K \mu$, while 
%%%
the extension to non-integer $K \mu$ is relegated to Appendix \ref{appendix:non_integer_K_mu}.
%%%
On the other hand, the converse of Theorem \ref{theorem:optimal_trade_off}  is presented in Section \ref{sec:converse}.
%%%
\begin{comment}
The error probability definition in \eqref{eq:error_prob} leads to a worst-case definition of the delay and GNDT with respect to demand tuples $\mathbf{d}$.
%%%
This corresponds to demand tuples that comprise $\min\{ K, N \}$ distinct demands.
%%%
Moreover, in scenarios where  $N < K $, such distinct demands must be made by the first (i.e. weakest) $N$ users.
%%%
This is further clarified through the proofs of out main result further on in the paper.
\end{comment}
%%%
Next, we draw some insights from the main result. We start by focusing on integer $K \mu$,
and then discuss the case with non-integer $K \mu$ further on.
%%%
\subsection{Separation principle}
\label{subsec:separation}
The achievability of $\tau^{\mathrm{ub}}(\mathbf{r} ; \mu,  \bm{\alpha}) $ employs a separation-based strategy, which isolates the
content caching and delivery problem from the physical-layer transmission problem  \cite{Hachem2018}.
%%%
In particular, caching, generating coded multicast messages (XORs), and recovering demanded files from received multicast messages and local cache contents are all carried out at the bit level in the noiseless shared link manner \cite{Maddah-Ali2014,Yu2018,Yu2019}, 
and are oblivious to the transmission strategy over the physical channel.
%%%
On the other hand, the physical channel sees $\binom{K}{K\mu + 1}$ multicast messages (coded content) and $K$ unicast messages (non-content),
and communicates them in a joint multicast and unicast fashion. 
%%%

The physical-layer scheme employs power control with superposition coding and successive decoding.
%%%
Hence different GNDT-GDoF trade-offs, described by the relationship in \eqref{eq:tau_ub},  are achieved by tuning the
underlying power allocation and GDoF assignment problems. A detailed exposition of the physical-layer scheme is given in Section \ref{sec:degraded_GBC_unicast_multicast} (see also Appendix \ref{appendix:GDoF_region}).
%%%
\subsection{GNDT in the absence of non-content messages}
%%%
As a special case of Theorem \ref{theorem:optimal_trade_off},
we recover the achievability result in \cite{Lampiris2019}, where it was shown that 
for $N \geq K$ and integer $K \mu$, and in the absence of non-content messages, one can achieve
%%%
\begin{equation}
\label{eq:GNDT_no_unicast}
\tau^{\mathrm{ub}}(\mathbf{0} ; \mu,  \bm{\alpha})  = \max_{k \in [K]} \left\{ 
\frac{ 1}{ \alpha_{k}  }  \cdot
\frac{ \binom{K}{K \mu +1} - \binom{K - k}{K \mu + 1}  }{  \binom{K}{K \mu}  }  \right\}.
\end{equation}
%%%
The order-optimality of $\tau^{\mathrm{ub}}(\mathbf{0} ; \mu,  \bm{\alpha}) $ up to a multiplicative factor of $4.02$ is also proved in \cite{Lampiris2019},
which we tighten in Theorem \ref{theorem:optimal_trade_off} by a factor of $2$.
%%%
Moreover, in addition to strengthening the order-optimality result, our new achievability proof (given in Sections \ref{sec:degraded_GBC_unicast_multicast} and \ref{sec:achievability}) sheds new light on the achievable GNDT in \eqref{eq:GNDT_no_unicast}, and provides an operational interpretation in terms of the 
multiple multicast GDoF region of the underlying degraded GBC.
%%%

To illustrate, consider a setting with $K = N= 3$ and $M = 1$, and assume that each user requests a distinct file.
%%%
Employing a separation-based strategy, a standard coded caching scheme delivers $3$ coded multicast messages: $W_{12}$, $W_{13}$ and $W_{23}$, each
designated to a pair of users specified by the message index; and each of size 
$1 / 3$ in (normalized) file units.
%%%
On the other hand, the physical channel communicates the coded messages in a multiple multicast fashion and, as shown in Theorem \ref{theorem:GDoF_phy} 
in Section \ref{sec:degraded_GBC_unicast_multicast}, operates at any non-negative GDoF 
tuple $(r_{12},r_{13},r_{23})$ that satisfies\footnote{Note that the multicast GDoF tuple 
here should not be confused with the non-content GDoF tuple $\mathbf{r}$. This will be further clarified in Section \ref{sec:degraded_GBC_unicast_multicast}.}
%%%
\begin{equation}
\label{eq:multicast_GDoF_region_3_user}
\begin{aligned}
r_{12} + r_{13} & \leq \alpha_{1} \\
r_{12} + r_{13} + r_{23} & \leq \alpha_{2}.
\end{aligned}
\end{equation}
%%%
The GDoF region in \eqref{eq:multicast_GDoF_region_3_user} admits an intuitive interpretation. User $1$ recovers both $W_{12}$ and $W_{13}$, and hence the sum-GDoF of these messages is bounded by the channel strength $\alpha_{1}$.
%%%
Due to the degradedness of the physical channel, user $2$ can decode whatever user $1$ decodes, and must additionally recover $W_{23}$. Therefore, the total GDoF 
cannot exceed $\alpha_{2}$.
%%%
It is clear that user $3$, i.e. the strongest user, can recover all messages as 
$\alpha_{2} \leq \alpha_{3}$.  
%%%
Since all $3$ coded messages are of equal size, it is most efficient to operate at the symmetric multicast GDoF  $r_{\mathrm{sym}} = \min \left\{ \frac{\alpha_{1}}{2} , \frac{\alpha_{2}}{3} \right\}$, which is directly computed from \eqref{eq:multicast_GDoF_region_3_user}---see Fig. \ref{fig:signal_levels}. 
%%%
The achievable GNDT is hence given by
%%%
\begin{equation}
\tau = \frac{1}{3} \cdot \frac{1}{r_{\mathrm{sym}}}
\end{equation}
%%%
which exactly coincides with \eqref{eq:GNDT_no_unicast} for $K = 3$ and $\mu = 1/3$.
%%%%%%
\begin{figure}[t]
\centering
\includegraphics[width = 0.8\textwidth]{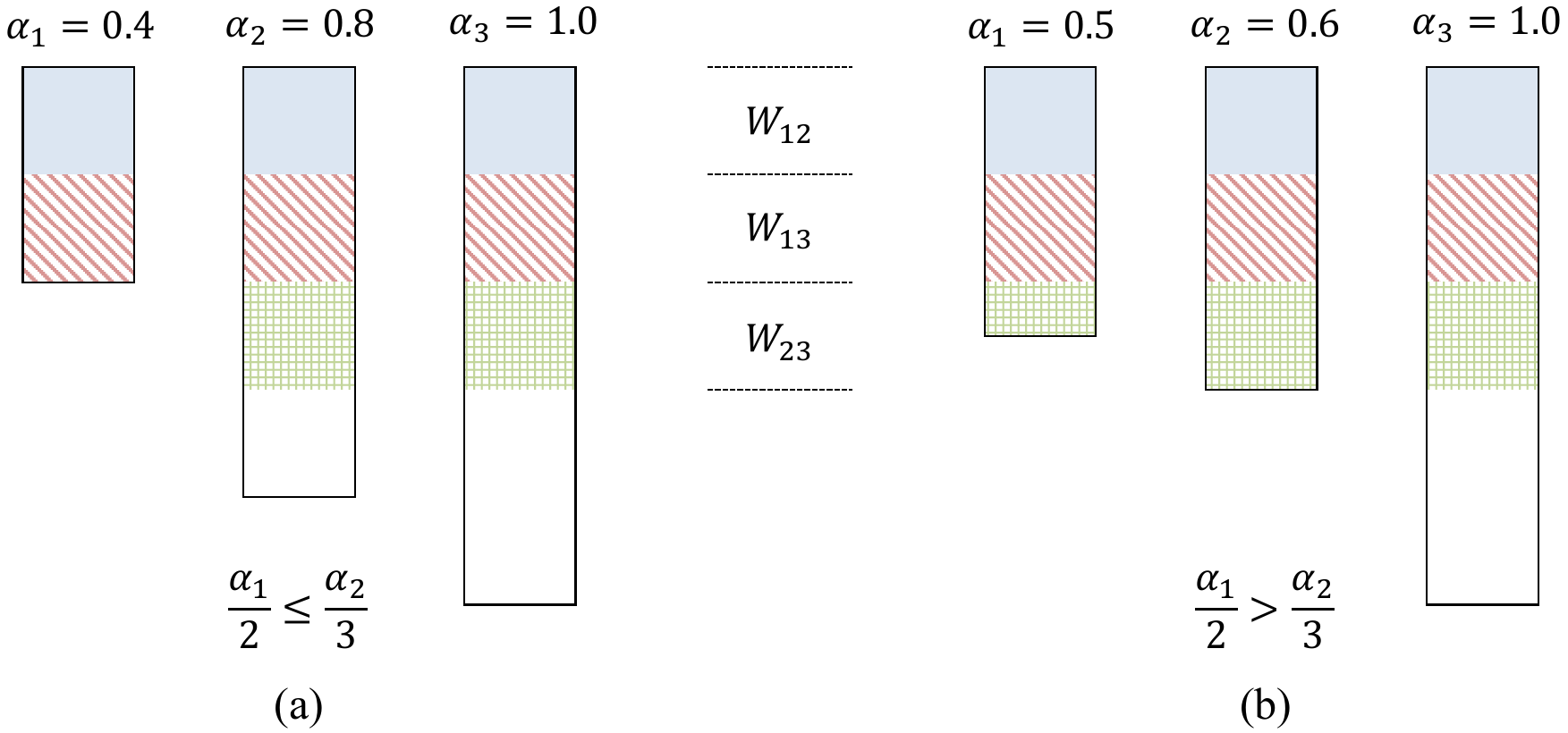}
\caption{\small Received signal power levels in a $3$-user degraded GBC with multiple multicast messages, where $W_{ij}$ is intended to users $i$ and $j$ (group size of $2$), and a symmetric GDoF of $r_{\mathrm{sym}} = 0.2$ is achieved in both (a) and (b). Top levels represent signals transmitted with higher powers, received by all users above their respective noise levels 
(bottom end of each bar). Bottom levels represent signals transmitted with lower powers, heard by sufficiently strong users and corrupted by noise 
(hence clipped) at weaker users. Multicast messages (coloured levels) can carry coded content (e.g. $\mu = 1 / 3$).
Uncoloured signal levels (in white) are unoccupied, representing \emph{topological holes} for communicating non-content messages (see Corollary \ref{corollary:GDoF_region_min_GNDR}).}
%%%%
\label{fig:signal_levels}
\end{figure}
%%%%%%
As it turns out, the same argument extends to general settings, 
where the achievable GNDT in \eqref{eq:GNDT_no_unicast} is decomposed as 
%%%
\begin{equation}
\label{eq:GNDT_no_unicast_sep}
\tau^{\mathrm{ub}}(\mathbf{0} ; \mu,  \bm{\alpha}) = \underbrace{ \frac{ 1  }{  \binom{K}{K \mu}  } }_{\text{norm. size}}  \cdot \underbrace{  \max_{k \in [K - K \mu]}  \left\{ 
\frac{ \binom{K}{K \mu +1} - \binom{K - k}{K \mu + 1}  }{ \alpha_{k}  }   \right\}}_{1 / r_{\mathrm{sym}}}.
\end{equation}
%%%
The first term on the right-hand-side of \eqref{eq:GNDT_no_unicast_sep} is the normalized size of each coded multicast message; while the second term is the reciprocal of the symmetric multiple multicast GDoF, derived from the multiple multicast GDoF region of the underlying GBC 
(see Corollary \ref{corollary:symm_multicast_region}, Section \ref{subsec:symm_multicast_subset}).
%%%
\subsection{Achievable GDoF under minimum GNDT}
%%%
Let us now plug non-content messages back in, while maintaining the assumption that $N \geq K$ for ease of exposition.  Theorem \ref{theorem:optimal_trade_off} suggests that in scenarios with asymmetric channel strengths, 
the order-optimal GNDT in \eqref{eq:GNDT_no_unicast}, achieved by eliminating non-content messages, can be maintained while simultaneously achieving non-zero GDoF for (some) non-content messages.
%%%
To see this, let us define user $k^{\star}$ (\emph{bottleneck user} in \cite{Lampiris2019}) such that
%%%
\begin{equation}
k^{\star} \triangleq  \arg \max_{k \in [K]} \left\{
\frac{ \binom{K}{K \mu +1} - \binom{K - k}{K \mu + 1}  }{ \alpha_{k}  }  \right\}
\end{equation}
%%%
For example, in the illustrations shown in Fig. \ref{fig:signal_levels}, with $K =3$ and $K \mu + 1  = 2$, we have  $k^{\star}  = 1$  in (a), where $\frac{\alpha_{1}}{2} \leq \frac{\alpha_{2}}{3}$; and $k^{\star}  = 2$ in (b), where $\frac{\alpha_{1}}{2} > \frac{\alpha_{2}}{3}$.
%%%
In terms of the multiple multicast GDoF region of the underlying GBC, $k^{\star}$ is the 
(smallest) index such that
the inequality that delimits the sum-GDoF of messages decoded by user
$k^{\star}$ holds with equality (see Theorem \ref{theorem:GDoF_phy}).

From \eqref{eq:tau_ub}, it follows that 
achieving a GNDT of  $\tau^{\mathrm{ub}}(\mathbf{0} ; \mu,  \bm{\alpha}) $ through the proposed strategy  requires setting 
$r_{k} = 0$, for all $k \in [k^{\star}]$.
%%%
That is, we cannot send additional information to the bottleneck user $k^{\star}$, or weaker users whose messages are also decodable by user $k^{\star}$, without increasing the achievable GNDT.
%%%
However, users in $[k^{\star} + 1: K] $ can achieve non-zero non-content GDoF without affecting the GNDT in \eqref{eq:GNDT_no_unicast}, by communicating through the \emph{topological holes} arising from the asymmetry in channel strength levels, specifically when 
 $\alpha_{k^{\star}} < \alpha_{k^{\star}+1}$.
%%%
These achievable non-content GDoF tuples are described as follows.
%%%
\begin{corollary}\textbf{(Topological Holes).}
\label{corollary:GDoF_region_min_GNDR}
%%%
A minimum GNDT of  $\tau^{\mathrm{ub}}(\mathbf{r} ; \mu,  \bm{\alpha}) = \tau^{\mathrm{ub}}(\mathbf{0} ; \mu,  \bm{\alpha}) $  and a non-content GDoF tuple $\mathbf{r}$  are simultaneously achievable given that the GDoF tuple $\mathbf{r} \in \mathbb{R}_{+}^{K}$  satisfies 
%%%
\begin{equation}
\label{eq:GDoF_inequalities_min_tau}
\begin{aligned}
r_{k} & = 0, \ \forall k \in [k^{\star}] \\
r_{k^{\star} + 1} + \cdots + r_{k}  & \leq  \alpha_{k^{\star} + 1} - \alpha_{k^{\star}} \cdot 
\left(
\frac{ \binom{K}{K \mu +1} - \binom{K - k}{K \mu + 1}  }{ \binom{K}{K \mu +1} - \binom{K - k^{\star}}{K \mu + 1}  } \right) , \ \forall k \in [k^{\star}+1 : K].
\end{aligned} 
\end{equation}
%%%
\end{corollary}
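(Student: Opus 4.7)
The plan is to derive Corollary \ref{corollary:GDoF_region_min_GNDR} directly from Theorem \ref{theorem:optimal_trade_off} by imposing the equality $\tau^{\mathrm{ub}}(\mathbf{r};\mu,\bm{\alpha}) = \tau^{\mathrm{ub}}(\mathbf{0};\mu,\bm{\alpha})$ and reading off the resulting constraints on $\mathbf{r}$. Since Theorem \ref{theorem:optimal_trade_off} gives $\tau^{\mathrm{ub}}(\mathbf{r};\mu,\bm{\alpha})$ as a maximum over $k\in[K]$ of per-user terms, and since shifting any $r_i$ from zero can only enlarge (never shrink) these terms, the required inequality reduces to a finite list of per-$k$ conditions that I would unpack one by one.

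Concretely, I would specialize to integer $K\mu$ and $N\geq K$ (so the $\mathrm{conv}(\cdot)$ and $\min\{k,N\}$ operations are trivial), introduce the shorthand $C_k \triangleq \binom{K}{K\mu+1} - \binom{K-k}{K\mu+1}$, and rewrite the bound in \eqref{eq:tau_ub} as $\tau^{\mathrm{ub}}(\mathbf{r};\mu,\bm{\alpha}) = \max_{k\in[K]} \{ C_k / [\binom{K}{K\mu}(\alpha_k - \sum_{i\in[k]} r_i)^+] \}$. By definition of $k^\star$, the right-hand side of \eqref{eq:GNDT_no_unicast} equals $C_{k^\star}/[\binom{K}{K\mu}\,\alpha_{k^\star}]$. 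Requiring the per-$k$ terms to be no larger than this yields, for every $k\in[K]$, the affine condition $\sum_{i\in[k]} r_i \leq \alpha_k - \alpha_{k^\star}\cdot C_k/C_{k^\star}$. These are the only constraints to verify.

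I would then separate the argument into three regimes. For $k = k^\star$, the right-hand side collapses to $0$, so non-negativity of $\mathbf{r}$ forces $r_i = 0$ for all $i \in [k^\star]$, giving the first line of \eqref{eq:GDoF_inequalities_min_tau}. For $k < k^\star$, the optimality of $k^\star$ (i.e.\ $C_k/\alpha_k \leq C_{k^\star}/\alpha_{k^\star}$) ensures the right-hand side is non-negative, so these constraints are already satisfied by $r_i = 0$ on $[k^\star]$. For $k > k^\star$, the zeroed entries drop out of the partial sum, leaving exactly $\sum_{i=k^\star+1}^{k} r_i \leq \alpha_k - \alpha_{k^\star}\cdot C_k/C_{k^\star}$, which is the form appearing in the corollary (the $\alpha_{k^\star+1}$ in the statement should be read together with the monotonicity $\alpha_{k^\star+1}\leq \alpha_k$, yielding a sufficient condition that is tight at $k=k^\star+1$ and conservative for larger $k$). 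Non-negativity of the bound for $k>k^\star$ again follows from the maximality of $k^\star$.

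The main obstacle, such as it is, lies in the bookkeeping: verifying carefully that the $K$ inequalities produced by the $\max$ collapse into the compact system \eqref{eq:GDoF_inequalities_min_tau} via the defining property of $k^\star$, and that the $(\cdot)^+$ operation can be safely removed in the relevant range. There is no new achievability to establish, because the GNDT-GDoF tuples satisfying \eqref{eq:GDoF_inequalities_min_tau} are, by construction, a subset of those covered by Theorem \ref{theorem:optimal_trade_off}; the corollary therefore follows once the algebraic reduction above is checked.
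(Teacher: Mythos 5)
Your derivation is correct and follows essentially the same route as the paper, which obtains this corollary by specializing the inner-bound GDoF region of Corollary~\ref{corollary:GDoF_region} at $\tau = \tau^{\mathrm{ub}}(\mathbf{0};\mu,\bm{\alpha})$; both reductions yield, writing $C_k \triangleq \binom{K}{K\mu+1}-\binom{K-k}{K\mu+1}$, the constraint $\sum_{i\in[k]}r_i \leq \alpha_k - \alpha_{k^\star}\,C_k/C_{k^\star}$ for every $k\in[K]$, from which the claim follows since the right-hand side vanishes at $k=k^\star$ and is non-negative for all $k$ by maximality of $k^\star$. You are also right to notice the discrepancy between the $\alpha_{k^\star+1}$ in the stated inequality and the $\alpha_k$ that the reduction actually produces: since $\alpha_{k^\star+1}\leq\alpha_k$ the stated condition remains sufficient, but for $k>k^\star+1$ it is strictly more restrictive than necessary; given that the remark after Corollary~\ref{corollary:GDoF_region} asserts the two regions coincide under the substitution $\tau=\tau^{\mathrm{ub}}(\mathbf{0};\mu,\bm{\alpha})$, the $\alpha_{k^\star+1}$ appears to be a typo for $\alpha_k$ rather than an intentional weakening, and your proof in fact establishes the sharper version.
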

%%%
Examples that illustrate Corollary \ref{corollary:GDoF_region_min_GNDR} using signal power levels, measured in terms of the exponent of $P$ (see, e.g., \cite{Bresler2010,Jafar2010}), are shown in Fig. \ref{fig:signal_levels}.
%%%
As argued in Section \ref{subsec:intro_setting}, the current treatment of content traffic and non-content traffic as two independent entities leads 
to scheduling the two types of traffic on orthogonal wireless resource blocks, which is suboptimal in general.
%%%
This observation is made concrete in the following remark by leveraging Corollary \ref{corollary:GDoF_region_min_GNDR}. 
%%%
\begin{remark}
Suppose that we wish to deliver content at the minimum achievable GNDT given by 
$ \tau = \tau^{\mathrm{ub}}(\mathbf{0} ; \mu,  \bm{\alpha})$.
%%%
From Corollary \ref{corollary:GDoF_region_min_GNDR}, we know that we can simultaneously communicate a non-content message to, e.g., 
user $K$ with a GDoF of $r_{K}$, which satisfies the corresponding inequality in \eqref{eq:GDoF_inequalities_min_tau}; hence delivering  
additional non-content information of $ \tau \cdot r_{K}$ in normalized file units.\footnote{Recall from Remark \ref{rem:time_slots} that $\tau$, in times slots, is measured per file delivered over a channel with a GDoF of $1$. Therefore,  $ \tau \cdot r_{K}$ corresponds to delivered information in file units.}
%%%  
An alternative approach is to deliver content traffic and non-content traffic over orthogonal resource blocks using, e.g., time-sharing (see Section \ref{subsec:intro_setting}).
%%%
This incurs an additional delay of at least $\tau \cdot r_{K} / \alpha_{K}$ time slots, required to deliver the same amount of non-content information separately.
%%%
For the examples shown in  Fig. \ref{fig:signal_levels}, this corresponds to an increase of $40 \%$ in communication delay.
%%%
\end{remark}
%%%
\subsection{Non-integer $K \mu$}
\label{subsec:non_integer_K_mu}
%%%
Using the separation-based strategy described above,
a GNDT-GDoF trade-off of $ \tau^{\mathrm{ub}}(\mathbf{r} ; \mu,  \bm{\alpha}) $  is achieved for all $\mu$ such that  $K \mu$ is an integer. In this case, the $\mathrm{conv}(\cdot)$ operator in \eqref{eq:tau_ub}  is dropped and the corresponding achievable GNDT can be expressed by
%%%
\begin{equation}
\label{eq:tau_ub_int}
\tau^{\mathrm{ub}}_{K \mu \in \mathbb{Z}} (\mathbf{r} ; \mu,  \bm{\alpha})  \triangleq  \max_{k \in [K]} \left\{ 
\frac{ 1}{ \big( \alpha_{k} - \sum_{i \in [k]} r_{i} \big) }  \cdot  
\frac{ \binom{K}{K \mu +1} - \binom{K - \min\{k,N \} }{K \mu + 1}  }{  \binom{K}{K \mu}  }  \right\}.
\end{equation}
%%%
For $\mu$ such that $K \mu$ takes non-integer values drawn from $(0,K)$, 
a  standard memory-sharing argument \cite{Maddah-Ali2014} achieves the lower convex envelope of 
the points in \eqref{eq:tau_ub_int}, defined as
%%%
\begin{equation}
\label{eq:tau_ub_conv}
\tau_{\mathrm{ms}}^{\mathrm{ub}} (\mathbf{r} ; \mu,  \bm{\alpha})  \triangleq \mathrm{conv} \left(
\max_{k \in [K]} \left\{ 
\frac{ 1}{ \big( \alpha_{k} - \sum_{i \in [k]} r_{i} \big) }  \cdot 
\frac{ \binom{K}{K \mu +1} - \binom{K -  \min\{k,N \} }{K \mu + 1}  }{  \binom{K}{K \mu}  }  \right\}
 \right).
\end{equation}
%%%
In this straightforward application of the memory-sharing principle, files, caches and transmissions are divided proportionally such that the system effectively operates as two systems: one with a multicasting gain of $\lfloor K \mu + 1 \rfloor$, and another with a multicasting gain of 
$\lceil K \mu + 1 \rceil$.
%%%
Nevertheless, it turns out that this strategy can be strictly improved upon, especially in asymmetric settings.
%%%

%%%
In the improved strategy, caching and preparing the sets of coded multicast messages are carried out as in the standard
memory-sharing scheme.
%%%
Nevertheless, instead of carrying out the physical-layer transmission \emph{sequentially} in two phases, 
the degraded GBC \emph{jointly} delivers two sets of coded multicast messages,
one with messages intended to $\lfloor K \mu + 1 \rfloor$ users each and another with messages intended  to $\lceil K \mu + 1 \rceil$ users each; as well as the non-content unicast message set.\footnote{The scheme in \cite{Amiri2018b} (implicitly) adopts a similar superposition strategy with two sets of multicast messages.}
%%%
This joint delivery strategy achieves the GNDT in  \eqref{eq:tau_ub},
which satisfies 
%%%
\begin{equation}
\label{eq:tau_tilde_tau_breve}
\tau^{\mathrm{ub}} (\mathbf{r} ; \mu,  \bm{\alpha})  \leq  \tau_{\mathrm{ms}}^{\mathrm{ub}} (\mathbf{r} ; \mu,  \bm{\alpha}) .
\end{equation}
%%%
The inequality in \eqref{eq:tau_tilde_tau_breve} is strict in asymmetric scenarios at some values of $\mu$, as seen through the example in Fig. \ref{fig:GNDT_memory}.  
%%%
Details and derivations related to this part  can be found in Appendix \ref{appendix:non_integer_K_mu}.
%%%
%%%%%%
\begin{figure}[h]
\centering
\includegraphics[width = 0.55\textwidth]{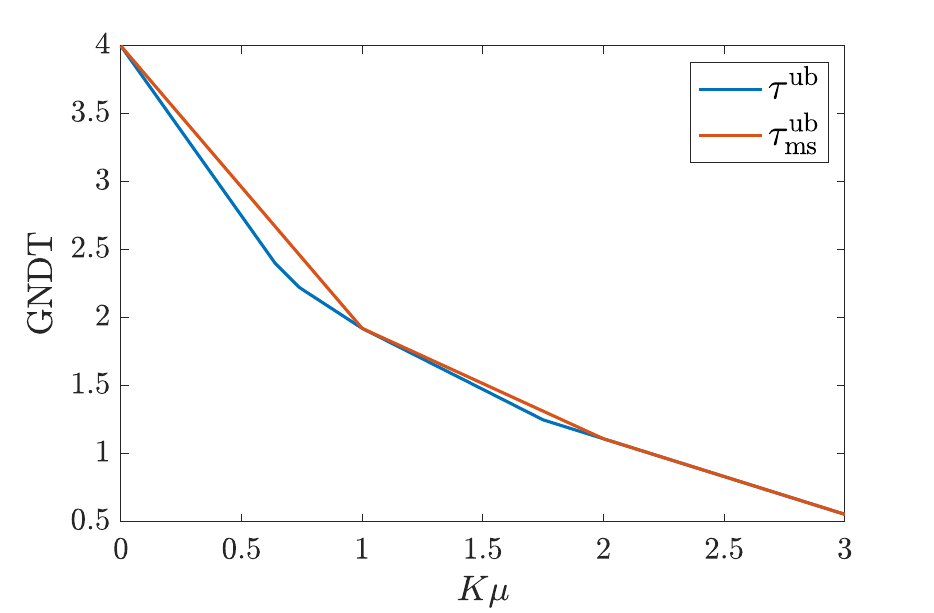}
\caption{\small GNDT-memory trade-off curves for $K = N = 4$,  $\bm{\alpha} = (0.45,0.65,0.85,1)$ and $\mathbf{r} = \mathbf{0}$.}
%%%%
\label{fig:GNDT_memory}
\end{figure}
%%%%%%
%%%
\begin{remark}
The inequality in \eqref{eq:tau_tilde_tau_breve} is another manifestation of the fact that in the degraded GBC, 
superposition coding is in general superior to time-sharing.
%%%
A naive application of the memory-sharing principle leads to delivering the 
two sets of coded multicast messages sequentially in a time shared fashion. 
%%%
While this incurs no loss in symmetric settings (as in, e.g., \cite{Maddah-Ali2014}), it can be strictly suboptimal in 
non-symmetric settings.
%%%
On the other hand, the scheme described in  Appendix \ref{appendix:non_integer_K_mu}
takes advantage of asymmetry in the degraded GBC 
through superposition coding. 
\end{remark}
%%%
%%%
\subsection{GDoF region and delay-rate trade-off}
%%%
Theorem \ref{theorem:optimal_trade_off} leads to a characterization of the GDoF region $\mathcal{D} ( \tau ; \mu,\bm{\alpha}) $, given as follows. 
%%%
\begin{corollary}
\label{corollary:GDoF_region}
For any  $\mu$,  $\bm{\alpha}$ and $\tau$, the GDoF region $\mathcal{D} ( \tau ; \mu,\bm{\alpha})$ satisfies:
\begin{equation}
\label{eq:GDoF_region_bounds}
\mathcal{D}_{\mathrm{in}} ( \tau ; \mu,\bm{\alpha}) \subseteq \mathcal{D} ( \tau ; \mu,\bm{\alpha}) \subseteq \mathcal{D}_{\mathrm{in}}(2.01 \cdot \tau ; \mu , \bm{\alpha})
\end{equation}
%%%
where $\mathcal{D}_{\mathrm{in}}( \tau ; \mu,\bm{\alpha}) $ is the set of 
all non-content GDoF tuples $\mathbf{r} \in \mathbb{R}_{+}^{K}$ satisfying
%%%
\begin{equation}
\sum_{i \in [k]} r_{i}   +  \frac{1}{\tau} \cdot \mathrm{conv} \left( \frac{ \binom{K}{K \mu +1} - \binom{K - \min\{k,N\} }{K \mu + 1}  }{ \binom{K}{K \mu} } \right)
\leq \alpha_{k}, \ \forall k\in [K].
\end{equation}
%%%
%%%
\end{corollary}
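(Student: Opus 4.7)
The plan is to derive Corollary \ref{corollary:GDoF_region} as a direct consequence of Theorem \ref{theorem:optimal_trade_off}, by translating between the delivery-time parameterization $\tau^{\star}(\mathbf{r};\mu,\bm{\alpha})$ and the region parameterization $\mathcal{D}(\tau;\mu,\bm{\alpha})$. Writing, for brevity,
\begin{equation*}
g_{k} \triangleq \mathrm{conv}\!\left(\frac{\binom{K}{K\mu+1}-\binom{K-\min\{k,N\}}{K\mu+1}}{\binom{K}{K\mu}}\right)
\end{equation*}
for the convex-envelope factor indexed by $k$, the central observation is the following equivalence valid for every $\mathbf{r}\in\mathbb{R}_{+}^{K}$ and $\tau>0$:
\begin{equation*}
\mathbf{r}\in\mathcal{D}_{\mathrm{in}}(\tau;\mu,\bm{\alpha}) \iff \tau^{\mathrm{ub}}(\mathbf{r};\mu,\bm{\alpha})\le\tau.
\end{equation*}
This is just a rearrangement of the $k$-th term in \eqref{eq:tau_ub}: requiring $g_{k}/(\alpha_{k}-\sum_{i\in[k]}r_{i})^{+}\le\tau$ for all $k\in[K]$ is equivalent, for nonnegative $g_{k}$, to requiring $\sum_{i\in[k]}r_{i}+g_{k}/\tau\le\alpha_{k}$ for all $k\in[K]$. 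If $\alpha_{k}-\sum_{i\in[k]}r_{i}\le 0$ for some $k$, both sides fail simultaneously (the left-hand side gives $\tau^{\mathrm{ub}}=\infty$, the right-hand side is violated for every finite $\tau$), so the $(\cdot)^{+}$ does not break the equivalence.

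For the inner bound $\mathcal{D}_{\mathrm{in}}(\tau;\mu,\bm{\alpha})\subseteq\mathcal{D}(\tau;\mu,\bm{\alpha})$: take $\mathbf{r}\in\mathcal{D}_{\mathrm{in}}(\tau;\mu,\bm{\alpha})$; the equivalence gives $\tau^{\mathrm{ub}}(\mathbf{r};\mu,\bm{\alpha})\le\tau$, and the achievability half of Theorem \ref{theorem:optimal_trade_off} supplies a code achieving $(\tau^{\mathrm{ub}}(\mathbf{r};\mu,\bm{\alpha}),\mathbf{r};\mu)$. Padding the transmission with idle channel uses stretches the delivery time to any $\tau'\ge\tau^{\mathrm{ub}}(\mathbf{r};\mu,\bm{\alpha})$, in particular to $\tau$, while preserving achievability of the same non-content rates, so $\mathbf{r}\in\mathcal{D}(\tau;\mu,\bm{\alpha})$.

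For the outer bound $\mathcal{D}(\tau;\mu,\bm{\alpha})\subseteq\mathcal{D}_{\mathrm{in}}(2.01\,\tau;\mu,\bm{\alpha})$: take $\mathbf{r}\in\mathcal{D}(\tau;\mu,\bm{\alpha})$ and, using the closure in the definition of $\mathcal{D}(\tau;\mu,\bm{\alpha})$, pick a sequence $\mathbf{r}^{(n)}\to\mathbf{r}$ with each $(\tau,\mathbf{r}^{(n)};\mu)$ achievable. Then $\tau^{\star}(\mathbf{r}^{(n)};\mu,\bm{\alpha})\le\tau$, and the converse half of Theorem \ref{theorem:optimal_trade_off} gives
\begin{equation*}
\tau^{\mathrm{ub}}(\mathbf{r}^{(n)};\mu,\bm{\alpha})\le 2.01\cdot\tau^{\star}(\mathbf{r}^{(n)};\mu,\bm{\alpha})\le 2.01\,\tau.
\end{equation*}
Applying the equivalence with $\tau$ replaced by $2.01\,\tau$ places each $\mathbf{r}^{(n)}$ in $\mathcal{D}_{\mathrm{in}}(2.01\,\tau;\mu,\bm{\alpha})$. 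Since the latter is an intersection of closed half-spaces, it is closed, and passing to the limit yields $\mathbf{r}\in\mathcal{D}_{\mathrm{in}}(2.01\,\tau;\mu,\bm{\alpha})$.

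I do not anticipate a genuine obstacle: all the analytic content has already been paid for in Theorem \ref{theorem:optimal_trade_off}, and the remaining work is the bookkeeping above around the $(\cdot)^{+}$ and the closure in the definition of $\mathcal{D}(\tau;\mu,\bm{\alpha})$. The only place to be careful is to double-check that the padding argument used in the inner-bound step is legitimate for the code model of Section \ref{subsec:codes_rate_delay}, i.e. that lengthening $T$ while keeping the file and message sizes fixed keeps the probability of error vanishing; this is immediate from the standard argument of transmitting zeros in the extra channel uses.
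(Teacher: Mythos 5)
Your overall plan is the natural one—the paper treats Corollary \ref{corollary:GDoF_region} as an immediate consequence of Theorem \ref{theorem:optimal_trade_off} and gives no separate proof, so the content is exactly the translation step you identify, and the equivalence
$\mathbf{r}\in\mathcal{D}_{\mathrm{in}}(\tau;\mu,\bm{\alpha}) \iff \tau^{\mathrm{ub}}(\mathbf{r};\mu,\bm{\alpha})\le\tau$
is correct, including your handling of the $(\cdot)^{+}$. The outer-bound half is also fine: $\mathcal{D}_{\mathrm{in}}(2.01\,\tau;\mu,\bm{\alpha})$ is an intersection of closed half-spaces, so it is closed and the limiting step works.

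However, the padding step in the inner-bound half has a genuine gap. If you keep the file size $B$ and the message sizes $\lfloor T R_{k}\rfloor$ fixed and extend the blocklength to $T'=cT$ with idle channel uses, the delay $\mathcal{T}=T/B$ scales up by $c$ as you want, \emph{but} the non-content rates $R_{k}=\lfloor T R_{k}\rfloor/T$ scale \emph{down} by $c$. In the GDoF limit this yields the point $(c\,\tau^{\mathrm{ub}}(\mathbf{r}),\,\mathbf{r}/c)$, not $(c\,\tau^{\mathrm{ub}}(\mathbf{r}),\,\mathbf{r})$, so you have not shown that $(\tau,\mathbf{r};\mu)$ is achievable for the given $\mathbf{r}$. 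The fix is to not appeal to a separate padding argument at all: the achievability scheme of Theorem \ref{theorem:optimal_trade_off} is itself parameterized by the symmetric multicast GDoF $r_{\mathrm{sym}}$, and for a fixed unicast GDoF tuple $\mathbf{r}$ it achieves any $\tau$ of the form $\tau = 1/\big(r_{\mathrm{sym}}\binom{K}{\sigma-1}\big)$ for which $(\mathbf{r},r_{\mathrm{sym}})\in\mathcal{D}^{\mathrm{PHY}}_{\mathrm{sym}}(\sigma,\bm{\alpha},s)$—see \eqref{eq:max_r_sym} and \eqref{eq:achievable_tau} (and their non-integer-$K\mu$ analogues). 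Choosing $r_{\mathrm{sym}} = 1/\big(\tau\binom{K}{\sigma-1}\big)$ for any $\tau\ge\tau^{\mathrm{ub}}(\mathbf{r};\mu,\bm{\alpha})$ keeps $(\mathbf{r},r_{\mathrm{sym}})$ feasible and directly yields an achievable pair $(\tau,\mathbf{r};\mu)$ with the exact target $\tau$, which is what you need to conclude $\mathbf{r}\in\mathcal{D}(\tau;\mu,\bm{\alpha})$.
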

%%%
The achievable GDoF region in  Corollary \ref{corollary:GDoF_region_min_GNDR} is a special case of the one in Corollary \ref{corollary:GDoF_region}---the former is recovered by setting $\tau = \tau^{\mathrm{ub}}(\mathbf{0} ; \mu,  \bm{\alpha}) $ in $\mathcal{D}_{\mathrm{in}}( \tau ; \mu,\bm{\alpha}) $, while restricting to 
$N \geq K$. We conclude this section with the following remark on  characterizing  the optimal delay-rate trade-off.
%%%
\begin{remark}
%%%
\label{remark:delay_rate_char}
%%%
\textbf{(Approximate delay-rate characterization).} 
As one would hope, the GNDT-GDoF-based characterizations presented in this section translate to counterpart
approximate delay-rate characterizations.
%%%
This is shown in Appendix \ref{appendix:constant_gap}, where we characterize the set of all achievable delay-rate trade-off tuples 
$(\mathcal{T}, \mathbf{R} ; \mu)$ up to an additive gap of $2$ bits per channel use for rates and a multiplicative gap of $2.01$ for the delay, irrespective of all system parameters.
\end{remark}
%%%
\section{Degraded GBC with Unicast and Multiple Multicast Messages}
\label{sec:degraded_GBC_unicast_multicast}
%%%
In this section, we focus on a variant of the degraded  GBC in Section \ref{subsec:phy_channel} with no caches and with two message sets: a unicast message set and a multiple multicast message set.
%%%
The latter message set is referred to as the $\sigma$-multicast message set, where $\sigma \in [2:K]$ is the size of the corresponding multicast groups.\footnote{The case with $\sigma = 1$ is ignored as it reduces to having only a unicast message set.}
%%%%
As seen in the following section, this channel model is at the heart of the separation architecture---unicast messages carry instantaneous non-content traffic (i.e. messages), while $\sigma$-multicast messages carry coded content traffic (i.e. files). 
%%%%
It is worthwhile highlighting that the size of multicast groups $\sigma$ remains fixed once selected.
%%%%
\subsection{Unicast and $\sigma$-multicast message sets}
\label{subsec:unicast_sigma_multicast_messages}
%%%
The unicast message set is given by $\left\{W_{k} : k \in [K] \right\}$, where each message $W_{k}$ is intended to the corresponding user $k$ 
and has a rate of $R_{k}$ and a GDoF of $r_{k}$; while  $\sigma$-multicast message set is given by 
$\left\{W_{\mathcal{S}} : \mathcal{S} \subseteq [K],  |\mathcal{S}| = \sigma \right\}$, with each message 
$W_{\mathcal{S}}$ intended to all users in $\mathcal{S}$ and has a rate of $R_{\mathcal{S}}$ and a GDoF of $r_{\mathcal{S}}$.
%%%
Note that since $\sigma \geq 2$, there is no ambiguity between $W_{k} $ and $W_{\mathcal{S}} $, 
$R_{k} $ and $R_{\mathcal{S}} $, or $r_{k} $ and $r_{\mathcal{S}} $, for any $k \in [K]$ and $\mathcal{S} \subseteq [K]$.
%%%

%%%
For any $\sigma$, $\bm{\alpha}$ and $P$, the capacity region and GDoF region of the above channel are denoted by
 $\mathcal{C}^{\mathrm{PHY}}(\sigma,\bm{\alpha},P)$  and $\mathcal{D}^{\mathrm{PHY}}(\sigma,\bm{\alpha})$, respectively.
%%%
We  define the set of all $\sigma$-multicast groups  as $\Sigma \triangleq \left\{ \mathcal{S} \subseteq [K] :  |\mathcal{S}| = \sigma \right\}$,
where $|\Sigma | = \binom{K}{\sigma} $.
%%%
Moreover, we introduce a family of subsets of $\Sigma$ given by  $\{ \Sigma_{i} : i \in [K - \sigma +1]  \}$,
where each member $\Sigma_{i} \subseteq \Sigma$ is defined as:\footnote{As an example, take $K = 4$ and $\sigma = 2$. Here we have $\Sigma = \big\{ \{1,2\}, \{1,3\}, \{1,4 \}, \{ 2,3\}, \{2,4\}, \{ 3,4\} \big\}$, which is partitioned into $\Sigma_1 = \big\{ \{1,2\}, \{1,3\}, \{1,4 \}\big\}$, 
$\Sigma_2 = \big\{ \{2,3\}, \{2,4\}\big\}$ and $\Sigma_3 = \big\{ \{3,4\}\big\}$. }
%%%
\begin{equation}
\label{eq:Sigma_i}
\Sigma_{i} \triangleq  \left\{ \mathcal{S} \in \Sigma :  \min\{ \mathcal{S} \} = i \right\}.
\end{equation}
%%%
It can be verified that $\{ \Sigma_{i} : i \in [K - \sigma +1]  \}$ is a partition of $\Sigma$, that is:
%%%
\begin{equation}
\bigcup_{i \in [K - \sigma + 1]} \Sigma_{i}  = \Sigma \ \ \text{and} \ \ \Sigma_{i} \cap \Sigma_{j} = \emptyset, \ \forall  i \neq j.
\end{equation}
%%%
We are now ready to present a characterization of the GDoF region $\mathcal{D}^{\mathrm{PHY}}(\sigma,\bm{\alpha})$.
%%%
\begin{theorem}
\label{theorem:GDoF_phy}
%%%
For the above described degraded GBC with unicast and $\sigma$-multicast messages, the GDoF region 
$\mathcal{D}^{\mathrm{PHY}}(\sigma,\bm{\alpha})$ is given by all tuples 
$(r_{k}: k\in [K], \ r_{\mathcal{S}} : \mathcal{S} \in \Sigma) \in \mathbb{R}_{+}^{K + \binom{K}{\sigma}}$ that satisfy
%%%
\begin{equation}
\label{eq:GDoF_region_phy}
\begin{aligned}
\sum_{i \in [k]} r_{i} + \sum_{\mathcal{S} \in \cup_{i \in [k]} \Sigma_{i} }  r_{\mathcal{S}}  & \leq \alpha_{k}, \ \forall k \in [K - \sigma + 1]\\
\sum_{i \in [k]}  r_{i} + \sum_{\mathcal{S} \in \Sigma}  r_{\mathcal{S}}  & \leq \alpha_{k}, \ \forall k \in [K - \sigma + 2 : K].
\end{aligned}
\end{equation}
%%%
\end{theorem}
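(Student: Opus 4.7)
My plan is to establish the GDoF region in \eqref{eq:GDoF_region_phy} through matching converse and achievability arguments, both exploiting the degradedness of the GBC. Under the ordering \eqref{eq:order_alpha}, for any $j<k$ the output $Y_{j}$ is a (stochastically) degraded version of $Y_{k}$, so anything reliably decodable by user $j$ is also decodable by user $k$. Consequently, at user $k$ the decoding closure is
\[
\mathcal{M}_{k} \;\triangleq\; \{W_{j} : j \in [k]\} \;\cup\; \{W_{\mathcal{S}} : \min\mathcal{S}\leq k\} \;=\; \{W_{j}: j\in[k]\} \;\cup\; \bigl\{W_{\mathcal{S}}: \mathcal{S} \in \textstyle\bigcup_{i \in [k]} \Sigma_{i}\bigr\},
\]
and because every $\sigma$-subset of $[K]$ has minimum at most $K-\sigma+1$, the multicast part coincides with $\Sigma$ exactly when $k \geq K-\sigma+1$. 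This identification already pins down the two index regimes appearing in the statement.

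\textbf{Converse.} For the converse I would fix any sequence of reliable codes, apply Fano's inequality jointly to the messages in $\mathcal{M}_{k}$ at user $k$, and upper bound the resulting mutual information by the Gaussian point-to-point capacity:
\[
H(\mathcal{M}_{k}) \;\leq\; I(X^{T};Y_{k}^{T}) + T\epsilon_{T} \;\leq\; T\log(1 + P^{\alpha_{k}}) + T\epsilon_{T}.
\]
Independence of the messages turns the left-hand side into the sum of their rates; dividing by $T\log P$ and taking $T\to\infty$ followed by $P\to\infty$ converts rates into GDoF and yields the inequality indexed by $k$ in \eqref{eq:GDoF_region_phy}, with the two cases $k \leq K-\sigma+1$ and $k \geq K-\sigma+2$ arising directly from the structure of $\bigcup_{i\in[k]}\Sigma_{i}$.

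\textbf{Achievability.} For achievability I would employ a $K$-layer superposition code. Setting $\alpha_{0} \triangleq 0$, I allocate power $P^{-\alpha_{k-1}} - P^{-\alpha_{k}}$ to layer $k$, so the total power is $1 - P^{-\alpha_{K}} \leq 1$. Layer $k$ jointly encodes the messages in the block
\[
L_{k} \;\triangleq\; \{W_{k}\} \;\cup\; \{W_{\mathcal{S}} : \mathcal{S} \in \Sigma_{k}\}
\]
(where $\Sigma_{k}=\emptyset$ for $k > K-\sigma+1$) at any rate split whose sum is at most the per-layer GDoF budget $\alpha_{k}-\alpha_{k-1}$, using a standard point-to-point Gaussian code built on top of the joint message. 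By the usual successive decoding argument over the degraded GBC, user $k$ peels off layers $1,\ldots,k$ while treating layers $k+1,\ldots,K$ as Gaussian noise; and since $\mathcal{M}_{k} = L_{1} \cup \cdots \cup L_{k}$, this recovers precisely those messages user $k$ is required to decode. Because the $k$-th inequality in \eqref{eq:GDoF_region_phy} is exactly the cumulative sum of per-layer budgets and the intra-layer split is unconstrained beyond its sum, every point of the polytope is realised without invoking time-sharing across orderings.

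\textbf{Main obstacle.} The main obstacle I anticipate is verifying that each layer's aggregate budget $\alpha_{k}-\alpha_{k-1}$ is \emph{simultaneously} decodable by user $k$ and by every stronger user $j \geq k$, the latter needing to peel layer $k$ on the way to its own layer. This reduces to checking that the effective signal-to-noise ratio for layer $k$ at user $j$, of order $P^{\alpha_{j}}\bigl(P^{-\alpha_{k-1}} - P^{-\alpha_{k}}\bigr)/\bigl(1 + P^{\alpha_{j}}(P^{-\alpha_{k}} - P^{-\alpha_{K}})\bigr)$, delivers GDoF at least $\alpha_{k}-\alpha_{k-1}$ uniformly in $j \geq k$ and $k$; this is a routine scaling argument but must be done carefully, especially near $j=k$ where residual interference from layers above $k$ could in principle erode the target GDoF. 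A secondary point is that jointly encoding a unicast message and many multicast messages within a single layer reduces to a point-to-point Gaussian coding problem, so any sum-constrained rate split within $L_{k}$ is admissible by standard arguments; this is routine but worth flagging in the write-up.
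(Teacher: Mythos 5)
Your converse is sound: applying Fano at user $k$ to the whole decoding closure $\mathcal{M}_{k}$ is valid once one observes that, by stochastic degradedness, user $k$ can simulate the channel to any weaker user $j$ and hence decode $W_{j}$ and $\{W_{\mathcal{S}}:\mathcal{S}\in\Sigma_{j}\}$ with vanishing error; from there $H(\mathcal{M}_{k})\le I(X^{T};Y_{k}^{T})+T\epsilon_{T}$ follows. This is a slightly more compact route than the paper's, which instead chains single-user Fano bounds (each applied at its own receiver), upgrades the output to $Y_{k}^{T}$ by degradedness and a data-processing step, conditions on genie side information, and then telescopes. Both are correct.

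The achievability, however, has a genuine gap. You fix the power exponents at $\beta_{k}=\alpha_{k-1}$ so that layer $k$ carries a GDoF budget of exactly $\alpha_{k}-\alpha_{k-1}$, and you then assert that since the $k$-th inequality of \eqref{eq:GDoF_region_phy} is the cumulative sum of these per-layer budgets, the full polytope is achieved. That last inference is false: the box $\{\rho_{k}\le \alpha_{k}-\alpha_{k-1},\ \forall k\}$ is a strict subset of the simplex-like polytope $\{\sum_{i\in[k]}\rho_{i}\le\alpha_{k},\ \forall k\}$ whenever the $\alpha_{k}$ are not all equal. A concrete counterexample in the $K=2$, $\sigma=2$ case: the tuple $(r_{1},r_{2},r_{12})=(0,\alpha_{2},0)$ lies in \eqref{eq:GDoF_region_phy} but gives layer $2$ a GDoF of $\alpha_{2}>\alpha_{2}-\alpha_{1}$, which your fixed allocation cannot support. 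In general you cannot "borrow" unused budget from a lower layer with a fixed power split. The paper's proof avoids this by leaving the power exponents $\beta_{2},\ldots,\beta_{K}$ as free variables (subject only to $0=\beta_{1}\le\beta_{2}\le\cdots\le\beta_{K}$ and $\beta_{k+1}\le\alpha_{k}$), proving achievability of the parametrized region $\rho_{k}\le\beta_{k+1}-\beta_{k}$, and then Fourier--Motzkin eliminating the $\beta$'s; the elimination is exactly what produces the cumulative inequalities and shows the union over power allocations fills the entire polytope (Lemma \ref{lemma:FM_elimination}). To repair your argument you must either run the same elimination over a free power allocation, or give a direct construction of a feasible $\bm{\beta}$ for each point of the polytope (e.g.\ $\beta_{k}=\sum_{i\in[k-1]}\rho_{i}$ works when $\sum_{i\in[k]}\rho_{i}\le\alpha_{k}$ for all $k$), neither of which is what the fixed choice $\beta_{k}=\alpha_{k-1}$ accomplishes.
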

%%%
The GDoF region in Theorem \ref{theorem:GDoF_phy} is achieved using a scheme based on power control with superposition coding and  successive decoding.
%%%
The full proof is relegated to Appendix \ref{appendix:GDoF_region}. 
%%%
Theorem \ref{theorem:GDoF_phy} has an intuitive interpretation, which is best seen by laying out the inequalities in \eqref{eq:GDoF_region_phy} as
%%%
\begin{align}
\label{eq:GDoF_region_phy_2_1}
r_{1} + \sum_{\mathcal{S} \in \Sigma_{1}} r_{\mathcal{S}}  & \leq \alpha_{1} \\
\label{eq:GDoF_region_phy_2_2}
r_{1} + r_{2} + \sum_{\mathcal{S} \in \Sigma_{1} \cup \Sigma_{2}   } r_{\mathcal{S}}  & \leq \alpha_{2} \\
\nonumber
& \vdots \\
\label{eq:GDoF_region_phy_2_3}
\sum_{i \in [K- \sigma + 1] } r_{i}  + \sum_{\mathcal{S} \in \Sigma   } r_{\mathcal{S}}    & \leq \alpha_{K- \sigma + 1} \\
\nonumber
& \vdots \\
\label{eq:GDoF_region_phy_2_4}
\sum_{i \in [K] } r_{i}  + \sum_{\mathcal{S} \in \Sigma   }  r_{\mathcal{S}}   & \leq \alpha_{K}.
\end{align}
%%%
User $1$ recovers all messages in $\{W_{1}, W_{\mathcal{S}} : \mathcal{S} \in \Sigma_{1} \}$, 
and hence the sum-GDoF of such messages cannot exceed the channel strength of this user, as seen in \eqref{eq:GDoF_region_phy_2_1}.
%%%
Due to the degradedness of the channel, user $2$ can recover whatever user $1$ recovers, and must also decode for messages in
$\{W_{2}, W_{\mathcal{S}} : \mathcal{S} \in \Sigma_{2} \}$. This bounds the sum-GDoF of messages in  $\{W_{1},W_{2}, W_{\mathcal{S}} : \mathcal{S} \in \Sigma_{1} \cup \Sigma_{2}  \}$ by the channel strength of user $2$, as seen in \eqref{eq:GDoF_region_phy_2_2}.
%%%
The same argument applies to all users up to user $K -\sigma + 1$,  as seen in \eqref{eq:GDoF_region_phy_2_3}.
%%%
Beyond user $K -\sigma + 1$, each user $k$ in $[K -\sigma + 2: K]$ is capable of recovering all messages in 
$\{W_{1},\ldots,W_{k-1}, W_{\mathcal{S}} : \mathcal{S} \in \Sigma \}$, and must additionally decode for message $W_{k}$.
%%%
This yields the sum-GDoF bounds in the second line of \eqref{eq:GDoF_region_phy} (see, e.g., \eqref{eq:GDoF_region_phy_2_4}).
%%%%%
%%%%%
\begin{remark}
\label{remark:Sigma_augment}
We augment the definition of the family of subsets given by  $\{ \Sigma_{i} : i \in [K - \sigma +1]  \}$ to include 
$\{ \Sigma_{i} : i \in [K - \sigma +2:K]  \}$, where we set $\Sigma_{i} = \emptyset$, for all $ i \in [K - \sigma +2 : K] $.
%%%% 
This allows us to express the inequalities in \eqref{eq:GDoF_region_phy} compactly as 
%%%
\begin{equation}
\label{eq:GDoF_region_phy_compact}
\begin{aligned}
\sum_{i \in [k]} r_{i} + \sum_{\mathcal{S} \in \cup_{i \in [k]} \Sigma_{i} }  r_{\mathcal{S}}   \leq \alpha_{k}, \ \forall k \in [K ].
\end{aligned}
\end{equation}
%%%
Moreover, throughout this paper, we use the convention $|\emptyset| = 0$.
%%%
\end{remark}
%%%%%
\begin{remark}
The characterization of $\mathcal{D}^{\mathrm{PHY}}(\sigma,\bm{\alpha})$ in Theorem \ref{theorem:GDoF_phy} leads to 
a characterization of the capacity region $\mathcal{C}^{\mathrm{PHY}}(\sigma,\bm{\alpha},P)$ up to a constant gap.
%%%
Details are relegated to Appendix \ref{appendix:subsec_constant_gap_PHY}.
%%%
While this constant gap result is of interest in its own right, its main significance to this work is that it lays the ground for 
establishing the approximate delay-rate characterization  in Appendix \ref{appendix:constant_gap}.
\end{remark}
%%%%%
\subsection{Symmetric $\sigma$-multicast GDoF}
\label{subsec:symm_multicast_subset}
%%%%
We are interested in scenarios where in addition to  unicast messages,
we wish to communicate a subset of the  $\sigma$-multicast messages at a symmetric rate.
%%%
This is specified as follows.
%%%
\begin{itemize}
%%%
\item For a given parameter $s \in [K]$, we wish to communicate the subset of $\sigma$-multicast messages where each message is intended to at least one user in $[s]$.
%%%
\item For the communicated $\sigma$-multicast messages, we wish to achieve a symmetric GDoF of 
$r_{\mathrm{sym}}$. 
\end{itemize}
%%%
From \eqref{eq:Sigma_i}, it follows that for any $s \in [K]$, the set of $\sigma$-multicast messages
of interest is given by
%%%
\begin{equation}
\big\{W_{\mathcal{S}} : \mathcal{S} \in \Sigma,  \mathcal{S} \cap [s] \neq \emptyset  \big\} = 
\big\{ W_{\mathcal{S}} : \mathcal{S} \in \cup_{i \in [s]} \Sigma_{i} \big\}.
\end{equation}
%%%
It can be verified that the above set comprises all $\sigma$-multicast messages whenever $s \geq K - \sigma + 1$.
%%%
For this scenario of interest, we define a lower dimensional projection 
of $\mathcal{D}^{\mathrm{PHY}}(\sigma,\bm{\alpha})$ as:
%%%
\begin{multline}
\mathcal{D}^{\mathrm{PHY}}_{\mathrm{sym}}(\sigma,\bm{\alpha}, s) \triangleq 
\Big\{(r_{1},\ldots,r_{K},r_{\mathrm{sym}}) : (r_{k}: k\in [K], r_{\mathcal{S}} : \mathcal{S} \in \Sigma) \in \mathcal{D}^{\mathrm{PHY}}(\sigma,\bm{\alpha}), \\
 r_{\mathcal{S}} \geq r_{\mathrm{sym}} , \forall \mathcal{S} \in  \cup_{i \in [s]} \Sigma_{i}, 
\  \text{and} \ 
 r_{\mathcal{S}} = 0 , \forall \mathcal{S} \in  \cup_{i \in [s+1 : K]} \Sigma_{i}  \Big\}
\end{multline}
%%%
which is parametrized by $s$, in addition to $\sigma$ and $\bm{\alpha}$.
%%%
A characterization of $\mathcal{D}^{\mathrm{PHY}}_{\mathrm{sym}}(\sigma,\bm{\alpha}, s) $ is directly obtained from  Theorem \ref{theorem:GDoF_phy},
%%%
and is given by all tuples $(r_{k}: i \in [K], \ r_{\mathrm{sym}}) \in \mathbb{R}^{K+1}$ that satisfy:
%%%
\begin{equation}
\label{eq:sym_GDoF_region_phy}
\sum_{i \in [k]} r_{i} + \bigg| \bigcup_{i \in [ \min \{k,s\} ]} \Sigma_{i} \bigg| \cdot r_{\mathrm{sym}} \leq \alpha_{k}, \ \forall k \in [K].
\end{equation}
%%%%
Next, we observe that the following identity holds
%%%
\begin{equation}
\label{eq:k_Sigma_size}
\bigg| \bigcup_{i \in [j]} \Sigma_{i} \bigg| = \sum_{i \in [j]}  |  \Sigma_{i} |  = \binom{K}{\sigma} - \binom{K - j}{\sigma}, \ \forall j \in [K].
\end{equation}
%%%
This is deduced by noting that  $|\Sigma | = \binom{K}{\sigma} $ and   $| \cup_{i \in [j+1:K]} \Sigma_{i} |  = \binom{K - j}{\sigma}$,
where the latter follows from the fact that $\cup_{i \in [j+1:K]} \Sigma_{i} $ is the family of all subsets of $[j+1:K]$ with size $\sigma$.
%%%
Since $| \cup_{i \in [j]} \Sigma_{i} | = | \Sigma|  - | \cup_{i \in [j+1:K]} \Sigma_{i} | $,  the identity in \eqref{eq:k_Sigma_size} holds.
%%%
By setting $j$ in \eqref{eq:k_Sigma_size}  to $\min \{k,s\}$ and plugging the identity back into \eqref{eq:sym_GDoF_region_phy}, 
we obtain  the following corollary.
%%%
\begin{corollary}
\label{corollary:symm_multicast_region}
The symmetric  $\sigma$-multicast  GDoF region  $\mathcal{D}^{\mathrm{PHY}}_{\mathrm{sym}}(\sigma,\bm{\alpha},s) $
is given by all GDoF tuples $(r_{k}: i \in [K], \ r_{\mathrm{sym}}) \in \mathbb{R}^{K+1}$ that satisfy:
%%%
\begin{equation}
\sum_{i \in [k]} r_{i} + \left[ \binom{K}{\sigma} - \binom{K - \min\{k,s\}}{\sigma} \right] \cdot r_{\mathrm{sym}}  \leq \alpha_{k}, \ 
\forall k \in [K].
\end{equation}
%%%%
\end{corollary}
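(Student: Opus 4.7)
The plan is to derive the result as a direct projection of the multiple-multicast GDoF region from Theorem \ref{theorem:GDoF_phy} onto the lower-dimensional coordinate set $(r_{1},\ldots,r_{K},r_{\mathrm{sym}})$, followed by a combinatorial simplification of the coefficient of $r_{\mathrm{sym}}$ using the partition structure $\{\Sigma_{i}\}$ of $\Sigma$.

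First, I would unpack the definition of $\mathcal{D}^{\mathrm{PHY}}_{\mathrm{sym}}(\sigma,\bm{\alpha},s)$: a tuple $(r_{1},\ldots,r_{K},r_{\mathrm{sym}})$ lies in the region if and only if there exist $\sigma$-multicast GDoFs $(r_{\mathcal{S}})_{\mathcal{S}\in\Sigma}$ such that $r_{\mathcal{S}} \geq r_{\mathrm{sym}}$ for $\mathcal{S} \in \bigcup_{i \in [s]} \Sigma_{i}$, $r_{\mathcal{S}} = 0$ for $\mathcal{S} \in \bigcup_{i \in [s+1:K]} \Sigma_{i}$, and the full tuple lies in $\mathcal{D}^{\mathrm{PHY}}(\sigma,\bm{\alpha})$. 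Because every inequality in \eqref{eq:GDoF_region_phy_compact} is monotone non-decreasing in each $r_{\mathcal{S}}$, the feasibility question is decided by the smallest admissible assignment on the active groups, namely $r_{\mathcal{S}} = r_{\mathrm{sym}}$. Substituting this assignment, together with $r_{\mathcal{S}} = 0$ on the inactive groups, collapses the multicast sum appearing in the $k$-th inequality of \eqref{eq:GDoF_region_phy_compact} to $\bigl|\bigcup_{i \in [k]} \Sigma_{i} \cap \bigcup_{i \in [s]} \Sigma_{i}\bigr| \cdot r_{\mathrm{sym}} = \bigl|\bigcup_{i \in [\min\{k,s\}]} \Sigma_{i}\bigr| \cdot r_{\mathrm{sym}}$, which is exactly \eqref{eq:sym_GDoF_region_phy}.

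Second, I would establish the combinatorial identity $\bigl|\bigcup_{i \in [j]} \Sigma_{i}\bigr| = \binom{K}{\sigma} - \binom{K-j}{\sigma}$ for each $j \in [K]$. The reasoning is that $\{\Sigma_{i}\}$ partitions $\Sigma$ and that $\Sigma_{i} = \{\mathcal{S} \in \Sigma : \min \mathcal{S} = i\}$, so $\bigcup_{i \in [j+1:K]} \Sigma_{i}$ is precisely the family of $\sigma$-subsets of $\{j+1,\ldots,K\}$, of cardinality $\binom{K-j}{\sigma}$; subtracting from $|\Sigma| = \binom{K}{\sigma}$ delivers the identity. Plugging $j = \min\{k,s\}$ into the inequalities from the previous step yields the announced form and completes the derivation.

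Because Theorem \ref{theorem:GDoF_phy} already supplies both the achievability and the converse of the ambient unicast/$\sigma$-multicast region, the statement reduces to a projection-and-counting exercise and no substantive obstacle is anticipated. The only point worth spelling out carefully is the monotonicity argument that certifies the ``$r_{\mathcal{S}} \geq r_{\mathrm{sym}}$'' constraint is binding at equality when deciding membership in the projection; once this is noted, the rest unfolds mechanically.
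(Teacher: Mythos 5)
Your proposal is correct and follows essentially the same route as the paper: project $\mathcal{D}^{\mathrm{PHY}}(\sigma,\bm{\alpha})$ onto $(r_{1},\ldots,r_{K},r_{\mathrm{sym}})$ by substituting the extremal assignment $r_{\mathcal{S}}=r_{\mathrm{sym}}$ on active groups and $r_{\mathcal{S}}=0$ on inactive ones, then apply the counting identity $\bigl|\bigcup_{i\in[j]}\Sigma_i\bigr|=\binom{K}{\sigma}-\binom{K-j}{\sigma}$ with $j=\min\{k,s\}$. The one thing you spell out that the paper leaves implicit is the monotonicity argument certifying that the projection is decided at $r_{\mathcal{S}}=r_{\mathrm{sym}}$, which is a welcome clarification rather than a deviation.
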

%%%%
From  the characterization of $\mathcal{D}^{\mathrm{PHY}}_{\mathrm{sym}}(\sigma,\bm{\alpha},s) $ in the above corollary, 
it follows that for any feasible unicast GDoF tuple $\mathbf{r} = (r_{k} : k\in [K])$, we achieve any symmetric multicast GDoF that satisfies
%%%
\begin{equation}
\label{eq:max_r_sym}
r_{\mathrm{sym}} \leq \min_{k \in [K]} \left\{ \frac{ \big( \alpha_{k} - \sum_{i \in [k]} r_{i} \big) }{ \binom{K}{\sigma} - \binom{K - \min\{k,s\}}{\sigma}  }  \right\}.
\end{equation}
%%%%
%%%
\section{Achievability}
\label{sec:achievability}
%%%
Equipped with the GDoF characterization for the degraded GBC with unicast and $\sigma$-multicast messages derived in the previous section, the achievability part of
Theorem \ref{theorem:optimal_trade_off} will follow from a scheme that adheres to the
separation principle, as we will see  in this section.
%%%

%%%
For content placement, generating coded multicast messages, and recovering files
from local cache contents and received multicast messages, we invoke the 
YMA scheme in  \cite{Yu2019,Yu2018}; which generalizes the original  MN scheme \cite{Maddah-Ali2014}, and reduces to it
whenever $N \geq K$.
%%%
On the other hand, the physical channel is treated as a collection of capacitated bit pipes, each carrying its corresponding coded multicast message or non-content unicast message,
at rates (or GDoF) governed by the characterization in Theorem \ref{theorem:GDoF_phy}.
%%%
We focus on  integer values of $K \mu$, drawn from $[0:K]$ in this section.
The case of non-integer $K \mu$, drawn from $(0,K)$, is treated in Appendix \ref{appendix:non_integer_K_mu}.
%%%
%\paragraph{\normalfont{\emph{Cache placement:}}} 
\subsection{Cache placement}
Each file $F_{n}$ is divided into $\binom{K}{K \mu}$ equal sized sub-files, i.e.
%%%
\begin{equation}
F_{n}  \rightarrow  \left\{ F_{n}^{\mathcal{S}'} : \mathcal{S}' \subseteq [K], |\mathcal{S}'| = K \mu \right\}
\end{equation}
%%%
where each sub-file $F_{n}^{\mathcal{S}'}$ has a size of $B / \binom{K}{K \mu}$ bits.
%%%
Each user $k$ then fills its cache memory as:
%%%
\begin{equation}
\label{eq:cache_content}
U_{k} = \left\{ F_{n}^{\mathcal{S}'} : n\in[N],  \mathcal{S}' \subseteq [K], |\mathcal{S}'| = K, k \in \mathcal{S}' \right\}.
\end{equation} 
%%%
This caching strategy satisfies the cache size constraint of $MB$ bits, (see, e.g., \cite{Maddah-Ali2014}). 
%%%
Note that the above-described procedure exactly matches the original MN uncoded caching procedure in \cite{Maddah-Ali2014}, which 
 is clearly independent of user demand tuples.
%%%
We now proceed to describe the coded multicasting and transmission procedures, which depend on the demand tuple type.
%%%
\subsection{Coded multicast messages}
%\subsection{Worst-case demands} 
\label{subsec:worst_case_demands}
%%%
Let us recall from Remark \ref{remark:worst_case} that we consider worst-case demand tuples that comprise of the maximum possible 
number of distinct user demands, i.e. $\min \{K,N \} $.
%%%
For ease of exposition, we start by focusing on the case where these distinct demands are made by the first (i.e. weakest) $\min \{K,N \} $ users. 
%%%
In Appendix \ref{appendix:other_demands}, we show that the performance achieved in this case is also achievable whenever the distinct demands
are not necessarily made by the weakest users.
%%%

%\paragraph{\normalfont{\emph{Coded multicast messages:}}}
%%%
We refer to $\mathcal{U} =  \big[\min \{K,N \} \big]$ as the set of \emph{leading users}, where such users request distinct files, 
while the set of \emph{non-leading users} is given by $\bar{\mathcal{U}} = [K] \setminus \mathcal{U} $.
%%%
For brevity, we use the physical channel notation from the previous section and set the multicast group size to $\sigma = K \mu + 1$, and  the number of distinct demands to  $s = \min\{K,N\}$. 
%%%
Once demands are revealed, the transmitter generates $\binom{K}{\sigma} - \binom{K -s}{\sigma}$ coded multicast messages,
each intended to a unique subset of  $\sigma$ users denoted by $\mathcal{S}$, where $\mathcal{S} \in \cup_{i \in [s]} \Sigma_{i}$.
%%%
It can be verified that each such subset of users, i.e.  $\mathcal{S} \in \cup_{i \in [s]} \Sigma_{i} $, contains at least one leading user from $\mathcal{U} = [s]$.
%%%
The coded multicast message corresponding to $\mathcal{S}$ is given by
%%%
\begin{equation}
W_{\mathcal{S}} = \bigoplus_{i \in {\mathcal{S}}}  F_{d_{i}}^{\mathcal{S} \setminus \{i\}}.
\end{equation}
%%%

Assuming the successful delivery of coded multicast messages, each leading user $k \in \mathcal{U}$ recovers the requested file $F_{d_{k}}$ from the the set of  coded  multicast messages $\{W_{\mathcal{S}} :  \mathcal{S} \in \cup_{i \in [s]}\Sigma_{i}, k \in \mathcal{S} \}$ and the cache content $U_{k}$, using the standard MN decoding procedure.
%%%
In particular, each  $W_{\mathcal{S}}$ with $k \in \mathcal{S} $ may be expressed as 
$W_{\{ k\} \cup \mathcal{S}'} = F_{d_{k}}^{\mathcal{S}' } \oplus \big(  \bigoplus_{i \in \mathcal{S}' }  F_{d_{i}}^{\{k\} \cup \mathcal{S}' \setminus \{i\}}  \big)$, where $\mathcal{S}' = \mathcal{S} \setminus \{k\}$, 
from which  undesired sub-files can be cancelled out, as they are available in $U_{k}$.
%%% 
For scenarios where $N \geq K  - \sigma + 1$, coded multicast messages corresponding to all subsets of 
$\sigma$ users are transmitted, and non-leading users decode their requested files according to the above procedure.
%%%

%%%
For scenarios  where $N \leq K - \sigma$, only a subset of coded multicast messages is transmitted, i.e. those useful to leading users.
%%%
Nevertheless, non-leading users can also recover their requested files using the YMA decoding procedure \cite{Yu2018}, subject to the successful decoding of required multicast messages, discussed further on.
%%%
In particular, a non-leading user $k \in  \bar{\mathcal{U}}$ computes the   \emph{missing} set coded multicast messages, that is $\{ W_{\mathcal{A}} : \mathcal{A} \subseteq \bar{\mathcal{U}} ,  |\mathcal{A} | = \sigma , k \in \mathcal{A}  \}$, from a subset of the transmitted multicast messages 
and then proceeds to recover $F_{d_{k}}$ using the standard MN decoding procedure.
%%%
Each missing message $W_{\mathcal{A}} $ is computed by users in $\mathcal{A}$ as
\begin{equation}
\label{eq:W_A_YMA}
W_{\mathcal{A}}   = \bigoplus_{\mathcal{V}  \in \Upsilon} W_{\mathcal{B} \setminus \mathcal{V}}
\end{equation}
%%%
where $\mathcal{B} = \mathcal{A} \cup \mathcal{U}$, and  $\Upsilon$ denotes a family of subsets of  $\mathcal{B}$ such that each 
member $\mathcal{V}  \in \Upsilon$ is a set of $N$ users with distinct demands, and $\mathcal{V} \neq \mathcal{U}$;  i.e.
each $\mathcal{V}$ is a potential set of leaders other than $ \mathcal{U}$.
%%%
For more details about the YMA procedure, readers are referred to  \cite[Sec. IV.B]{Yu2018}.
%%%
%\paragraph{\normalfont{\emph{Transmission:}}}
\subsection{Transmission}
%%%
The problem now reduces to delivering the set of $\binom{K}{\sigma} - \binom{K - s}{\sigma}$ coded multicast messages given by $\{W_{\mathcal{S}} :  \mathcal{S} \in \cup_{i \in [s]}\Sigma_{i} \}$, as well as the set of unicast messages 
$\{W_{k} :  k \in  [K] \}$.
%%%
This is exactly the unicast and $\sigma$-multicast transmission problem discussed in Section \ref{sec:degraded_GBC_unicast_multicast}.
%%%
Moreover, in scenarios where $N \leq K  - \sigma$, the degradedness of the physical channel guarantees that each non-leading user in 
$\bar{\mathcal{U}} = [N+ 1 : K]$ can recover the entire set of multicast messages $\{W_{\mathcal{S}} :  \mathcal{S} \in \cup_{i \in [s]}\Sigma_{i} \}$. This, in turn, ensures the success of the YMA decoding procedure for such users.
%%%

%%%
Using the symmetric  $\sigma$-multicast transmission with only a subset of multicast messages in Section \ref{subsec:symm_multicast_subset},
for any achievable tuple $(\mathbf{r},r_{\mathrm{sym}}) \in \mathcal{D}^{\mathrm{PHY}}_{\mathrm{sym}}(\sigma,\bm{\alpha},s) $,
each of the non-content unicast messages achieves its corresponding GDoF in $\mathbf{r}$, while the achievable content GNDT is given by
%%%
\begin{equation}
\label{eq:achievable_tau}
\tau = \frac{1}{r_{\mathrm{sym}} \cdot \binom{K}{\sigma - 1}}.
\end{equation}
%%%
Note that the normalization factor in \eqref{eq:achievable_tau} appears since each coded multicast message $W_{\mathcal{S}} $ has a size of 
$1 / \binom{K}{\sigma - 1}$ when normalized by the file size $B$.  
%%%
Combining with \eqref{eq:max_r_sym}, we have 
%%%
\begin{equation}
\label{eq:tau_achievability}
\tau \geq \max_{k \in [K]} \left\{ 
\frac{ 1}{ \big( \alpha_{k} - \sum_{i \in [k]} r_{i} \big)  }  \cdot \frac{ \binom{K}{\sigma} - \binom{K - \min\{k,s\}}{\sigma}  }{  \binom{K}{\sigma  - 1}  }  \right\}.
\end{equation}
%%%
In \eqref{eq:tau_achievability}, we have $\min\{k,s\} = \min\big\{k, \min\{ K,N \} \big\} = \min \{k,N \}$, for all $k \in [K]$.
%%%
Therefore, \eqref{eq:tau_achievability} coincides with \eqref{eq:tau_ub}, which completes the proof of achievability in this case.
%%%
\begin{remark}
%%%
By eliminating non-content messages and restricting to the case of $N \geq K$, the achievability scheme proposed in this paper reduces to the one in \cite{Lampiris2019}. Nevertheless, the proof here is different, specifically the part dealing with transmission over the physical channel.
%%%
In \cite{Lampiris2019}, an explicit power allocation strategy is constructed to achieve the corresponding GNDT.
%%%
In this paper, we avoid the power allocation problem all together by eliminating the power allocation variables using a Fourier-Motzkin elimination procedure (see Appendix \ref{appendix:GDoF_region}). This enables us to take the additional step of  characterizing the entire GDoF region for the physical channel with  unicast and  $\sigma$-multicast (Section \ref{sec:degraded_GBC_unicast_multicast}), and leads to the GNDT-GDoF trade-off in \eqref{eq:tau_achievability}.
%%%
\end{remark}
%%%

%%%
\section{Converse}
\label{section:converse}
%%%
\label{sec:converse}
%%%
In this section, we prove the converse part of Theorem \ref{theorem:optimal_trade_off}.
%%%
We focus on worst-case demands as defined in Section \ref{subsec:worst_case_demands}.
%%%
For any such demand tuple $\mathbf{d}$, each user $k$ in $[K]$ must recover both the message $W_{k}$ and the demanded file $F_{d_k}$
from the received signal $Y_{k}^{T}$ and cache content $U_{k}$,
with a decoding error that vanishes as  $T$ grows large.
%%%
Therefore,  Fano's inequality implies:
%%%
\begin{align}
H\big( W_{k}, F_{d_k} | Y_{k}^{T} , U_{k} \big)  \leq  1 + P_{e,T} (TR_{k} + B) = T \epsilon_{T}
\end{align}
%%%
where both $P_{e,T}$ and $\epsilon_{T}$ approach zero as $T$ approaches infinity.
%%%
Let us now define a side information variable $S_{k}$ which is independent of $W_{k}$.
%%%%
The side information $S_{k}$ is provided to user $k$ through a genie, and will be specified later on.
%%%%
It follows that
%%%
\begin{align}
\nonumber
TR_{k} + H\big( F_{d_k} | U_{k} , S_{k}  \big)   & = H\big( W_{k} \big) + H\big( F_{d_k} | U_{k} , S_{k}  \big)   \\
%%%
\nonumber
 & = H\big( W_{k}, F_{d_k} | U_{k} , S_{k}  \big)   \\
%%%
\nonumber
 & = I \big( W_{k}, F_{d_k}  ;  Y_{k}^{T}  | U_{k} , S_{k}  \big) + H\big( W_{k}, F_{d_k} | Y_{k}^{T} , U_{k} , S_{k}  \big) \\
%%%
\label{eq:single_user_bound}
& \leq  I \big( W_{k}, F_{d_k}  ;  Y_{k}^{T}  | U_{k} , S_{k}  \big) +  T \epsilon_{T}.
\end{align}
%%%
Now let us consider a subset of users $[s]$ with distinct demands, for some $s \in [\min\{K,N\}]$.
%%%
From  the  single-user bound in \eqref{eq:single_user_bound}, we obtain a multi-user bound for such subset as
%%%
\begin{equation}
\label{eq:sum_H_I}
\sum_{k = 1}^{s}  H\big( F_{d_k} | U_{k} , S_{k}  \big)  \leq \sum_{k = 1}^{s} 
\Big[ I \big( W_{k}, F_{d_k}  ;  Y_{k}^{T}  | U_{k} , S_{k}  \big)  -  T \big( R_{k} -  \epsilon_{T} \big) \Big].
\end{equation} 
%%%
Next, we wish to find an upper bound for the right-hand-side of \eqref{eq:sum_H_I}, and a lower bound for the 
left-hand-side of the same inequality.
%%%
To this end, we apply a symmetrization step over file demands and user orders, which 
is required to bound below the left-hand-side in \eqref{eq:sum_H_I}.

Let $p:[s] \rightarrow [s]$ be a permutation over the subset of users $[s]$,
and $\mathcal{P}_{s}$ be the corresponding set of all $s!$ user permutations.
%%%
Similarly, $q:[N] \rightarrow [N]$ is  a permutation over the set of files $[N]$,
and $\mathcal{P}_{N}$ is  the corresponding set of all $N!$ file permutations.
%%%
For any pair of permutations $(p,q) \in \mathcal{P}_{s} \times \mathcal{P}_{N} $, 
suppose that each user $p(k)$ demands  the file $F_{q(k)}$.
%%%
From \eqref{eq:sum_H_I}, we write
%%%
\begin{equation}
\label{eq:sum_H_I_perm}
\sum_{k = 1}^{s}  H\big( F_{q(k)} | U_{p(k)} , S_{p(k)}  \big)  \leq  \sum_{k = 1}^{s} 
\Big[ I \big( W_{p(k)}, F_{q(k)}  ;  Y_{p(k)}^{T}  | U_{p(k)} , S_{p(k)}  \big)  -  T \big( R_{p(k)} -  \epsilon_{T} \big) \Big].
\end{equation} 
%%%
Taking the average of  both sides in \eqref{eq:sum_H_I_perm} over all possible permutations 
$(p,q) \in \mathcal{P}_{s} \times \mathcal{P}_{N} $, we obtain
%%%
\begin{multline}
\label{eq:sum_H_I_perm_av}
\frac{1}{s! N !} \sum_{(p,q) \in \mathcal{P}_{s} \times \mathcal{P}_{N}}  \sum_{k = 1}^{s}  H\big( F_{q(k)} | U_{p(k)} , S_{p(k)}  \big)  \leq \\
\frac{1}{s! N !} \sum_{(p,q) \in \mathcal{P}_{s} \times \mathcal{P}_{N}}   \sum_{k = 1}^{s} 
\Big[ I \big( W_{p(k)}, F_{q(k)}  ;  Y_{p(k)}^{T}  | U_{p(k)} , S_{p(k)}  \big)  -  T \big( R_{p(k)} -  \epsilon_{T} \big) \Big].
\end{multline} 
%%%
In what follows, we set the side information variable $S_{p(k)}$ for each user $p(k)$ 
as
%%%
\begin{equation}
\label{eq:side_information}
S_{p(k)} =  \big(W_{p(i)} ,F_{q(i)}, U_{p(i)}  : i \in [k-1]\big)
\end{equation}
consisting of intended messages, demanded files and cache contents of all users that precede user $p(k)$ in the permutation order.
%%%
Note that the independence between $S_{p(k)}$ and  $W_{p(k)}$ is preserved. 
%%%
Moreover, when focusing on a subset of users given by $[s]$, we assume that 
$p(s+1) = s+1$ and we use  $S_{s+1}$  to denote side information that contains intended messages, demanded files and cache contents of all users in $[s]$.
%%%
Next, we separately bound each side of the inequality in \eqref{eq:sum_H_I_perm_av}.
%%%
\begin{comment}
Taking the average of bounds in \eqref{eq:sum_H_I_perm} over all possible permutation pairs $(p,q) \in \mathcal{P}_{s} \times \mathcal{P}_{N} $, we obtain
%%%
\begin{equation}
\label{eq:sum_H_I_perm_star}
\sum_{k = 1}^{s} H^{\star} \big( F_{k}^{\star} | U_{k}^{\star} , S_{k}^{\star}  \big)  \leq  \sum_{k = 1}^{s} 
\Big[ I^{\star} \big( W_{k}^{\star}, F_{k}^{\star}  ;  Y_{k}^{\star T} | U_{k}^{\star} , S_{k} ^{\star} \big)  -  T \big( R_{k} -  \epsilon_{T} \big) \Big],
\end{equation}
%%%
where in the above, we define  
\begin{align}
H^{\star} \big( F_{k}^{\star} | U_{k}^{\star} , S_{k}^{\star}  \big)  & \triangleq \frac{1}{s! N !} 
\sum_{(p,q) \in \mathcal{P}_{s} \times \mathcal{P}_{N}}   H\big( F_{q(k)} | U_{p(k)} , S_{p(k)}  \big) \\
%%%
 I^{\star} \big( W_{k}^{\star}, F_{k}^{\star}  ;  Y_{k}^{\star T} | U_{k}^{\star} , S_{k} ^{\star} \big) & \triangleq  \frac{1}{s! N !} \sum_{(p,q) \in \mathcal{P}_{s} \times \mathcal{P}_{N}}   I \big( W_{p(k)}, F_{q(k)}  ;  Y_{p(k)}^{T}  | U_{p(k)} , S_{p(k)}  \big).
\end{align}
\end{comment}
%%%
\subsection{Bounding the right-hand-side of \eqref{eq:sum_H_I_perm_av}}
To this end, we  present the following lemma.
%%%
\begin{lemma}
\label{lemma:degraded_with_caches}
For any pair of users $k$ and $j$ in $[K]$, such that $k \leq j$, we have 
\begin{equation}
\label{eq:degraded_with_caches}
I \big( W_{k}, F_{d_k}  ;  Y_{k}^{T}  | U_{k} , S_{k}  \big)  \leq  I \big( W_{k}, F_{d_k}  ;  Y_{j}^{T}  | U_{k} , S_{k}  \big).
\end{equation}
\end{lemma}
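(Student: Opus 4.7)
The plan is to reduce the lemma to a conditional data processing inequality on a physically degraded broadcast channel. The key observation is that the SNR ordering $\SNR_{1}\leq\SNR_{2}\leq\cdots\leq\SNR_{K}$ together with the standard fact that the Gaussian BC is stochastically degraded allows us to replace the channel, without loss of generality for information-theoretic quantities, by a physically degraded one in which $X(t)\to Y_{j}(t)\to Y_{k}(t)$ is a Markov chain at every time $t$ whenever $k\leq j$. Extending this per-letter relation to the full blocks, and using the memorylessness of the noise sequences, we obtain a length-$T$ Markov chain $X^{T}\to Y_{j}^{T}\to Y_{k}^{T}$.

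Next I would establish that conditioning on $(U_{k},S_{k})$ does not break the degradation structure. The cache content $U_{k}$ is, by the definition of a code, a deterministic function of $(F_{1},\ldots,F_{N})$, and the side information $S_{k}$ defined in \eqref{eq:side_information} is a function of cache contents, files and non-content messages alone. In particular, neither $U_{k}$ nor $S_{k}$ depends on the channel noise processes $Z_{1}^{T},\ldots,Z_{K}^{T}$, so the conditional law $p(y_{k}^{T}\mid y_{j}^{T},w_{k},f_{d_{k}},u_{k},s_{k})$ collapses to $p(y_{k}^{T}\mid y_{j}^{T})$. Hence the conditional Markov chain $(W_{k},F_{d_{k}})\to Y_{j}^{T}\to Y_{k}^{T}\mid (U_{k},S_{k})$ holds.

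With the Markov chain in hand, the conditional data processing inequality immediately yields
\begin{equation*}
I\bigl(W_{k},F_{d_{k}};Y_{k}^{T}\mid U_{k},S_{k}\bigr)\leq I\bigl(W_{k},F_{d_{k}};Y_{j}^{T}\mid U_{k},S_{k}\bigr),
\end{equation*}
which is exactly \eqref{eq:degraded_with_caches}.

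The only point requiring some care -- and what I expect to be the main (minor) obstacle -- is the stochastic-versus-physical degradation subtlety: the statement concerns mutual information terms that are invariant under the choice of joint distribution of $(Y_{k}^{T},Y_{j}^{T})$ given $X^{T}$ as long as the marginals $p(y_{k}^{T}\mid x^{T})$ and $p(y_{j}^{T}\mid x^{T})$ are preserved. One should therefore argue explicitly that each term in \eqref{eq:degraded_with_caches} depends only on these marginal conditional laws (which are what the code and the rates see), so that replacing the actual channel by its physically degraded equivalent is justified. Once this is stated, the rest of the argument is the standard two-line application of the data processing inequality.
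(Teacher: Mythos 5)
Your proof is correct and takes essentially the same route as the paper's: both establish the Markov chain $(W_k,F_{d_k}) \to X^T \to Y_j^T \to Y_k^T$ from degradedness, observe that conditioning on $(U_k,S_k)$ preserves it, and apply the data processing inequality. The stochastic-versus-physical degradation subtlety you flag is handled in the paper by a footnote that passes to an equivalent physically degraded channel, exactly as you anticipate.
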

%%%
\begin{proof}
The inequality in \eqref{eq:degraded_with_caches} follows directly from the degradedness of the physical channel.
%%%
In particular, by considering the physical channel in isolation of the caches, we have the Markov chain\footnote{For rigour, there exists 
a random variable $\tilde{Y}_{j}^{T} \sim Y_{j}^{T}$ such that \eqref{eq:Markov_degraded} holds while replacing $Y_{j}^{T}$
with $\tilde{Y}_{j}^{T}$ \cite{ElGamal2011}. Without loss of generality, we use $Y_{j}^{T}$ instead of $\tilde{Y}_{j}^{T}$ and  assume that \eqref{eq:Markov_degraded} holds.}  $\big( W_{k}, F_{d_k} \big) \rightarrow X^{T}   \rightarrow Y_{j}^{T}    \rightarrow Y_{k}^{T}$.
%%%
By providing $\big(U_{k} , S_{k}   \big) $ as side information to both users $k$ and $j$, this degradedness is not altered,
and the following Markov chain holds
\begin{equation}
\label{eq:Markov_degraded}
\big( W_{k}, F_{d_k} \big) \rightarrow \big( X^{T},U_{k} , S_{k} \big)   \rightarrow \big( Y_{j}^{T},U_{k} , S_{k}  \big)  
 \rightarrow \big( Y_{k}^{T},U_{k} , S_{k} \big).  
\end{equation}
%%%
It follows that
%%%
\begin{align}
\nonumber
I \big( W_{k}, F_{d_k}  ;  Y_{k}^{T}  | U_{k} , S_{k}  \big)  & =  I \big( W_{k}, F_{d_k}  ;  Y_{k}^{T}, U_{k} , S_{k}   \big)  -  
I \big( W_{k}, F_{d_k}  ;  U_{k} , S_{k}   \big)  \\
%%%
\label{eq:degradedness_proof}
&  \leq  I \big( W_{k}, F_{d_k}  ;  Y_{j}^{T}, U_{k} , S_{k}   \big)  -  
I \big( W_{k}, F_{d_k}  ;  U_{k} , S_{k}   \big)  \\
%%%
&  =  I \big( W_{k}, F_{d_k}  ;  Y_{j}^{T} | U_{k} , S_{k}   \big)
\end{align}
%%%
where the inequality in \eqref{eq:degradedness_proof} is due to \eqref{eq:Markov_degraded} and the data processing inequality.
%%%
\end{proof}
%%%
Equipped with the above lemma and focusing on an arbitrary permutation pair  $(p,q) \in \mathcal{P}_{s} \times \mathcal{P}_{N} $, the corresponding term on the right-hand-side of 
\eqref{eq:sum_H_I_perm_av} is bounded as:
%%%
\begin{align}
\nonumber
 \sum_{k = 1}^{s}  I \big( W_{p(k)}, F_{q(k)}  ;  Y_{p(k)}^{T}  | U_{p(k)} , S_{p(k)}  \big) & \leq 
  \sum_{k = 1}^{s}  I \big( W_{p(k)}, F_{q(k)}  ;  Y_{s}^{T}  | U_{p(k)} , S_{p(k)}  \big)  \\
  %%%
  \nonumber
  & =  \!   \sum_{k = 1}^{s}  \!  h \big( Y_{s}^{T}  | U_{p(k)} , S_{p(k)}  \big)  \! -  \!
   h \big(   Y_{s}^{T}  | U_{p(k)} , S_{p(k)}, W_{p(k)}, F_{q(k)}  \big)  \\
   %%%
   \nonumber
 & \leq   \sum_{k = 1}^{s-1}  h \big( Y_{s}^{T}  | U_{p(k)} , S_{p(k)}  \big)  -
   h \big(   Y_{s}^{T}  | U_{p(k+1)},S_{p(k+1)} \big)   \\ 
    \label{eq:MI_ub_converse_0_1}
 & \quad + h \big( Y_{s}^{T}  | U_{p(s)} , S_{p(s)}  \big)  -
   h \big(   Y_{s}^{T}  | S_{s+1} \big) \\
   %%%
    \label{eq:MI_ub_converse_0_2}
 & =    h \big( Y_{s}^{T}  | U_{p(1)}  \big)  -
   h \big(   Y_{s}^{T}  | S_{s+1} \big)   \\
    %%%
 \label{eq:MI_ub_converse_0}
 & =    h \big( Y_{s}^{T}  | U_{p(1)}  \big)  -
   h \big(   Z_{s}^{T}  \big)   \\
    %%%
    \nonumber
 & =    I \big( X^{T} ; Y_{s}^{T}  | U_{p(1)}  \big)    \\
 %%%
 \label{eq:MI_ub_converse}
  & \leq T \log(1 + P^{\alpha_{s}}).
\end{align}
%%%
In \eqref{eq:MI_ub_converse_0_1}, we have used $S_{p(k+1)} = \big( S_{p(k)},W_{p(k)} ,F_{q(k)} ,U_{p(k)} 
\big)$ for all $k\in [s-1]$, which holds by definition; the fact that conditioning does not increase differential entropy; and the definition of  $S_{s+1} $.
%%%
The equality in \eqref{eq:MI_ub_converse_0} follows from \eqref{eq:MI_ub_converse_0_2} by recalling that $S_{s+1} $ contains messages and files intended to all users in $[s]$, mapping directly to $X^{T}$, which 
in turn is removed from $ Y_{s}^{T}  $.
%%%

As  \eqref{eq:MI_ub_converse}  holds for all permutations $(p,q) \in \mathcal{P}_{s} \times \mathcal{P}_{N}$, 
and since for any such permutation in \eqref{eq:sum_H_I_perm_av} we have  $\sum_{k \in [s]} R_{p (k)} = \sum_{k \in [s]} R_{k}$,
it follows that each of the inner sums (over $k$) on the right-hand-side of \eqref{eq:sum_H_I_perm_av}  is bounded by the same term.  
%%%
Therefore, we obtain the bound  
%%%
\begin{multline}
\label{eq:conv_caches_mutual_inf}
\frac{1}{s! N !} \sum_{(p,q) \in \mathcal{P}_{s} \times \mathcal{P}_{N}}   \sum_{k = 1}^{s} 
\Big[ I \big( W_{p(k)}, F_{q(k)}  ;  Y_{p(k)}^{T}  | U_{p(k)} , S_{p(k)}  \big)  -  T \big( R_{p(k)} -  \epsilon_{T} \big) \Big] 
%%%
\leq \\ 
%%%
T \left[ \log(1 + P^{\alpha_{s}})  - \sum_{k = 1}^{s}  \big( R_{k} -  \epsilon_{T} \big) \right].
\end{multline}
%%%
\subsection{Bounding the left-hand-side of \eqref{eq:sum_H_I_perm_av}}
%%%
It is evident that for every $k \in [s]$ and $(p,q) \in \mathcal{P}_{s} \times \mathcal{P}_{N}$, we have  
%%%
\begin{equation}
\label{eq:conv_caching_entropy_cond}
H\big( F_{q(k)} | U_{p(k)} , S_{p(k)}  \big)  = H\big( F_{q(k)} | U_{p(1)}, \ldots,  U_{ p(k)} , F_{q(1)}, \ldots, F_{q(k-1)} \big),
\end{equation}
%%%
which holds since messages are independent of files and cache contents (see \eqref{eq:side_information}).
%%%
From the equality in \eqref{eq:conv_caching_entropy_cond}, it  can be seen that the 
 left-hand-side of \eqref{eq:sum_H_I_perm_av} is in fact a lower bound on the number of bits that must be
delivered (i.e.  load) in a conventional share-link setting with $s$ users, up to a decoding error term  \cite[eq. (30)]{Yu2019}.
%%%
We hence employ the results and techniques of \cite{Yu2019} to obtain:
%%%
\begin{align}
%%%
\label{eq:av_cond_entropy_bound_1}
\frac{1}{s! N !} \sum_{(p,q) \in \mathcal{P}_{s} \times \mathcal{P}_{N}}  \sum_{k = 1}^{s}  H\big( F_{q(k)} | U_{p(k)} , S_{p(k)}  \big)  
& \geq  B  \cdot  \left( s' - 1 + a - \frac{s' (s'-1) - l(l-1) + 2as' }{2(N - l + 1)} M  \right)  \\
%%%
\label{eq:av_cond_entropy_bound_2}
&  \geq 
 \frac{B}{2.01} \cdot \left(   \frac{N - M}{M} \big( 1 - (1  -  M / N)^{s} \big) \right) \\
 %%%
 \label{eq:av_cond_entropy_bound_3}
&  \geq \frac{B}{2.01} \cdot \mathrm{conv} \left(  \frac{\binom{K}{K \mu + 1} - \binom{K-s}{K \mu + 1}}{\binom{K}{K \mu }} \right)
%%%
\end{align}
%%%
where the bound in \eqref{eq:av_cond_entropy_bound_1} holds for any parameters $s' \in [s]$ and $a \in [0,1]$, while 
$l \in [s']$ is the minimum value that satisfies: $\big(s' (s'-1) - l(l-1) + 2as' \big)/2 \leq (N - l + 1)l$.
%%%
The bound in \eqref{eq:av_cond_entropy_bound_1} follows directly from \cite[Lem. 3]{Yu2019}.
%%%
On the other hand, going from \eqref{eq:av_cond_entropy_bound_1}  to  within a multiplicative factor of $2.01$ from the decentralized load in \eqref{eq:av_cond_entropy_bound_2} holds due to \cite[Lem. 1]{Yu2019}. 
%%%
Finally, the inequality in \eqref{eq:av_cond_entropy_bound_3} follows from the results in \cite{Yu2018} (see also \cite[Appendix G]{Yu2019} where a similar step is used).\footnote{Note that the lower convex envelope in \eqref{eq:av_cond_entropy_bound_3} is defined in a similar manner to \eqref{eq:tau_ub} in Definition \ref{def:GNDT_ub}.} 
%%%
\subsection{Combining bounds}
%%%
From \eqref{eq:sum_H_I_perm_av}, \eqref{eq:conv_caches_mutual_inf} and \eqref{eq:av_cond_entropy_bound_3}, and by taking the limit 
$T \to \infty$, we obtain
%%%
\begin{equation}
\label{eq:conv_sum_rate_UB}
 \sum_{k = 1}^{s} R_{k} +  \frac{1}{2.01 \cdot\mathcal{T} } \cdot \mathrm{conv} \left(  \frac{\binom{K}{K \mu +1} - \binom{K-s}{K \mu +1}}{\binom{K}{K \mu}} \right)  \leq \log(1 + P^{ \alpha_{s} } ).
\end{equation}
%%%
In the GDoF-GNDT limit, the bound in \eqref{eq:conv_sum_rate_UB} translates to 
%%%%%%
\begin{equation}
\label{eq:conv_sum_GDoF_UB}
 \sum_{k = 1}^{s} r_{k}  +  \frac{1}{2.01 \cdot \tau} \cdot \mathrm{conv} \left(  \frac{\binom{K}{K \mu +1} - \binom{K-s}{K \mu +1}}{\binom{K}{K \mu}} \right)  \leq  \alpha_{s}.
\end{equation}
%%%%%%
The above holds for any $s \in [\min \{ K , N \}]$. These bounds fully describe the lower bound in \eqref{eq:order_optimality}
whenever $N \geq K$.
%%%%%%
For $N < K$, we require the additional bounds derived next.
%%%%%%%
\subsection{Remaining bounds for $N < K$}
%%%%%%%
Let us now consider a subset of users $[s]$, for some $s \in [\min\{K,N\} + 1 : K ]$. 
%%%
Applying the exact above steps to the first $N$ users in $[s]$, which request distinct files,  
we obtain
%%%
%%%
\begin{align}
\label{eq:sum_H_I_N}
\frac{B}{2.01} \cdot \mathrm{conv} \left(  \frac{\binom{K}{K \mu + 1} - \binom{K-N}{K \mu + 1}}{\binom{K}{K \mu }} \right)+  
\sum_{k = 1}^{N} T(R_{k} - \epsilon_{T})  &  \leq 
 \sum_{k = 1}^{N}  I \big( W_{p(k)}, F_{q(k)}  ;  Y_{s}^{T}  | U_{p(k)} , S_{p(k)}  \big) \\
 %%%
\label{eq:sum_H_I_N_2}
 & \leq h \big( Y_{s}^{T}  | U_{p(1)}  \big)  -
   h \big(   Y_{s}^{T}  | U_{N+1},S_{N+1} \big).
\end{align}
%%%
The bound in \eqref{eq:sum_H_I_N} follows from  \eqref{eq:sum_H_I_perm_av} after rearranging,
bounding the left-hand-side using \eqref{eq:av_cond_entropy_bound_3}, and bounding the right-hand-side 
by fixing a permutation pair $(p,q)$ that maximizes the average.
%%%
The bound in \eqref{eq:sum_H_I_N_2} follows by employing the same steps used to obtain \eqref{eq:MI_ub_converse}.

%%%
For the remaining users in $[N+1 : s]$, let us define their side information variables as
%%%
\begin{equation}
S_{k} = \big(W_{i} ,F_{d_{i}}, U_{i} : i \in [k-1] \big).
\end{equation}
We also use $S_{s+1}$ to denote a side information variable comprising of messages, requested files and cache contents for all users in $[s]$.
%%%
The non-content sum-rate is bounded above as
%%%
\begin{align}
 \label{eq:non_content_sum_rate_bound_1}
\sum_{k = N+1}^{s} T(R_{k} - \epsilon_{T})
& \leq  \sum_{k = N+1}^{s}  I \big( W_{k}, F_{d_{k}}  ;  Y_{s}^{T}  | U_{k} , S_{k}  \big)  \\
%%%
\nonumber
&\leq   \sum_{k = N+1}^{s}  h \big( Y_{s}^{T}  | U_{k} , S_{k}  \big)  -
 h \big(   Y_{s}^{T}  | U_{k+1},S_{k+1}\big)   \\ 
 %%%
 \label{eq:non_content_sum_rate_bound_3}
& =  h \big( Y_{s}^{T}  | U_{N+1} , S_{N+1}  \big)  -
 h \big(   Y_{s}^{T}  | S_{s+1}\big)
\end{align}
%%%
where the inequality in \eqref{eq:non_content_sum_rate_bound_1} follows from the single user bounds in \eqref{eq:single_user_bound} and Lemma \ref{lemma:degraded_with_caches}.
%%% 
By adding the bounds in \eqref{eq:sum_H_I_N_2} and \eqref{eq:non_content_sum_rate_bound_3}, we obtain
%%%
\begin{align}
\frac{B}{2.01} \cdot \mathrm{conv} \left(  \frac{\binom{K}{K \mu + 1} - \binom{K-N}{K \mu + 1}}{\binom{K}{K \mu }} \right)+  
\sum_{k = 1}^{s} T(R_{k} - \epsilon_{T})  &  \leq 
h \big( Y_{s}^{T}  | U_{p(1)}  \big)  - h \big(   Y_{s}^{T}  | S_{s+1} \big)  \\
  &  \leq  T \log(1 + P^{\alpha_{s}})
\end{align}
%%%
%%%
which in the GDoF-GNDT limit, translates to 
%%%%%%
\begin{equation}
\label{eq:conv_sum_GDoF_UB_N}
 \sum_{k = 1}^{s} r_{k}  +  \frac{1}{2.01 \cdot \tau} \cdot \mathrm{conv} \left(  \frac{\binom{K}{K \mu +1} - \binom{K-N}{K \mu +1}}{\binom{K}{K \mu}} \right)  \leq  \alpha_{s}.
\end{equation}
%%%%%%
The bound in \eqref{eq:conv_sum_GDoF_UB_N}  holds for all $s \in [N+1 : K]$. 
%%%%%
By rearranging the terms in \eqref{eq:conv_sum_GDoF_UB} and \eqref{eq:conv_sum_GDoF_UB_N}, and taking the tightest of such bounds over all $s \in [K]$, we obtain a lower bound given by
%%%%%%%
\begin{align}
\label{eq:conv_tau_LB}
2.01 \cdot \tau  & \geq \max_{s \in [K]} \left\{    \frac{1}{\big(\alpha_{s} - \sum_{k = 1}^{s} r_{s} \big)} \cdot \mathrm{conv} \left(  \frac{\binom{K}{K \mu +1} - \binom{K-\min\{ s, N \}}{K \mu +1}}{\binom{K}{K \mu}} \right) \right\}
%%%
\end{align}
%%%%%%
where  $ \sum_{k = 1}^{s} r_{k}  \leq \alpha_{s}$ for all $s \in [K]$.
%%%
The right-hand-side of \eqref{eq:conv_tau_LB} coincides with  $\tau^{\mathrm{ub}}(\mathbf{r} ; \mu,  \bm{\alpha})$ in \eqref{eq:tau_ub}.
%%%
This completes the converse proof.
%%%%%%%
\section{Conclusion}
%%%%%%%
In this work, we introduced the problem of wireless coded caching under mixed cacheable content and uncacheable non-content types of traffic.
%%%%%%%
Focusing on networks in which the physical channel is modelled by a degraded GBC, 
we  proposed a caching and delivery strategy based on the separation principle, which isolates the coded caching and multicasting problem from the physical layer transmission problem.
%%%
We proved that the proposed strategy achieves near optimal performances in the information-theoretic sense.
%%%%
Through our analysis, we revealed \emph{topological holes} arising due to asymmetries in wireless network topologies, 
which enable the transmission of non-content messages while incurring no loss in terms of content delivery time. 
%%%%
The extension of this result to other networks, including multi-transmitter and multi-antenna networks, is of high interest. 
%%%%
In such networks, the performance is characterized not only by channel strength parameters (i.e. topology), but also by
the quality of channel state information at the transmitters (CSIT)---see, e.g., \cite{Zhang2015,Zhang2017,Piovano2017,Ngo2018,Lampiris2018a,
Piovano2019,Lampiris2017,Piovano2020,Bergel2018,Sengupta2017,Zhang2019}.
%%%%
This leads to an explosion in the number of system parameters in general (i.e. channel strengths and CSIT qualities), rendering the corresponding problems extremely challenging.
%%%
One way to control the number of system parameters is to enforce symmetry (e.g. equal channel strengths, CSIT qualities, or both), as done in most of the aforementioned works. 
%%%
Nevertheless, apart from being an oversimplification, symmetry also obscures the role of topological holes, whose study necessarily requires venturing beyond symmetric settings.
%%%
The prospect of unveiling the role of topological holes in asymmetric multi-transmitter and multi-antenna  cache-aided networks with mixed traffic is both intriguing and not 
yet explored.

%%%%%%%
\appendix
%%%
\section*{Appendices}
\section{Unicast and Multiple Multicast GDoF Region}
\label{appendix:GDoF_region}
%%%
In this appendix, we present a proof for Theorem \ref{theorem:GDoF_phy}.
%%%
\subsection{Converse}
%%%
Starting with the converse, we invoke Fano's inequality from which we obtain: 
%%%
\begin{align}
\label{eq:Fano_GBC_1}
T(R_{i} - \epsilon_{T}) + T \sum_{\mathcal{S} \in \Sigma_{i}} (R_{\mathcal{S}} -
\epsilon_{T})  & \leq   I
 \big( W_{i}, \{ W_{\mathcal{S}} : \mathcal{S} \in \Sigma_{i} \} ; Y_{i}^{T} \big) \\
 %%%
 \label{eq:Fano_GBC_2}
 & \leq  I
 \big( W_{i}, \{ W_{\mathcal{S}} : \mathcal{S} \in \Sigma_{i} \} ; Y_{k}^{T}  \big)  \\
 %%%%
 \label{eq:Fano_GBC_3}
 & \leq  I
 \big( W_{i}, \{ W_{\mathcal{S}} : \mathcal{S} \in \Sigma_{i} \} ; Y_{k}^{T} | W_{i}'   \big)
\end{align}
%%%
where $T$ is the number of channel uses over which the communication occurs,
$\epsilon_{T}$ is an error term that approaches zero as $T \to \infty$, 
and $W_{i}'  \triangleq  \{W_{j}, W_{\mathcal{S}} : 
\mathcal{S} \in \Sigma_{j} , j \in [i - 1]  \}$ is a side information variable.
%%%
The inequality in \eqref{eq:Fano_GBC_2} holds for all $k \geq i$ due to the degradedness of the physical channel and the order in \eqref{eq:order_SNR}, while \eqref{eq:Fano_GBC_3} holds since $W_{i}'$ is independent of $W_{i}$
and $\{ W_{\mathcal{S}} : \mathcal{S} \in \Sigma_{i} \}$.
%%%

%%%
For any $k  \in [K]$,  adding up the bounds obtained from \eqref{eq:Fano_GBC_3} for all $i \in [k]$, we obtain 
%%%
\begin{align}
\nonumber
T \sum_{i \in [k]} (R_{i} - \epsilon_{T})  + T \sum_{\mathcal{S} \in \cup_{i \in [k]} \Sigma_{i}} (R_{\mathcal{S}} - \epsilon_{T})   & \leq 
\sum_{i \in [k]} I \big( W_{i}, \{ W_{\mathcal{S}} : \mathcal{S} \in \Sigma_{i} \} ; Y_{k}^{T} | W_{i}'   \big)
\\
\nonumber
& =   I
 \big( \{ W_{i}, W_{\mathcal{S}} : \mathcal{S} \in \Sigma_{i}, i \in [k] \} ; Y_{k}^{T} \big)  \\ 
 %%%
  \label{eq:Fano_GBC_6}
 & \leq T \log (1 + P^{\alpha_{k}})
 %%%
\end{align}
%%%
where we implicitly assume that $\Sigma_{i} = \emptyset$ for all 
$i \in [K - \sigma + 2 : K]$.
%%%
From the bound in \eqref{eq:Fano_GBC_6},
%%%
it follows that the capacity region  $\mathcal{C}^{\mathrm{PHY}}(\sigma,\bm{\alpha},P)$ is contained in the outer bound
$\mathcal{C}^{\mathrm{PHY}}_{\mathrm{out}}(\sigma,\bm{\alpha},P)$, described by all rate tuples 
$(R_{k}: k \in [K], R_{\mathcal{S}} : \mathcal{S} \in \Sigma ) \in \mathbb{R}_{+}^{K + \binom{K}{\sigma}}$ satisfying:
%%%
\begin{equation}
\label{eq:capacity_outer_phy}
\begin{aligned}
\sum_{i \in [k]} R_{i} + \sum_{\mathcal{S} \in \cup_{i \in [k]} \Sigma_{i} }  R_{\mathcal{S}}  & \leq \log(1 + P^{\alpha_{k}}), \ \forall k \in [K - \sigma + 1]\\
\sum_{i \in [k]}  R_{i} + \sum_{\mathcal{S} \in \Sigma}  R_{\mathcal{S}}  & \leq \log(1 + P^{\alpha_{k}}), \ \forall k \in [K - \sigma + 2 : K].
\end{aligned}
\end{equation}
%%%%
In the GDoF sense, $\mathcal{C}^{\mathrm{PHY}}_{\mathrm{out}}(\sigma,\bm{\alpha},P)$ translates to the outer bound denoted by 
$\mathcal{D}^{\mathrm{PHY}}_{\mathrm{out}}(\sigma,\bm{\alpha})$, which coincides with the region
characterized by the inequalities in \eqref{eq:GDoF_region_phy}.
%%%
\subsection{Achievability}
For the achievability, we use message combining and superposition coding at the transmitter, and successive decoding at the receivers.
%%%
In particular, we construct $K$ codewords as:
%%%
\begin{equation}
\begin{aligned}
\big\{ W_{k}, W_{\mathcal{S}} : \mathcal{S} \in \Sigma_k \big\} & \rightarrow X_{k}^{T},   \ \forall k \in [K - \sigma + 1] \\
%%%
W_{k} & \rightarrow X_{k}^{T},   \ \forall k \in [K - \sigma + 2 : K] 
\end{aligned}
\end{equation}
%%%
where each combined message $\{W_{k}, W_{\mathcal{S}} : \mathcal{S} \in \Sigma_k \} $ has a rate of  
${R_{k} + \sum_{ \mathcal{S} \in \Sigma_k } R_{\mathcal{S}}}$, and 
each codeword $X_{k}^{T} \triangleq \big(X_{k}(1) , \ldots, X_{k}(T)\big)$ is drawn from an independent Gaussian codebook with unit average power.
%%%
The transmit signal $X^{T} \triangleq \big(X(1) , \ldots, X(T)\big)$ is then constructed as:
%%%%
\begin{equation}
X(t) = \sum_{ k\in [K]} \sqrt{q_{k}} X_{k} (t) 
\end{equation}
%%%%
where $q_{k} \geq 0$ is the power allocated to the $k$-th codeword, such that $\sum_{k \in [K]} q_{k}  \leq  1$.
%%%
On the other end, each user $k$ receives the noisy signal: $Y_{k}(t) =  \sqrt{P^{\alpha_k}} \sum_{ i \in [K]} \sqrt{q_{i}} X_{i} (t) + Z_{k}(t)$.
%%%

%%%
Each user $k \in [K]$ decodes the signals $X_{1}^{T}, X_{2}^{T}, \ldots, X_{k}^{T}$, successively in that order. 
%%%
Assuming successful decoding, each user $k$  recovers all messages in
%%%
\begin{equation}
\label{eq:decoded_message_set}
%%%
\big\{ W_{i}: i \in [k], \ W_{\mathcal{S}} : \mathcal{S}\in \cup_{i \in [ k ]} \Sigma_{i} \big\}
%%%
\end{equation}
%%%
which includes  all messages desired by user $k$,
i.e. $\{ W_{k}, W_{\mathcal{S}} : \mathcal{S}\in \Sigma, k \in \mathcal{S} \}$.
%%%
From the above, it can be seen that each codeword $X_{k}^{T}$ is decoded by all users in $[k:K]$, while treating
interference from $X_{k+1}^{T},\ldots,X_{K}^{T}$ as noise.
%%%
Therefore, messages encoded in the signal $X_{k}^{T}$ achieve all rates with a sum not exceeding
%%%
\begin{equation}
\label{eq:sum_rate_user_k}
\min_{i \in [k:K]} \left\{ \log \left(1 +  \frac{P^{\alpha_{i}} q_{i} }{1 + P^{\alpha_{i}}\sum_{j \in [k+1:K]} q_{j} } \right)  \right\} = 
\log \left(1 +  \frac{P^{\alpha_{k}} q_{k} }{1 + P^{\alpha_{k}}\sum_{j \in [k+1:K]} q_{j} } \right)
\end{equation}
%%%
where the above equality follows from the fact that $\alpha_{k} \leq \alpha_{i}$, for all $i \in [k:K]$.
%%%
The above described strategy hence achieves the rate region described by all non-negative rate tuples that
satisfy
%%%
\begin{equation}
\begin{aligned}
R_{k} + \sum_{\mathcal{S} \in \Sigma_{k} } R_{\mathcal{S}}  & \leq 
\log \left(1 +  \frac{P^{\alpha_{k}} q_{k} }{1 + P^{\alpha_{k}}\sum_{j \in [k+1:K]} q_{j} } \right), \ \forall k \in [K - \sigma + 1]\\
R_{k}  & \leq \log \left(1 +  \frac{P^{\alpha_{k}} q_{k} }{1 + P^{\alpha_{k}}\sum_{j \in [k+1:K]} q_{j} } \right) , \ \forall k \in [K - \sigma + 2 : K]
\end{aligned}
\end{equation}
%%%
for some feasible power allocation $\mathbf{q} \triangleq (q_{1},\ldots,q_{K})$.
%%%

%%%
We now obtain an inner bound on the above achievable rate region
which is more malleable for GDoF and constant-gap analysis.
%%%
To this end, we adopt the following power allocation:
%%%
\begin{equation}
\label{eq:power_allocation}
\begin{aligned}
q_{k}  & = P^{-\beta_{k}} - P^{-\beta_{k+1}}, \forall k \in [K-1]  \\
q_{K} &  = P^{-\beta_{K}}
\end{aligned}
\end{equation}
%%%%
where the sequence of power exponents in \eqref{eq:power_allocation} satisfies:
%%%%
\begin{align}
\label{eq:beta_variables}
0 = \beta_{1} \leq \beta_{2} \leq \cdots \leq \beta_{K} \ \ \text{and} \ \ \beta_{k+1} \leq \alpha_{k}, \ \forall k\in[K-1].
\end{align}
%%%
Recalling that $P > 1$, it can be verified that the above power allocation is feasible, and satisfies:
%%%
\begin{align}
\sum_{j \in [k:K]} q_{j}  = P^{-\beta_{k}}, \ \forall k\in[K].
\end{align}
%%%
%and satisfies $q_{1} \geq q_{2} \geq \cdots \geq q_{K}$.
%%%
Using this power allocation, the rate in \eqref{eq:sum_rate_user_k} is bounded below for all $k \in [K -1]$ as:
%%%
\begin{align}
\log \left(1 +  \frac{P^{\alpha_{k}} q_{k} }{1 + P^{\alpha_{k}}\sum_{j \in [k+1:K]} q_{j} } \right) & = 
\log \left(\frac{1 + P^{\alpha_{k}}\sum_{i \in [k:K]} q_{j} } {1 + P^{\alpha_{k}}\sum_{j \in [k+1:K]} q_{j} } \right)   \\
%%%
& = \log \left(\frac{1 + P^{\alpha_{k}} P^{-\beta_{k}}  } {1 + P^{\alpha_{k}} P^{-\beta_{k+1}}} \right)   \\
%%%
& \geq\ \log \left( \max \left\{ 1,  \frac{P^{\beta_{k+1}-\beta_{k}}  } {2} \right\} \right)   \\
& = \big(  (\beta_{k+1}-\beta_{k}) \log (P )  - 1  \big)^{+}.
\end{align}
%%%
For $k = K$, we obtain the same bound by setting $\beta_{K+1} = \alpha_{K}$, i.e.
%%%
\begin{equation}
\log \left(1 + P^{\alpha_{K} } q_{K}  \right)  = \log \left(1 + P^{\beta_{K+1} - \beta_{K} }  \right) \geq  
\big( (\beta_{K+1}  - \beta_{K} )\log\left( P  \right) - 1 \big)^{+}.
\end{equation}
%%%
This yields the inner bound  $\mathcal{C}^{\mathrm{PHY}}_{\mathrm{in}}(\sigma,\bm{\alpha},P)$, 
described by all non-negative rate tuples that satisfy: 
%%%
\begin{equation}
\label{eq:capacity_inner_phy_beta}
\begin{aligned}
R_{k} + \sum_{\mathcal{S} \in \Sigma_{k} } R_{\mathcal{S}}  & \leq 
\big( (\beta_{k+1}-\beta_{k}) \log \left(P  \right)  - 1 \big)^{+}, \ \forall k \in [K - \sigma + 1]\\
R_{k}  & \leq \big( (\beta_{k+1}-\beta_{k}) \log \left(P  \right)  - 1 \big)^{+}, \ \forall k \in [K - \sigma + 2 : K]
\end{aligned}
\end{equation}
%%%
for some feasible power exponents $\bm{\beta} \triangleq (\beta_{1},\ldots,\beta_{K})$, as defined in \eqref{eq:beta_variables}. 
%%%
In the GDoF sense, $\mathcal{C}^{\mathrm{PHY}}_{\mathrm{in}}(\sigma,\bm{\alpha},P)$ translates to 
$\mathcal{D}^{\mathrm{PHY}}_{\mathrm{in}}(\sigma,\bm{\alpha})$, 
described by all non-negative GDoF tuples that satisfy:
%%%
\begin{equation}
\label{eq:GDoF_inner_beta}
\begin{aligned}
r_{k} + \sum_{\mathcal{S} \in \Sigma_{k} } r_{\mathcal{S}}  & \leq 
\beta_{k+1}-\beta_{k}, \ \forall k \in [K - \sigma + 1]\\
r_{k}  & \leq \beta_{k+1}-\beta_{k}, \ \forall k \in [K - \sigma + 2 : K]
\end{aligned}
\end{equation}
%%%
for some feasible power allocation  $\bm{\beta}$.
%%%
By definition, we have $\mathcal{D}^{\mathrm{PHY}}_{\mathrm{in}}(\sigma,\bm{\alpha}) \subseteq \mathcal{D}^{\mathrm{PHY}}_{\mathrm{out}}(\sigma,\bm{\alpha})$. 
%%%
Nevertheless, it turns out that the two regions coincide as shown through the following result.
%%%
\begin{lemma}
\label{lemma:FM_elimination}
%%%
The achievable GDoF region $\mathcal{D}^{\mathrm{PHY}}_{\mathrm{in}}(\sigma,\bm{\alpha})$ and the outer bound $\mathcal{D}^{\mathrm{PHY}}_{\mathrm{out}}(\sigma,\bm{\alpha})$ are equal. 
%%%
\end{lemma}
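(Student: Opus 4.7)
The containment $\mathcal{D}^{\mathrm{PHY}}_{\mathrm{in}}(\sigma,\bm{\alpha}) \subseteq \mathcal{D}^{\mathrm{PHY}}_{\mathrm{out}}(\sigma,\bm{\alpha})$ is immediate, so the real work is to show $\mathcal{D}^{\mathrm{PHY}}_{\mathrm{out}}(\sigma,\bm{\alpha}) \subseteq \mathcal{D}^{\mathrm{PHY}}_{\mathrm{in}}(\sigma,\bm{\alpha})$. The plan is to carry out the Fourier--Motzkin elimination of the auxiliary variables $\bm{\beta}$ explicitly, by \emph{constructing} for each tuple in the outer region a feasible $\bm{\beta}$ under which all constraints in \eqref{eq:GDoF_inner_beta} hold. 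Specifically, given any $(r_{k}: k\in[K], r_{\mathcal{S}}: \mathcal{S}\in \Sigma)$ satisfying \eqref{eq:GDoF_region_phy}, I would set $\beta_{1} = 0$ and define the increments by saturating \eqref{eq:GDoF_inner_beta}:
\begin{equation}
\beta_{k+1} - \beta_{k} = \begin{cases} r_{k} + \sum_{\mathcal{S} \in \Sigma_{k}} r_{\mathcal{S}}, & k \in [K-\sigma+1], \\ r_{k}, & k \in [K-\sigma+2:K]. \end{cases}
\end{equation}

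Telescoping these increments, and using that $\{\Sigma_{i} : i \in [K-\sigma+1]\}$ partitions $\Sigma$ together with the convention $\Sigma_{i} = \emptyset$ for $i \in [K-\sigma+2:K]$, yields the closed form
\begin{equation}
\beta_{k+1} = \sum_{i \in [k]} r_{i} + \sum_{\mathcal{S} \in \cup_{i \in [k]} \Sigma_{i}} r_{\mathcal{S}}, \qquad k \in [K].
\end{equation}
Non-negativity of the $r$'s gives the monotonicity $\beta_{1} \leq \beta_{2} \leq \cdots \leq \beta_{K}$, and the feasibility constraints $\beta_{k+1} \leq \alpha_{k}$ for $k \in [K-1]$ are \emph{exactly} the outer-bound inequalities \eqref{eq:GDoF_region_phy}: for $k \in [K-\sigma+1]$ they match the first line of \eqref{eq:GDoF_region_phy}, and for $k \in [K-\sigma+2:K-1]$ they match the second line (since $\cup_{i\in[k]}\Sigma_{i} = \Sigma$ for such $k$). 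Hence $\bm{\beta}$ so defined lies in the feasible set \eqref{eq:beta_variables}, and by construction $(r_{k}, r_{\mathcal{S}})$ satisfies \eqref{eq:GDoF_inner_beta} (with equality), placing it in $\mathcal{D}^{\mathrm{PHY}}_{\mathrm{in}}(\sigma,\bm{\alpha})$.

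The only subtle point is to handle the boundary at $k = K$ correctly: the achievability argument fixed $\beta_{K+1} = \alpha_{K}$ as a convention, so the user-$K$ bound $r_{K} \leq \beta_{K+1} - \beta_{K}$ becomes $r_{K} \leq \alpha_{K} - \beta_{K}$, which upon substitution reproduces the last outer-bound inequality $\sum_{i \in [K]} r_{i} + \sum_{\mathcal{S} \in \Sigma} r_{\mathcal{S}} \leq \alpha_{K}$. I expect this to be the main bookkeeping obstacle, together with verifying that the index transition across $k = K-\sigma+1$ (where $\Sigma_{k}$ ceases to contribute new multicast groups) is handled consistently; everything else is routine rearrangement.
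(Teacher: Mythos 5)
Your proof is correct, and it takes a genuinely different route from the paper's. The paper mechanically runs Fourier--Motzkin elimination, projecting out $\beta_K, \beta_{K-1}, \ldots, \beta_2$ one variable at a time from the system \eqref{eq:GDoF_inner_rho_beta_1}--\eqref{eq:GDoF_inner_rho_beta_3} and verifying that the accumulated inequalities after the last step coincide with \eqref{eq:GDoF_region_phy}; it is a forward elimination. You instead give a constructive witness: starting from a tuple in the outer region, you exhibit an explicit $\bm{\beta}$ (the running prefix sums $\beta_{k+1} = \sum_{i\in[k]}\rho_i$ in the paper's $\rho$-notation) and check directly that it satisfies both the feasibility conditions \eqref{eq:beta_variables} and the per-user constraints \eqref{eq:GDoF_inner_beta}. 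This is a backward reconstruction, and it is more parsimonious: the FM route requires tracking and pruning inequalities through $K-1$ elimination rounds (and, strictly speaking, arguing that no redundant inequalities are silently dropped), whereas your approach reduces the entire converse containment to three one-line checks (monotonicity from $\rho_k \geq 0$, $\beta_{k+1}\leq\alpha_k$ from the outer bound, and the $k=K$ edge case using the convention $\beta_{K+1}=\alpha_K$). What the paper's FM approach buys is a by-product: it produces the explicit inequality description \eqref{eq:FM_elimination_final} without having to first guess the form of the outer bound, which is useful if one were trying to \emph{discover} the projected polytope rather than merely \emph{verify} it. Since the paper already has the outer bound in hand from the converse argument, your shortcut is fully adequate and arguably cleaner. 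The one bookkeeping point you flagged --- the $k=K$ boundary with $\beta_{K+1}=\alpha_K$ --- is indeed the only subtlety, and your handling of it (recovering $\sum_{i\in[K]}\rho_i\leq\alpha_K$ from $\rho_K\leq\alpha_K-\beta_K$) is exactly right.
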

%%%
Lemma \ref{lemma:FM_elimination} is proved by eliminating all power allocation 
variables in \eqref{eq:GDoF_inner_beta}  using means of Fourier-Motzkin elimination.
%%%
This yields an equivalent representation of $\mathcal{D}^{\mathrm{PHY}}_{\mathrm{in}}(\sigma,\bm{\alpha})$  that coincides with the inequalities in \eqref{eq:GDoF_region_phy}, and hence  $\mathcal{D}^{\mathrm{PHY}}_{\mathrm{out}}(\sigma,\bm{\alpha})$.
%%%
It follows that
%%%
\begin{equation}
\mathcal{D}^{\mathrm{PHY}}_{\mathrm{in}}(\sigma,\bm{\alpha}) = \mathcal{D}^{\mathrm{PHY}}_{\mathrm{out}}(\sigma,\bm{\alpha}) = \mathcal{D}^{\mathrm{PHY}}(\sigma,\bm{\alpha})
\end{equation}
which completes the proof of Theorem \ref{theorem:GDoF_phy}.
%%%
Next, we present the proof of Lemma \ref{lemma:FM_elimination}. 
%%%
\subsection{Proof of Lemma \ref{lemma:FM_elimination}} 
%%%
For convenience, let us define the new GDoF variables $\bm{\rho} \triangleq (\rho_{1}, \ldots, \rho_{K})$ as 
%%%
\begin{equation}
\label{eq:rho_def}
\begin{aligned}
\rho_{k} & \triangleq r_{k} + \sum_{\mathcal{S} \in \Sigma_{k} } r_{\mathcal{S}}, \ \forall k \in [1:K - \sigma + 1] \\
\rho_{k} & \triangleq r_{k}, \ \forall k \in [K - \sigma + 2:K].
\end{aligned}
\end{equation}
%%%%
The achievable GDoF region $\mathcal{D}^{\mathrm{PHY}}_{\mathrm{in}}(\sigma,\bm{\alpha})$ is described by the following sets of inequalities:
%%%%
\begin{align}
\label{eq:GDoF_inner_rho_beta_1}
\rho_{k} & \leq  \beta_{k+1} - \beta_{k}, \ \forall k \in [K] \\
%%%
\label{eq:GDoF_inner_rho_beta_2}
0 & \leq \alpha_{k} - \beta_{k+1}, \ \forall k \in [K-1] \\
%%%
\label{eq:GDoF_inner_rho_beta_3}
0 & \leq \beta_{k+1} - \beta_{k}, \ \forall k \in [K-1]
\end{align} 
%%%%
which capture both GDoF conditions in \eqref{eq:GDoF_inner_beta},  as well as conditions on power allocation variables in
\eqref{eq:beta_variables}. 
%%%%
Note that since $\rho_{k} \geq 0$, the set of inequalities in \eqref{eq:GDoF_inner_rho_beta_3} is redundant and hence can be ignored. 
%%%
We now proceed to eliminate $\beta_{2}, \ldots , \beta_{K}$ (recall that $\beta_{1} = 0$ and $\beta_{K+1} = \alpha_{K}$) using a Fourier-Motzkin procedure, see, e.g., 
 \cite[Appendix D]{ElGamal2011}.
%%%
This is carried out sequentially, eliminating 
$\beta_{K},\beta_{K-1}, \ldots , \beta_{2}$  in that order.
%%%
Starting with $\beta_{K}$, relevant inequalities are given by
%%%
\begin{align}
\label{eq:FM_elimination_1}
0 & \leq \alpha_{K-1} - \beta_{K} \\
\label{eq:FM_elimination_2}
\rho_{K} & \leq \alpha_{K} - \beta_{K} \\
\label{eq:FM_elimination_3}
\rho_{K-1} & \leq \beta_{K} - \beta_{K-1}.
\end{align}
%%%
We eliminate $\beta_{K} $ by adding each of the inequalities with $-\beta_{K} $ on the right-hand-side, i.e. \eqref{eq:FM_elimination_1} and \eqref{eq:FM_elimination_2}, to the inequality with $\beta_{K} $ on the right-hand-side, i.e. \eqref{eq:FM_elimination_3}. 
%%%
This yields
%%%
\begin{equation}
\begin{aligned}
\rho_{K-1} & \leq \alpha_{K-1} - \beta_{K-1} \\
\rho_{K} + \rho_{K-1} & \leq \alpha_{K} - \beta_{K-1}.
\end{aligned}
\end{equation}
%%%
After the elimination of $\beta_{K}$, we are left with the following inequalities
%%%%
\begin{equation}
\begin{aligned}
\rho_{k} & \leq  \beta_{k+1} - \beta_{k}, \ \forall k \in [K-2] \\
%%%
\rho_{K-1} & \leq \alpha_{K-1} - \beta_{K-1} \\
%%%
\rho_{K} + \rho_{K-1} & \leq \alpha_{K} - \beta_{K-1} \\
%%%
0 & \leq \alpha_{k} - \beta_{k+1}, \ \forall k \in [K-2].
\end{aligned} 
\end{equation}
%%%%
Next, we eliminate $\beta_{K-1}$. To this end, we isolate the following inequalities 
%%%
\begin{equation}
\begin{aligned}
0 & \leq \alpha_{K-2} - \beta_{K-1} \\
\rho_{K-1} & \leq \alpha_{K-1} - \beta_{K-1} \\
\rho_{K} + \rho_{K-1} & \leq \alpha_{K} - \beta_{K-1} \\
\rho_{K-2} & \leq \beta_{K-1} - \beta_{K-2}
\end{aligned}
\end{equation}
%%%
from which we eliminate $\beta_{K-1}$ and obtain
%%%
\begin{equation}
\begin{aligned}
\rho_{K-2} & \leq \alpha_{K-2} - \beta_{K-2} \\
\rho_{K-1} + \rho_{K-2} & \leq \alpha_{K-1} - \beta_{K-2} \\
\rho_{K} + \rho_{K-1} + \rho_{K-2}  & \leq \alpha_{K} - \beta_{K-2}.
\end{aligned}
\end{equation}
%%%
After eliminating $\beta_{K-1}$, we are left with the following set of inequalities
%%%%
\begin{equation}
\begin{aligned}
\rho_{k} & \leq  \beta_{k+1} - \beta_{k}, \ \forall k \in [K-3] \\
%%%
\rho_{K-2} & \leq \alpha_{K-2} - \beta_{K-2} \\
\rho_{K-1} + \rho_{K-2} & \leq \alpha_{K-1} - \beta_{K-2} \\
\rho_{K} + \rho_{K-1} + \rho_{K-2}  & \leq \alpha_{K} - \beta_{K-2} \\
%%%
0 & \leq \alpha_{k} - \beta_{k+1}, \ \forall k \in [K-3].
\end{aligned} 
\end{equation}
%%%%
Proceeding in a similar manner, it can be verified that after the $E$-th elimination,
where $E \in [K - 2]$, we are left with the following set of inequalities: 
%%%%
\begin{equation}
\begin{aligned}
\rho_{k} & \leq  \beta_{k+1} - \beta_{k}, \ \forall k \in [K-E-1] \\
%%%
\rho_{K-E} & \leq \alpha_{K-E} - \beta_{K-E} \\
%%%
\vdots \\
%%%
\rho_{K} + \rho_{K-1} + \cdots + \rho_{K-E}  & \leq \alpha_{K} - \beta_{K-E} \\
%%%
0 & \leq \alpha_{k} - \beta_{k+1}, \ \forall k \in [K-E - 1]
\end{aligned} 
\end{equation}
%%%%
which after the $(K-2)$-th elimination, boils down to
%%%%
\begin{equation}
\label{eq:FM_elimination_2_1}
\begin{aligned}
\rho_{1} & \leq  \beta_{2} \\
%%%
\rho_{2} & \leq \alpha_{2} - \beta_{2} \\
\rho_{3}  + \rho_{2} & \leq \alpha_{3} - \beta_{2} \\
%%%
\vdots \\
%%%
\rho_{K} + \rho_{K-1} + \cdots + \rho_{2}  & \leq \alpha_{K} - \beta_{2} \\
%%%
0 & \leq \alpha_{1} - \beta_{2}.
\end{aligned} 
\end{equation}
%%%%
Finally, we eliminate $\beta_{2}$ in \eqref{eq:FM_elimination_2_1}, from which we obtain
%%%%
\begin{equation}
\label{eq:FM_elimination_final}
\begin{aligned}
\rho_{1} & \leq  \alpha_{1} \\
%%%
\rho_{2} + \rho_{1}  & \leq \alpha_{2}  \\
%%%
\vdots \\
%%%
\rho_{K} + \rho_{K-1} + \cdots + \rho_{2} +  \rho_{1}   & \leq \alpha_{K}.
\end{aligned} 
\end{equation}
%%%%
It is evident that the set of inequalities in  \eqref{eq:FM_elimination_final} is identical to the set of inequalities that describe
$ \mathcal{D}^{\mathrm{PHY}}(\sigma,\bm{\alpha})$ in  \eqref{eq:GDoF_region_phy}, which completes the proof of Lemma \ref{lemma:FM_elimination}.
%%%%
\subsection{Constant gap}
\label{appendix:subsec_constant_gap_PHY}
%%%
Here we show that the GDoF region characterization in Lemma \ref{lemma:FM_elimination}
translates to an approximate characterization of the capacity region.
%%%
The tools used to establish this result are reused further on in Appendix \ref{appendix:constant_gap}
to establish a similar result for the original cache-aided channel. 
%%%
\begin{corollary}
\label{corollary:constant_gap_PHY}
%%%
The capacity region $\mathcal{C}^{\mathrm{PHY}}(\sigma,\bm{\alpha},P)$ includes all non-negative rate 
tuples that satisfy
%%%
\begin{equation}
\label{eq:capacity_inner_phy}
\begin{aligned}
\sum_{i \in [k]} R_{i} + \sum_{\mathcal{S} \in \cup_{i \in [k]} \Sigma_{i} }  R_{\mathcal{S}}   & \leq \big( {\alpha_{k}} \log(P) - k \big)^{+},
\ \forall k \in [K - \sigma + 1] \\
\sum_{i \in [k]}  R_{i} + \sum_{\mathcal{S} \in \Sigma}  R_{\mathcal{S}}  & \leq \big( {\alpha_{k}} \log(P) - k \big)^{+}, \ \forall k \in [K - \sigma + 2 : K].
\end{aligned}
\end{equation}
%%%
Moreover, this achievable region is within $2$ bits (per dimension) from the entire capacity region for all system parameters.
%%%
That is, for any tuple $(R_{k}: k\in [K], \ R_{\mathcal{S}} : \mathcal{S} \in \Sigma)$  at the boundary of \eqref{eq:capacity_inner_phy}, the tuple $(R_{k} + 2: k\in [K], \ R_{\mathcal{S}} + 2 : \mathcal{S} \in \Sigma)$ is outside the capacity region $\mathcal{C}^{\mathrm{PHY}}(\sigma,\bm{\alpha},P)$.
%%%
\end{corollary}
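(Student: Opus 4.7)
The plan is to establish the two parts of Corollary \ref{corollary:constant_gap_PHY} separately by leveraging the machinery already developed in Appendix \ref{appendix:GDoF_region}.

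For the achievability of \eqref{eq:capacity_inner_phy}, I would lift the Fourier--Motzkin procedure of Lemma \ref{lemma:FM_elimination} from the GDoF setting to finite-SNR rates. Introducing the composite variables $R'_k \triangleq R_k + \sum_{\mathcal{S} \in \Sigma_k} R_{\mathcal{S}}$ for $k \in [K - \sigma + 1]$ and $R'_k \triangleq R_k$ for $k \in [K - \sigma + 2 : K]$, the $\bm{\beta}$-parameterized inner bound \eqref{eq:capacity_inner_phy_beta} reduces to $R'_k \leq \bigl((\beta_{k+1} - \beta_k)\log(P) - 1\bigr)^+$ for all $k \in [K]$, together with the power-allocation constraints \eqref{eq:beta_variables}. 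This is structurally identical to the GDoF inequalities $\rho_k \leq \beta_{k+1} - \beta_k$ in \eqref{eq:GDoF_inner_rho_beta_1}, aside from a global $\log(P)$ scale and an additive $-1$ per inequality. Running the same sequential elimination of $\beta_K, \beta_{K-1}, \ldots, \beta_2$, the $-1$ terms telescope and yield
\begin{equation*}
\sum_{i \in [k]} R'_i \leq \alpha_k \log(P) - k, \quad k \in [K].
\end{equation*}
Unfolding $R'_i$ and enforcing non-negativity of rates via the $(\cdot)^+$ operator recovers the region \eqref{eq:capacity_inner_phy}.

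For the constant-gap claim, I would compare the inner bound directly against the outer bound \eqref{eq:capacity_outer_phy}. Fix a rate tuple $(R_k, R_{\mathcal{S}})$ on the boundary of \eqref{eq:capacity_inner_phy}, so that some constraint---say the $k$-th---is tight with LHS equal to $(\alpha_k \log(P) - k)^+$. Let $n_k$ denote the number of rate variables appearing in that constraint; via the identity \eqref{eq:k_Sigma_size} one checks that $n_k \geq k + \binom{K-1}{\sigma-1} \geq k + 1$ for every $k \in [K]$. A uniform $+2$ shift in all coordinates therefore raises the LHS by $2 n_k \geq 2k + 2$. Combining with the elementary strict bound $\log(1 + P^{\alpha_k}) < \alpha_k \log(P) + 1$ (valid since $P^{\alpha_k} > 1$), it suffices to verify that $(\alpha_k \log(P) - k)^+ + 2n_k > \alpha_k \log(P) + 1$. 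In the regime $\alpha_k \log(P) \geq k$ the LHS becomes at least $\alpha_k \log(P) + k + 2$, exceeding the RHS by $k + 1$; in the regime $\alpha_k \log(P) < k$ the LHS is at least $2k + 2 > \alpha_k \log(P) + 1$. Hence the shifted tuple violates the $k$-th outer bound inequality and lies outside $\mathcal{C}^{\mathrm{PHY}}(\sigma, \bm{\alpha}, P)$.

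The only mild obstacle is the piecewise behaviour imposed by the $(\cdot)^+$ in the inner bound, which forces the two-case split above. Both cases collapse to the same quantitative observation: the gap $\log(1 + P^{\alpha_k}) - (\alpha_k \log(P) - k)^+$ is at most $k + 1$, while the mandatory $+2$ shift injects at least $2n_k \geq 2k + 2$ additional bits into the $k$-th constraint. No tools beyond the Fourier--Motzkin machinery of Appendix \ref{appendix:GDoF_region} and elementary manipulations of $\log(1 + x)$ are required; the result is essentially a routine but careful lift of Lemma \ref{lemma:FM_elimination} to finite SNR.
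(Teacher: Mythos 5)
Your proposal is correct and mirrors the paper's own proof: the achievability part applies the Fourier--Motzkin machinery of Lemma~\ref{lemma:FM_elimination} to the $\bm{\beta}$-parameterized finite-SNR inner bound (the paper routes this through the GDoF reparameterization in \eqref{eq:capacity_inner_phy_GDoF} and then telescopes, but the telescoping of the $-1$ terms to $-k$ is the same computation), and the gap part is identical---shift every coordinate by $+2$, lower-bound the increase in the $k'$-th tight constraint by $2(k' + |\cup_{i\in[k']}\Sigma_i|)$, and compare against the strict outer bound $\sum R < \alpha_{k'}\log P + 1$ coming from $\log(1 + P^{\alpha_{k'}}) < \alpha_{k'}\log P + 1$. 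The only place to tighten your write-up is the handling of the $(\cdot)^+$ in the achievability direction: when $\alpha_k \log P < k$ the constraint forces the corresponding cumulative rates to zero, which requires a one-line separate argument rather than being ``recovered by enforcing non-negativity.''
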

%%%
\begin{proof} First, we observe from  \eqref{eq:GDoF_inner_beta} and Lemma \ref{lemma:FM_elimination}
that the achievable rate region $\mathcal{C}^{\mathrm{PHY}}_{\mathrm{in}}(\sigma,\bm{\alpha},P)$, described in 
\eqref{eq:capacity_inner_phy_beta},  is equivalently  expressed by the set of all non-negative rate tuples that satisfy
 %%%
\begin{equation}
\label{eq:capacity_inner_phy_GDoF}
\begin{aligned}
R_{k} + \sum_{\mathcal{S} \in \Sigma_{k} } R_{\mathcal{S}}  & \leq 
\Big(  r_{k}  \log \left(P  \right) + \sum_{\mathcal{S} \in \Sigma_{k} } r_{\mathcal{S}}    \log \left(P  \right)  - 1 \Big)^{+}, \ \forall k \in [K - \sigma + 1]\\
R_{k}  & \leq \big( r_{k} \log \left(P  \right)  - 1 \big)^{+}, \ \forall k \in [K - \sigma + 2 : K]
\end{aligned}
\end{equation}
%%%
for some $(r_{k}: k\in [K], \ r_{\mathcal{S}} : \mathcal{S} \in \Sigma)  \in \mathcal{D}^{\mathrm{PHY}}(\sigma,\bm{\alpha})$.
%%%
Next, we show that \eqref{eq:capacity_inner_phy_GDoF} includes the achievable rate region describe in 
\eqref{eq:capacity_inner_phy}.
%%%
Suppose that $(R_{k}: k\in [K], \ R_{\mathcal{S}} : \mathcal{S} \in \Sigma)$ 
satisfies \eqref{eq:capacity_inner_phy} in 
Corollary \ref{corollary:constant_gap_PHY}.
%%%
From Lemma \ref{lemma:FM_elimination}, there must exists $\rho_{1}, \ldots, \rho_{K}$, as defined in \eqref{eq:rho_def},
such that 
%%%
\begin{equation}
\sum_{i \in [k]} R_{i} + \sum_{\mathcal{S} \in \cup_{i \in [k]} \Sigma_{i} }  R_{\mathcal{S}}   = 
\Big( \sum_{i \in [k]} \rho_{i} \log(P) - k \Big)^{+} \leq \big( \alpha_{k}\log(P) - k \big)^{+}, \ \forall k \in [K]
\end{equation}
%%%%
where in the above, we have used $\Sigma_{i} = \emptyset$ for all $i \in [K - \sigma + 2 : K]$.
%%%
This implies that
%%%
\begin{align}
\nonumber
 R_{k} + \sum_{\mathcal{S} \in \cup_{k} \Sigma_{k} }  R_{\mathcal{S}} & = 
\left[  \sum_{i \in [k]} R_{i} + \sum_{\mathcal{S} \in \cup_{i \in [k]} \Sigma_{i} }  R_{\mathcal{S}} \right] - 
\left[ \sum_{i \in [k-1]} R_{i} + \sum_{\mathcal{S} \in \cup_{i \in [k-1]} \Sigma_{i} }  R_{\mathcal{S}} \right] \\
%%%
\nonumber
& \leq \Big( \sum_{i \in [k]} \rho_{i} \log(P) - k \Big)^{+} - \Big(  \sum_{i \in [k-1]} \rho_{i} \log(P) - (k-1) \Big)\\
%%%
& \leq  \big( \rho_{k} \log(P) - 1 \big)^{+}.
\end{align}
%%%
Therefore, $(R_{k}: k\in [K], \ R_{\mathcal{S}} : \mathcal{S} \in \Sigma)$ satisfies \eqref{eq:capacity_inner_phy_GDoF}, and hence is 
in $\mathcal{C}^{\mathrm{PHY}}_{\mathrm{in}}(\sigma,\bm{\alpha},P)$.

%%%
Now let us  consider a rate tuple 
$(R_{k}': k \in [K], R_{\mathcal{S}}': \mathcal{S} \in \Sigma )$
at the boundary of the  rate region in \eqref{eq:capacity_inner_phy}.
%%%
It follows that there exists some $k'$ in  $[K]$ such that \eqref{eq:capacity_inner_phy} holds with equality, that is
%%% 
%%%
\begin{equation}
\label{eq:rate_tuple_equality}
\sum_{i \in [k']} R_{i}' + \sum_{\mathcal{S} \in \cup_{i \in [k']} \Sigma_{i} }  R_{\mathcal{S}}'   = \big( {\alpha_{k'}} \log(P) - k' \big)^{+}.
\end{equation}
%%%%
Now consider  a second rate tuple given by
\begin{equation}
\label{eq:rate_tuple_double_prime}
(R_{k}'' = R_{k}' + 2: k \in [K], R_{\mathcal{S}}'' = R_{\mathcal{S}}'+ 2: \mathcal{S} \in \Sigma ). 
\end{equation}
%%%
For  the index $k'$ in \eqref{eq:rate_tuple_equality}, we have 
\begin{align}
\nonumber
\sum_{i \in [k']} R_{i}'' + \sum_{\mathcal{S} \in \cup_{i \in [k']} \Sigma_{i} }  R_{\mathcal{S}}''
& =  \sum_{i \in [k']} R_{i}' + \sum_{\mathcal{S} \in \cup_{i \in [k']} \Sigma_{i} }  R_{\mathcal{S}}'
+ 2 \cdot \Big( k' +  \sum_{i \in [k']}  |  \Sigma_{i} |  \Big) \\
\nonumber
%%%
& \geq {\alpha_{k'}}\log(P) - k'  + 2 \cdot \Big( k' +  \sum_{i \in [k']}  |  \Sigma_{i} |  \Big)  \\
%%%
\label{eq:rate_tuple_double_prime_outside}
& \geq {\alpha_{k'}}\log(P) + 1. 
\end{align}
%%%
The inequality in \eqref{eq:rate_tuple_double_prime_outside} implies that the rate 
tuple defined in  \eqref{eq:rate_tuple_double_prime} is not included in the outer bound $\mathcal{C}^{\mathrm{PHY}}_{\mathrm{out}}(\sigma,\bm{\alpha},P)$. This holds since the inequalities in \eqref{eq:capacity_outer_phy}  imply:
%%%
\begin{equation}
\label{eq:outer_ound_phy_strict}
\sum_{i \in [k]} R_{i} + \sum_{\mathcal{S} \in \cup_{i \in [k]} \Sigma_{i} }  R_{\mathcal{S}}   < \log(P^{\alpha_{k}}) + 1, \ \forall k \in [K]
\end{equation}
%%%
where strictness in the above inequalities is due to $P > 1$ and $\alpha_{k} > 0$, for all $k \in [K]$.
%%%
Moreover, as alluded to in Remark \ref{remark_GDoF_model}, for the regime $P \leq 1$, the all zero rate tuple is within one bit (per dimension) from all rate tuples in 
$\mathcal{C}^{\mathrm{PHY}}_{\mathrm{out}}(\sigma,\bm{\alpha},P)$.
%%%
This concludes the proof of Corollary \ref{corollary:constant_gap_PHY}.
%%%
\end{proof}
%%%
\section{Non-Integer $K \mu$}
\label{appendix:non_integer_K_mu}
%%%
In this appendix, we prove that the GNDT in \eqref{eq:tau_ub} is achievable for all $\mu$ such that $K \mu$ is non-integer.
%%%
Therefore, we assume throughout this appendix that $K \mu$ takes a non-integer value  in $(0,K)$.
%%%
Moreover, we focus on worst-case demands as defined in Section \ref{subsec:worst_case_demands}.
%%%
\subsection{Physical channel}
%%%
We first look at the physical channel problem.
%%%
In particular, let us consider a degraded GBC with three message sets: a unicast set, a $\sigma$-multicast set and a $\gamma$-multicast, where $\sigma, \gamma \in [2: K]$ and  $\sigma < \gamma$.
%%%
Similar to the definitions of $\Sigma$ and $\Sigma_{i}$ for the $\sigma$-multicast groups in Section \ref{subsec:unicast_sigma_multicast_messages}, we denote the set of all $\gamma$-multicast groups by $\Gamma$, which is partitioned into $\{ \Gamma_{i} : i \in [K - \gamma + 1] \}$.
%%%
The GDoF region of this channel is hence given by all  GDoF tuples of the form 
%%%
\begin{equation}
\label{eq:GDoF_region_phy_2_multicast}
\nonumber
\big(r_{k}: k\in [K], \ r_{\mathcal{S}} : \mathcal{S} \in \Sigma, \ r_{\mathcal{G}} : \mathcal{G} \in \Gamma \big) \in \mathbb{R}_{+}^{K + \binom{K}{\sigma} + \binom{K}{\gamma}}
\end{equation} 
%%%
which satisfy the following set of inequalities:
%%%
\begin{equation}
%%%
\sum_{i \in [k]} r_{i} +
 \sum_{\mathcal{S} \in \cup_{i \in [k]} \Sigma_{i} }  r_{\mathcal{S}}  +
 \sum_{\mathcal{G} \in \cup_{j \in [k]} \Gamma_{j} }  r_{\mathcal{G}}  \leq \alpha_{k}, \ \forall k \in [K]
%%%
\end{equation}
%%%
where we assume that $\Sigma_{i} = \emptyset$ for all 
$i \in [K - \sigma + 2 : K]$, and $\Gamma_{j} = \emptyset$ for all 
$j \in [K - \gamma + 2 : K]$.
%%%
The proof of \eqref{eq:GDoF_region_phy_2_multicast} follows the same steps used to prove Theorem \ref{theorem:GDoF_phy} in Appendix \ref{appendix:GDoF_region}, and is  omitted to avoid repetition.
%%%
Similar to Corollary \ref{corollary:symm_multicast_region}, the GDoF region in \eqref{eq:GDoF_region_phy_2_multicast}
yields the lower dimensional projection characterized by 
%%%
all tuples $(r_{k}: k\in [K], r_{\mathrm{sym}}^{\sigma} , r_{\mathrm{sym}}^{\gamma} ) \in \mathbb{R}_{+}^{K + 2}$
that satisfy
%%%
\begin{equation}
\label{eq:GDoF_region_phy_2_multicast_sym}
\sum_{i \in [k]} r_{i}  +  \left[ \binom{K}{\sigma} \! - \! \binom{K - \min\{ k,s \} }{\sigma} \right] \cdot r_{\mathrm{sym}}^{\sigma} +  
\left[ \binom{K}{\gamma} \! - \! \binom{K - \min \{ k, s \} }{\gamma} \right] \cdot 
r_{\mathrm{sym}}^{\gamma}  \leq \alpha_{k},  \forall k \in [K]
\end{equation}
%%%
which captures scenarios with symmetric $\sigma$-multicast  GDoF and symmetric $\gamma$-multicast GDoF, where each multicast message is intended to at least one users in $[s]$, for some $s \in [K]$.
%%%
\subsection{Caching and Delivery}
%%%
We introduce some notation which is used in the following parts. 
%%%
Recalling that $K \mu$ is non-integer, we set the multicast group sizes as: 
$\sigma  =  \lfloor K \mu + 1 \rfloor$ and $\gamma = \lceil K \mu + 1 \rceil$.
%%%
Moreover, we define the following values:  $\lambda  \triangleq  \gamma - (K \mu + 1) $ and $\bar{\lambda}  \triangleq (K \mu + 1) - \sigma$,  where it is evident that $\bar{\lambda} = 1 - \lambda $.
%%%

%%%
Content placement, preparing the coded multicast messages and recovering files at the receivers is carried out in 
the YMA manner, using the principle of memory-sharing \cite{Yu2018,Yu2019,Maddah-Ali2014}.
%%%
The problem reduces to delivering a set of $\sigma$-multicast messages, $\gamma$-multicast messages
and unicast messages.
%%%
It is worthwhile noting that due to file splitting during the placement phase and memory-sharing,  each of the $\sigma$-multicast messages has a normalized file size of $\lambda / \binom{K}{\sigma - 1}$, while each $\gamma$-multicast message
has a normalized file size of  $\bar{\lambda} / \binom{K}{\gamma - 1}$.
%%%
From \eqref{eq:GDoF_region_phy_2_multicast_sym},
it follows that for any achievable GDoF tuple $(\mathbf{r},r_{\mathrm{sym}}^{\sigma} , r_{\mathrm{sym}}^{\gamma})$,
where $\mathbf{r}$ comprises the GDoF of non-content messages,
a GNDT given by
%%%
\begin{equation}
\label{eq:GNDT_two_multicast}
\tau = \max \left\{ \frac{\lambda}{r_{\mathrm{sym}}^{\sigma} \cdot \binom{K}{\sigma - 1}} , 
 \frac{\bar{\lambda}}{r_{\mathrm{sym}}^{\gamma} \cdot \binom{K}{\gamma - 1}} \right\}
\end{equation}
%%%
is achievable. The point-wise maximum in \eqref{eq:GNDT_two_multicast} is due to the fact that the GNDT is determined by the \emph{slowest} group of coded content messages.
%%%
We further optimize the symmetric $\sigma$-multicast GDoF  and the symmetric  $\gamma$-multicast GDoF such that they satisfy
%%%
\begin{equation}
 r_{\mathrm{sym}}^{\sigma} = 
 r_{\mathrm{sym}}^{\gamma} \cdot   \frac{ \lambda  \binom{K}{\gamma - 1}}{\bar{\lambda} \binom{K}{\sigma - 1}}.
\end{equation}
%%%
In this case, \eqref{eq:GNDT_two_multicast} boils down to
%%%
\begin{equation}
\tau = \frac{\lambda}{r_{\mathrm{sym}}^{\sigma} \cdot \binom{K}{\sigma - 1}} = 
 \frac{\bar{\lambda}}{r_{\mathrm{sym}}^{\gamma} \cdot \binom{K}{\gamma - 1}}
\end{equation}
%%%
and for any achievable $\tau$, we achieve the GDoF region given by all $\mathbf{r} \in \mathbb{R}_{+}^{K}$
that satisfy 
%%%
\begin{equation}
\label{eq:GDoF_region_2_multicast_sym}
\frac{1}{\tau} \cdot  \left( \lambda \cdot \frac{\binom{K}{\sigma} - \binom{K - \min\{ k, N \} }{\sigma}}{\binom{K}{\sigma-1}}  + 
\bar{\lambda} \cdot \frac{ \binom{K}{\gamma} - \binom{K - \min\{ k, N \} }{\gamma} }{\binom{K}{\gamma-1}}   \right) + \sum_{i \in [k]} r_{i}   \leq \alpha_{k},  \ \forall k \in [K].
\end{equation}
%%%
This translates to an achievable GNDT of 
%%%
\begin{equation}
\label{eq:tau_convex_comb}
\tau = \max_{k \in [K]} \left\{ \frac{1}{\big( \alpha_{k} - \sum_{i \in [k]} r_{i} \big) } \cdot
\left( \lambda \cdot \frac{\binom{K}{\sigma} - \binom{K - \min\{ k, N \} }{\sigma}}{\binom{K}{\sigma-1}}  + 
\bar{\lambda} \cdot \frac{ \binom{K}{\gamma} - \binom{K - \min\{ k, N \} }{\gamma} }{\binom{K}{\gamma-1}}   \right) \right\}.
\end{equation}
%%%
Now it remains to show that \eqref{eq:tau_convex_comb} and \eqref{eq:tau_ub} 
are equal for all non-integer values of $K \mu$.
%%%

%%%
To this end, let us recall from \cite[Appendix J]{Yu2019} that for any $k \in [K]$, the sequence defined as
\begin{equation}
\label{eq:sequence_cn}
c_{n} \triangleq \frac{ \binom{K}{n+1} - \binom{K - \min\{k,N\} }{n+1}  }{  \binom{K}{n}  } 
\end{equation}
is convex  in $n \in [0 : K]$. 
%%%
Therefore, the points defined by \eqref{eq:sequence_cn} are corner points on their lower convex envelope 
given by  $f(n) = \mathrm{conv}(c_{n})$, and cannot be expressed as convex combinations of other points on $f(n)$ 
(see also \cite[Remark 7]{Yu2018}).
%%%
Hence for any non-integer value of $n$ in $(0,K)$, we have $f(n) = (\lceil n \rceil - n) f(\lfloor n \rfloor) + 
 (n - \lfloor n \rfloor)  f(\lceil n \rceil) $.
%%%
This is precisely the expression appearing inside the $\max \{ \cdot \}$ operator in \eqref{eq:tau_convex_comb}, 
from which it directly follows that \eqref{eq:tau_convex_comb} and \eqref{eq:tau_ub} are equal.
%%%
\subsection{Proof of \eqref{eq:tau_tilde_tau_breve}}
%%%
In the final part of this appendix, we show that the inequality in  \eqref{eq:tau_tilde_tau_breve} holds.
%%%
First, note that due to the convexity of the sequence in \eqref{eq:sequence_cn},
the new sequence given by 
%%%
\begin{equation}
C_{n} \triangleq \max_{k \in [K]} \left\{ 
\frac{ 1}{ \big( \alpha_{k} - \sum_{i \in [k]} r_{i} \big)}  \cdot 
\frac{ \binom{K}{n+1} - \binom{K - \min\{k,N\} }{n+1}  }{  \binom{K}{n}  }  \right\}
\end{equation}
is also convex in  $n \in [0 : K]$. This holds since  the point-wise maximum of convex functions is a convex function.
%%%
Therefore, the conclusions related to the lower convex envelope $f(n) = \mathrm{conv} (c_{n})$ above also hold for the 
lower convex envelope $F(n) = \mathrm{conv} (C_{n})$.
%%%

%%%
From the above, it follows that $\tau_{\mathrm{ms}}^{\mathrm{ub}} (\mathbf{r} ; \mu,  \bm{\alpha})  = \tau^{\mathrm{ub}} (\mathbf{r} ; \mu,  \bm{\alpha}) $ for all $\mu$ such that $K \mu$ takes integer values.
%%%
Moreover, for non-integer values of $K \mu$, the function $\tau_{\mathrm{ms}}^{\mathrm{ub}} (\mathbf{r} ; \mu,  \bm{\alpha})$ can be expressed as  
%%%
\begin{multline}
\label{eq:GNDT_ub_conv}
\tau_{\mathrm{ms}}^{\mathrm{ub}} (\mathbf{r} ; \mu,  \bm{\alpha})  = 
\lambda  \cdot \max_{k \in [K]} \left\{ 
\frac{ 1}{ \big( \alpha_{k} - \sum_{i \in [k]} r_{i} \big)}  \cdot 
\frac{ \binom{K}{\sigma} - \binom{K - \min\{k,N\} }{\sigma}  }{  \binom{K}{\sigma - 1}  }  \right\} \\
+ \bar{\lambda} \cdot \max_{k \in [K]} \left\{ 
\frac{ 1}{ \big( \alpha_{k} - \sum_{i \in [k]} r_{i} \big)}  \cdot 
\frac{ \binom{K}{\gamma} - \binom{K - \min\{k,N\} }{\gamma}  }{  \binom{K}{\gamma - 1}  }  \right\} \\
\geq  \max_{k \in [K]} \left\{ 
\frac{ 1}{ \big( \alpha_{k} - \sum_{i \in [k]} r_{i} \big) }  \cdot 
\left( \lambda \cdot \frac{ \binom{K}{\sigma} - \binom{K - \min\{k,N\} }{\sigma}  }{  \binom{K}{\sigma - 1}  } 
+ \bar{\lambda} \cdot \frac{ \binom{K}{\gamma} - \binom{K - \min\{k,N\} }{\gamma}  }{  \binom{K}{\gamma - 1}  } 
\right)
 \right\}.
\end{multline}
%%%
The inequity in \eqref{eq:GNDT_ub_conv}, which is identical to the inequality in 
\eqref{eq:tau_tilde_tau_breve}, is implied by  Jensen's inequality, as the   point-wise maximum function $\max \{ \cdot \}$ is convex in its arguments.
%%%
\section{Non-Worst-Case Demands}
%%%
\label{appendix:other_demands}
%%%
%\subsection{Other demands}
%%%
Here we show that the trade-off in \eqref{eq:tau_achievability}, shown to be achievable in Section \ref{sec:achievability}
when the weakest users make distinct demands, is also achieved whenever the distinct demands are not made by the weakest users.
%%%
In particular, we consider the case where $N < K$ with $N$ distinct demands, 
yet these distinct demands are not necessarily made by the first $N$ users. 
%%%
Whenever $N \geq K - \sigma + 1$, we transmit the set of all coded multicast messages and achieve the delay in \eqref{eq:tau_achievability}.
%%%
Therefore, we focus on the case where $N \leq K - \sigma$, in which not all coded multicast messages are transmitted.
%%%
Recall that placement is independent of user demands, and hence remains as in Section \ref{sec:achievability}.
%%%%
\subsection{Coded multicast messages}
%%%
After user demands are revealed, 
%%%
we select the set of leading users as $\mathcal{U} = \{u_{1}, \ldots, u_{N} \}$,
such that $u_{1} \leq u_{2} \leq \cdots \leq u_{N}$, and each $u_{i}$ is the  weakest user (i.e. smallest index) that requests file  $F_{d_{u_{i}}}$.
%%%
Note that we must have $u_{1} = 1$.
%%%
The set of non-leading users is given by $\bar{\mathcal{U}} = [K] \setminus \mathcal{U}$.
%%%
Generating coded multicast messages is carried out as in the previous part,  in accordance with the YMA scheme, 
where each generated message is useful to at least one leading user.
%%%
The generated set of coded multicast messages is given by
$\{W_{\mathcal{S}} : \mathcal{S} \in \Sigma, \mathcal{S} \cap \mathcal{U} \neq \emptyset \}$.
%%%%

%%%%
Let us, for now, assume that leading users successfully decode their intended coded multicast messages, and hence recover their requested files.
%%%%
We show that in this case, non-leading users will also be able to compute their missing coded multicast messages, and recover their requested files. 
%%%%
\begin{lemma}
\label{lemma:non_leading_users_decode}
Given that each transmitted coded multicast message $W_{\mathcal{S}}$ 
is successfully decoded by all intended users in $\mathcal{S}$, then each non-leading user $k \in \bar{\mathcal{U}}$
%%%
can compute  all required missing coded multicast messages, i.e. $\{ W_\mathcal{A} : \mathcal{A} \subseteq \bar{\mathcal{U}}, | \mathcal{A} | = \sigma,
k \in \mathcal{A} \}$.
\end{lemma}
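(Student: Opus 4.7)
The plan is to verify, for each required missing multicast message $W_{\mathcal{A}}$, that every term $W_{\mathcal{B}\setminus \mathcal{V}}$ in the YMA XOR identity \eqref{eq:W_A_YMA} is both transmitted and decodable by user $k$; then user $k$ can simply XOR those terms to reconstruct $W_{\mathcal{A}}$. Fix $\mathcal{A}\subseteq \bar{\mathcal{U}}$ with $k\in \mathcal{A}$ and $|\mathcal{A}|=\sigma$, and set $\mathcal{B}=\mathcal{A}\cup \mathcal{U}$. Since $\mathcal{A}\cap \mathcal{U}=\emptyset$, we have $|\mathcal{B}|=\sigma+N$ and $|\mathcal{B}\setminus \mathcal{V}|=\sigma$ for every $\mathcal{V}\in \Upsilon$, as required for $W_{\mathcal{B}\setminus \mathcal{V}}$ to be a valid $\sigma$-multicast message.

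\emph{Transmission.} Only messages $W_{\mathcal{S}}$ with $\mathcal{S}\cap \mathcal{U}\neq \emptyset$ are generated and transmitted. Since $\mathcal{V}$ and $\mathcal{U}$ are both $N$-subsets of $\mathcal{B}$ with $\mathcal{V}\neq \mathcal{U}$, there exists some $u\in \mathcal{U}\setminus \mathcal{V}\subseteq \mathcal{B}\setminus \mathcal{V}$, so $(\mathcal{B}\setminus \mathcal{V})\cap \mathcal{U}\neq \emptyset$ and $W_{\mathcal{B}\setminus \mathcal{V}}$ is indeed transmitted.

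\emph{Decodability.} The hypothesis of the lemma, combined with the successive-decoding architecture of the physical-layer scheme in Section \ref{sec:degraded_GBC_unicast_multicast} and the degradedness of the GBC, implies that user $k$ recovers every transmitted $W_{\mathcal{S}}$ with $\min \mathcal{S}\leq k$. It therefore suffices to establish $\min(\mathcal{B}\setminus \mathcal{V})\leq k$. Here we use the specific rule for choosing $\mathcal{U}$: each leader $u_i$ is the smallest-indexed user requesting $F_{d_{u_i}}$. Hence, for the non-leading user $k\in \bar{\mathcal{U}}$, there is a leader $u\in \mathcal{U}$ with $d_u=d_k$ and $u<k$. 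Split into two cases. If $k\notin \mathcal{V}$, then $k\in \mathcal{B}\setminus \mathcal{V}$, whence $\min(\mathcal{B}\setminus \mathcal{V})\leq k$. If $k\in \mathcal{V}$, then $u\notin \mathcal{V}$ (otherwise $\mathcal{V}$ would contain the two distinct users $u$ and $k$ sharing the same demand $d_u=d_k$, contradicting the distinct-demand property of $\mathcal{V}$), so $u\in \mathcal{U}\setminus \mathcal{V}\subseteq \mathcal{B}\setminus \mathcal{V}$ and $\min(\mathcal{B}\setminus \mathcal{V})\leq u<k$.

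Combining the two properties, every $W_{\mathcal{B}\setminus \mathcal{V}}$ appearing in \eqref{eq:W_A_YMA} is available to user $k$, who then computes $W_{\mathcal{A}}$ by XOR. The main obstacle is the decodability step in the subcase $k\in \mathcal{V}$: the argument would fail for an arbitrary valid leader set, and it leverages the ``weakest representative'' selection rule for $\mathcal{U}$ in a non-trivial way, since only the inequality $u<k$ guaranteed by that rule rescues the bound $\min(\mathcal{B}\setminus \mathcal{V})\leq k$.
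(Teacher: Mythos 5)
Your proof is correct and follows essentially the same strategy as the paper's: reduce the claim to showing that for every $\mathcal{V}\in\Upsilon$, the set $\mathcal{B}\setminus\mathcal{V}$ contains a user whose index is at most that of the decoding user, using the fact that $\mathcal{U}$ selects the weakest representative of each demand together with the distinct-demand constraint on $\mathcal{V}$. Your bookkeeping is slightly cleaner: you argue directly for the arbitrary user $k\in\mathcal{A}$ (the paper first reduces to $a_1=\min\mathcal{A}$ and then locates $a_1$ between consecutive leaders, handling the $j=N$ corner case separately), your two-branch case split on $k\in\mathcal{V}$ versus $k\notin\mathcal{V}$ is immediately conclusive in both branches (the paper's treatment of the $a_1\notin\mathcal{V}$ case contains an unnecessary sub-split on whether $u'\in\mathcal{V}$), and you explicitly verify the transmission condition $(\mathcal{B}\setminus\mathcal{V})\cap\mathcal{U}\neq\emptyset$ via $\mathcal{U}\setminus\mathcal{V}\neq\emptyset$, which the paper leaves implicit in its appeal to the YMA identity.
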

%%%%
\begin{proof}
Consider an arbitrary missing coded multicast message $W_{\mathcal{A}}$, for some group of non-leading users $\mathcal{A} = \{a_{1},\dots, a_{\sigma} \} \subseteq \bar{\mathcal{U}}$, which we wish to compute.
%%%
We assume without loss of generality that $a_{1} = \min \{ \mathcal{A}\}$, i.e. the weakest user in the group $\mathcal{A}$.
%%%
To show that users in $\mathcal{A}$ can compute $W_{\mathcal{A}}$, it is sufficient to show that $a_{1}$ can compute $W_{\mathcal{A}}$. 
%%%
Next, we show that each of the transmitted coded multicast messages required for computing $W_{\mathcal{A}}$ is either intended to 
leading users which are no stronger than user $a_{1}$ or intended to user $a_{1}$; and hence decodable by all users in $\mathcal{A}$. 
%%% 

%%%
To this end, let $u_{j}$ be the leading user that satisfies  $u_{j} < a_1 < u_{j+1}$.
%%%
If $j = N$, then $a_1$ is stronger than all leading users and hence can recover all their intended messages.
%%%
Combining this with the fact that $u_{1} = 1$,  we may proceed while assuming that there exists a pair of users $u_{j}$ and $u_{j+1}$ in  $\mathcal{U}$ such that $u_{j} < a_1 < u_{j+1}$ holds.
%%%
The file demanded by user $a_{1}$, i.e. $F_{d_{a_{1}}}$, must also be demanded by some user $u' \in \{ u_{1},\ldots , u_{j} \}$, since otherwise $a_{1}$ must be a leading user.
%%%
In reconstructing $W_{\mathcal{A}}$ according to \eqref{eq:W_A_YMA},
we define $\mathcal{B} = \{u_{1}, \ldots , u_{N}, a_{1}, \ldots, a_{\sigma} \}$,
and $\Upsilon$ as the family of subsets of $\mathcal{B}$ that constitute potential sets of leaders, other than $\mathcal{U}$.
%%%

%%%
The problem reduces to showing that each $\mathcal{B} \setminus \mathcal{V}$, where $ \mathcal{V} \in \Upsilon$, contains at least one user from  $\{ u_{1}, \ldots, u_{j}, a_{1} \}$.
%%%
To show this, first consider the case where $a_1 \in \mathcal{V}$. Here we must have $u' \notin \mathcal{V}$ and therefore $u' \in \mathcal{B} \setminus \mathcal{V}$, which proves the statement in Lemma \ref{lemma:non_leading_users_decode}.
%%%
Now let us consider the second case where $a_1 \notin \mathcal{V}$.
%%%
If  $u' \notin \mathcal{V}$ also holds, i.e. there is a third user demanding $F_{d_{a_{1}}}$ and is in $\mathcal{V}$, then Lemma  \ref{lemma:non_leading_users_decode} holds, and therefore we focus on the remaining case where $u' \in \mathcal{V}$.
%%%
In this case, $a_{1}$ cannot be in $\mathcal{V}$, and hence we must have 
$a_{1} \in \mathcal{B} \setminus \mathcal{V}$, which completes the proof.
%%%
\end{proof}
%%%
It is worthwhile highlighting that an observation similar to Lemma \ref{lemma:non_leading_users_decode} was made in \cite[Rem. 2]{Amiri2018b}.
%%%
Following Lemma \ref{lemma:non_leading_users_decode}, we  now focus on the transmission of  the sets of messages 
$\{W_{k} : k \in  [K] \}$ and $\{W_{\mathcal{S}} : \mathcal{S} \in \Sigma, \mathcal{S} \cap \mathcal{U} \neq \emptyset \}$, and characterize the corresponding achievable performance.
%%%
%\paragraph{\normalfont{\emph{Transmission:}}}
\subsection{Transmission}
%%%
The sets of coded multicast messages and unicast messages are transmitted using
the physical-layer scheme in Section  \ref{sec:degraded_GBC_unicast_multicast}.
%%%%
Next, we show that \eqref{eq:tau_achievability} is also achievable in this case by deriving an upper bound on the achievable GNDT, given any achievable GDoF tuple $\mathbf{r}$, which matches the one in 
\eqref{eq:tau_achievability}.
%%%
To this end, we define a subset of $\sigma$-multicast groups given by
%%%%
\begin{equation}
\Sigma' = \big\{ \mathcal{S} \in \Sigma : \mathcal{S} \cap \mathcal{U} \neq \emptyset \big\}
\end{equation}
%%%%
comprising all  multicast groups that include at least one leading user.
%%%
It follows that the set of coded multicast messages of interest (i.e. to be transmitted) is given by 
%%%
\begin{equation}
\big\{W_{\mathcal{S}} : \mathcal{S} \in \Sigma,  \mathcal{S} \cap \mathcal{U} \neq \emptyset  \big\} = 
\big\{ W_{\mathcal{S}} : \mathcal{S} \in \Sigma' \big\}.
\end{equation} 
%%%
Since  $\Sigma'$ is equal to $\Sigma \setminus \big\{ \mathcal{S} \in \Sigma : \mathcal{S} \cap \mathcal{U} = \emptyset \big\}$, the cardinality of $\Sigma' $ is given by
%%%
\begin{equation}
\big| \Sigma'  \big| = \binom{K}{\sigma} - \binom{K - N}{\sigma}.
\end{equation}
%%%
Moreover, $\Sigma'$ can be paritioned into the family  $\big\{ \{\Sigma_i \cap \Sigma' \} : i \in [ K ]  \big\}$, where $\big\{ \Sigma_i : i \in [ K ]  \big\}$ is the partition of $\Sigma$ defined in Section \ref{subsec:unicast_sigma_multicast_messages}. Recall that  $\Sigma_i = \emptyset$ for all $i > K -\sigma +1$ (see Remark \ref{remark:Sigma_augment}).
%%%

We now focus on transmission over the physical channel based on the scheme in Section \ref{sec:degraded_GBC_unicast_multicast}. 
%%%
Recall that since we have assumed (without loss of generality) that leading users are ordered as 
$1 = u_{1} \leq u_{2} \leq \cdots \leq u_{N}$, then  $i \leq u_i$ must hold for all $i \in [N]$.
%%%
By setting the achievable GDoF for all missing $\sigma$-multicast messages to zero, and restricting to a symmetric GDoF across remaining  messages, the GDoF region in Theorem \ref{theorem:GDoF_phy} becomes
%%%
\begin{equation}
\label{eq:GDoF_region_PHY_missing}
\begin{aligned}
%%%
\sum_{i \in [k]} r_{i}  + 
\bigg| \bigcup_{i \in [k]} 
\big\{ \Sigma_{i} \cap \Sigma' \big\}  \bigg| \cdot  r_{\mathrm{sym}}   &  \leq \alpha_{k}, \ \forall k \in [u_{N} ] \\
%%%
\sum_{i \in [k]} r_{i} + \big|  \Sigma'  \big| \cdot  r_{\mathrm{sym}}    & \leq \alpha_{k}, \ \forall k \in [u_{N} + 1: K].
\end{aligned}
\end{equation}
%%%
Next, we wish to obtain a more tractable inner bound for the region in \eqref{eq:GDoF_region_PHY_missing}. 
%%%
We observe that  for all
$k \in [N]$, we have  $\big| \cup_{i \in [k]} 
\{ \Sigma_{i} \cap \Sigma' \}  \big| \leq \big| \cup_{i \in [k]}  \Sigma_{i} \big| = \binom{K}{\sigma} - \binom{K - k}{\sigma}$.
%%%
On the other hand, for all $k \in [N+1:u_N]$,  we can write $\big| \cup_{i \in [k]}  \{ \Sigma_{i} \cap \Sigma' \}  \big| \leq \big|  \Sigma'   \big| = \binom{K}{\sigma} - \binom{K - N}{\sigma}$.
%%% 
It follows that the symmetric $\sigma$-multicast GDoF region in \eqref{eq:GDoF_region_PHY_missing} includes the achievable region given by
%%%
\begin{equation}
\label{eq:GDoF_region_PHY_missing_2}
\begin{aligned}
\sum_{i \in [k]} r_{i} + \bigg[  \binom{K}{\sigma} - \binom{K - k}{\sigma} \bigg] \cdot  r_{\mathrm{sym}}   & \leq \alpha_{k}, \ \forall k \in [N]\\
\sum_{i \in [k]}  r_{i} + \bigg[  \binom{K}{\sigma} - \binom{K - N}{\sigma} \bigg]\cdot  r_{\mathrm{sym}}   & \leq \alpha_{k}, \ \forall k \in [N+1: K].
\end{aligned}
\end{equation}
%%%
The above holds as for each $k \in [K]$, the corresponding inequality in \eqref{eq:GDoF_region_PHY_missing_2} implies its counterpart inequality in 
\eqref{eq:GDoF_region_PHY_missing}.
%%%
Therefore, it follows from \eqref{eq:GDoF_region_PHY_missing_2} that for any feasible unicast GDoF tuple $\mathbf{r} = (r_{k} : k\in [K])$, 
an achievable symmetric multicast GDoF is give by
%%%
\begin{equation}
\label{eq:max_r_sym_2}
r_{\mathrm{sym}} \leq \min_{k \in [K]} \left\{ \frac{ \big( \alpha_{k} - \sum_{i \in [k]} r_{i} \big) }{ \binom{K}{\sigma} - \binom{K - \min\{k,N\}}{\sigma}  }  \right\}
\end{equation}
%%%%
from which we conclude that the GNDT-GDoF trade-off in \eqref{eq:tau_achievability}  is achievable in this case as well.
%%%
\section{Approximate Delay-Rate Characterization}
\label{appendix:constant_gap}
%%%
In this appendix, we show that the optimal GNDT-GDoF characterization in Corollary \ref{corollary:GDoF_region} (and Theorem \ref{theorem:optimal_trade_off}) leads to an approximate optimal delay-rate  characterization, as stated in Remark \ref{remark:delay_rate_char}.
%%%%
\begin{corollary}
%%%
The capacity region $\mathcal{C}(\mathcal{T} ; \mu , \bm{\alpha}, P)$ includes all non-negative rate 
tuples  that satisfy
%%%%
\begin{equation}
\label{eq:capacity_inner_caching}
\sum_{i \in [k]} R_{i} + \frac{1}{\mathcal{T}} \cdot \mathrm{conv} \left( \frac{ \binom{K}{K \mu +1} - \binom{K - \min\{k,N\}  }{K \mu + 1}  }{ \binom{K}{K \mu} }  \right)   \leq \big( {\alpha_{k}} \log(P) - k \big)^{+},  \forall k \in [K].
\end{equation}
%%%%
Moreover, for any delay-rate tuple $(\mathcal{T}, \mathbf{R}' ; \mu)$ such that $\mathbf{R}' = (R_{1}',\ldots , R_{K}')$ is at the boundary of 
the achievable region described in \eqref{eq:capacity_inner_caching}, 
%%%
the best any scheme can do is to increase each rate $R_{k}'$ by less than $2$ bits per channel use, and reduce  $\mathcal{T}'$ by at most a multiplicative factor of $2.01$.
%%%%
\end{corollary}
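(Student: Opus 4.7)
The plan is to lift both directions of Theorem \ref{theorem:optimal_trade_off} from the GNDT--GDoF regime to finite SNR by leveraging the constant-gap characterization of the physical channel established in Corollary \ref{corollary:constant_gap_PHY}, which serves as a finite-SNR replacement for Theorem \ref{theorem:GDoF_phy}.

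For the achievability part, I would re-run the separation-based scheme of Section \ref{sec:achievability} (and Appendix \ref{appendix:non_integer_K_mu} for non-integer $K\mu$) without taking the GDoF limit, and charge the physical-layer transmission against the inner capacity bound \eqref{eq:capacity_inner_phy} from Corollary \ref{corollary:constant_gap_PHY}. The YMA layer still demands a symmetric multicast rate $R_{\mathrm{sym}} = 1/\bigl(\mathcal{T} \binom{K}{K\mu}\bigr)$ in file-bit units per channel use; translating the symmetric-projection identity used to derive Corollary \ref{corollary:symm_multicast_region} into the constant-gap setting and substituting this $R_{\mathrm{sym}}$ produces exactly the inequalities \eqref{eq:capacity_inner_caching}. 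For non-integer $K\mu$, the same substitution against the two-multicast-set analogue of Corollary \ref{corollary:constant_gap_PHY} (whose proof is a verbatim extension of Appendix \ref{appendix:GDoF_region}) yields a convex combination that collapses to \eqref{eq:capacity_inner_caching} via the identity argument used after \eqref{eq:tau_convex_comb}.

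For the converse statement, the key observation is that the argument in Section \ref{sec:converse} is already finite-SNR up until the GDoF limit; stopping one step earlier, at \eqref{eq:conv_sum_rate_UB} and the companion bound derived from \eqref{eq:sum_H_I_N_2}--\eqref{eq:non_content_sum_rate_bound_3}, one obtains, for each $s \in [K]$,
\begin{equation}
\sum_{k \in [s]} R_k + \frac{1}{2.01\, \mathcal{T}} \cdot \mathrm{conv}\!\left( \frac{\binom{K}{K\mu+1} - \binom{K-\min\{s,N\}}{K\mu+1}}{\binom{K}{K\mu}} \right) \leq \log(1 + P^{\alpha_s}).
\end{equation}
Now fix any $(\mathcal{T}',\mathbf{R}')$ on the boundary of \eqref{eq:capacity_inner_caching}; there is an index $k^\star \in [K]$ at which the inner bound is tight. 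Evaluating the converse at $s = k^\star$ on the candidate tuple $(\mathcal{T}'/2.01,\, R_1'+2,\ldots,R_K'+2)$ yields
\begin{equation}
\sum_{k \in [k^\star]} (R_k'+2) + \frac{1}{\mathcal{T}'} \cdot \mathrm{conv}(\cdot) = (\alpha_{k^\star}\log P - k^\star)^+ + 2k^\star \geq \alpha_{k^\star}\log P + k^\star,
\end{equation}
which strictly exceeds $\log(1 + P^{\alpha_{k^\star}}) < \alpha_{k^\star}\log P + 1$ whenever $k^\star \geq 1$ and $P > 1$, so the candidate lies outside $\mathcal{C}(\mathcal{T}'/2.01;\mu,\bm{\alpha},P)$. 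The strictness of this gap gives the ``less than $2$ bits'' form of the corollary.

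The main obstacle is the bookkeeping of the additive $-k$ term in \eqref{eq:capacity_inner_phy}: one has to verify that for every possible boundary index $k^\star$ the compound inflation $2k^\star$ absorbs the $k^\star + 1$-bit discrepancy between the inner and outer physical-channel bounds, uniformly in $P>1$ and in the channel-strength truncation $\alpha_k \in (0,1]$. The edge case $\alpha_{k^\star}\log P < k^\star$ collapses the tight constraint to $0 = 0$ and is handled as in Remark \ref{remark_GDoF_model}, while $P \leq 1$ is absorbed in the same one-bit slack. The multiplicative constant $2.01$ carries over verbatim because the YMA-style entropy bound \eqref{eq:av_cond_entropy_bound_3} is SNR-independent, so no additional loss is incurred when exiting the GDoF limit.
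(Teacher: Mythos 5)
Your proposal is correct and follows essentially the same route as the paper: achievability is obtained by charging the separation scheme against the constant-gap physical-channel inner bound of Corollary \ref{corollary:constant_gap_PHY}, and the converse is obtained by stopping the argument of Section \ref{section:converse} one step before the GDoF limit (at \eqref{eq:conv_sum_rate_UB} and its companion for $s>N$) and then producing the same contradiction at the tight index via $\log(1+P^{\alpha_{k'}}) < \alpha_{k'}\log P + 1$. The only cosmetic difference is that you keep the outer bound in the form $\log(1+P^{\alpha_s})$ and relax it inline, whereas the paper relaxes it first in \eqref{eq:outer_bound_strict_caching}; the bookkeeping of the additive $k'$ term is handled identically.
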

%%%%
\begin{proof}
%%%
Following the same steps in Appendix \ref{appendix:subsec_constant_gap_PHY}, and combining with the achievability arguments in Section \ref{sec:achievability} and Appendix \ref{appendix:non_integer_K_mu}, it can be verified that $\mathcal{C}(\mathcal{T} ; \mu, \bm{\alpha}, P)$ includes the achievable rate region described above in \eqref{eq:capacity_inner_caching}.
%%%%
On the other hand, the same argument used to show  \eqref{eq:outer_ound_phy_strict} in Appendix \ref{appendix:subsec_constant_gap_PHY}
can be employed to show that the outer bound derived in Section \ref{section:converse} implies that 
any rate tuple in $\mathcal{C}(\mathcal{T} ; \mu ,  \bm{\alpha},  P)$ must satisfy
%%%
\begin{equation}
\label{eq:outer_bound_strict_caching}
%\label{eq:conv_sum_rate_UB}
 \sum_{i \in [k]} R_{i} + \frac{1}{2.01 \cdot\mathcal{T} } \cdot \mathrm{conv} \left(  \frac{\binom{K}{K \mu +1} - \binom{K-\min\{k,N\} }{K \mu +1}}{\binom{K}{K \mu}} \right)  < \alpha_{k}  \log(P) + 1,  \forall k \in [K].
\end{equation}
%%%
Now let us introduce the rate tuple  $\mathbf{R}'' = (R_{1}'', \ldots , R_{K}'') = (R_{1}' + 2, \ldots , R_{K}' + 2)$ and 
the delay $\mathcal{T}'' = \mathcal{T} / 2.01$.
%%%
Since $\mathbf{R}'$ is at the boundary of the region in \eqref{eq:capacity_inner_caching}, we must have an index $k' \in [K]$
such that at least one of the inequalities in \eqref{eq:capacity_inner_caching} holds with equality. 
%%%
This implies that
%%%
\begin{align}
\nonumber
 \sum_{i \in [k']} R_{i}'' + \frac{1}{2.01 \cdot \mathcal{T}''} \cdot \mathrm{conv} &  \left(  \frac{\binom{K}{K \mu +1} - \binom{K- \min\{k',N\}  }{K \mu +1}}{\binom{K}{K \mu}} \right)  \\
%%%
\nonumber
&  =  \sum_{i \in [k']} R_{i}' + \frac{1}{\mathcal{T}'} \cdot \mathrm{conv} \left(  \frac{\binom{K}{K \mu +1} - \binom{K-\min\{k',N\} }{K \mu +1}}{\binom{K}{K \mu}} \right)   + 2k' \\
& = \big( {\alpha_{k'}} \log(P) - k' \big)^{+} +  2k'  \\
%%%
\label{eq:outer_bound_strict_caching_2}
& \geq {\alpha_{k'}} \log(P)  +  1.
\end{align}
%%%
It follows from \eqref{eq:outer_bound_strict_caching} and \eqref{eq:outer_bound_strict_caching_2} that 
$\mathbf{R}''$ is strictly outside the capacity region $\mathcal{C}(\mathcal{T}'' ; \mu, \bm{\alpha},  P)$,
and therefore the delay-rate tuple $(\mathcal{T}'', \mathbf{R}'' ; \mu)$
is in fact not achievable.
%%%
\end{proof}
%%%
\section*{Acknowledgements}
The authors would like to thank the anonymous reviewers for their valuable comments, which helped improve the quality of this paper.
%%%
%\newpage
%\footnotesize
\bibliographystyle{IEEEtran}
\bibliography{References}

% Generated by IEEEtran.bst, version: 1.14 (2015/08/26)
\begin{thebibliography}{10}
\providecommand{\url}[1]{#1}
\csname url@samestyle\endcsname
\providecommand{\newblock}{\relax}
\providecommand{\bibinfo}[2]{#2}
\providecommand{\BIBentrySTDinterwordspacing}{\spaceskip=0pt\relax}
\providecommand{\BIBentryALTinterwordstretchfactor}{4}
\providecommand{\BIBentryALTinterwordspacing}{\spaceskip=\fontdimen2\font plus
\BIBentryALTinterwordstretchfactor\fontdimen3\font minus
  \fontdimen4\font\relax}
\providecommand{\BIBforeignlanguage}[2]{{%
\expandafter\ifx\csname l@#1\endcsname\relax
\typeout{** WARNING: IEEEtran.bst: No hyphenation pattern has been}%
\typeout{** loaded for the language `#1'. Using the pattern for}%
\typeout{** the default language instead.}%
\else
\language=\csname l@#1\endcsname
\fi
#2}}
\providecommand{\BIBdecl}{\relax}
\BIBdecl

\bibitem{Paschos2018}
G.~S. {Paschos}, G.~{Iosifidis}, M.~{Tao}, D.~{Towsley}, and G.~{Caire}, ``The
  role of caching in future communication systems and networks,'' \emph{IEEE J.
  Sel. Areas Commun.}, vol.~36, no.~6, pp. 1111--1125, Jun. 2018.

\bibitem{Shanmugam2013}
K.~Shanmugam, N.~Golrezaei, A.~G. Dimakis, A.~F. Molisch, and G.~Caire,
  ``Femtocaching: Wireless content delivery through distributed caching
  helpers,'' \emph{IEEE Trans. Inf. Theory}, vol.~59, no.~12, pp. 8402--8413,
  Dec. 2013.

\bibitem{Maddah-Ali2014}
M.~A. Maddah-Ali and U.~Niesen, ``Fundamental limits of caching,'' \emph{IEEE
  Trans. Inf. Theory}, vol.~60, no.~5, pp. 2856--2867, May 2014.

\bibitem{Wan2016}
K.~Wan, D.~Tuninetti, and P.~Piantanida, ``On the optimality of uncoded cache
  placement,'' in \emph{Proc. IEEE ITW}, Sep. 2016, pp. 161--165.

\bibitem{Yu2018}
Q.~Yu, M.~A. Maddah-Ali, and A.~S. Avestimehr, ``The exact rate-memory tradeoff
  for caching with uncoded prefetching,'' \emph{IEEE Trans. Inf. Theory},
  vol.~64, no.~2, pp. 1281--1296, Feb. 2018.

\bibitem{Yu2019}
------, ``Characterizing the rate-memory tradeoff in cache networks within a
  factor of 2,'' \emph{IEEE Trans. Inf. Theory}, vol.~65, no.~1, pp. 647--663,
  Jan. 2019.

\bibitem{Shariatpanahi2016}
S.~P. Shariatpanahi, S.~A. Motahari, and B.~H. Khalaj, ``Multi-server coded
  caching,'' \emph{IEEE Trans. Inf. Theory}, vol.~62, no.~12, pp. 7253--7271,
  Dec. 2016.

\bibitem{Ghorbel2016}
A.~Ghorbel, M.~Kobayashi, and S.~Yang, ``Content delivery in erasure broadcast
  channels with cache and feedback,'' \emph{IEEE Trans. Inf. Theory}, vol.~62,
  no.~11, pp. 6407--6422, Nov. 2016.

\bibitem{Amiri2018a}
M.~M. Amiri and D.~G{\"u}nd{\"u}z, ``Cache-aided content delivery over erasure
  broadcast channels,'' \emph{IEEE Trans. Commun.}, vol.~66, no.~1, pp.
  370--381, Jan. 2018.

\bibitem{Amiri2018b}
------, ``Caching and coded delivery over {Gaussian} broadcast channels for
  energy efficiency,'' \emph{IEEE J. Sel. Areas Commun.}, vol.~36, no.~8, pp.
  1706--1720, Aug. 2018.

\bibitem{Amiri2018}
------, ``On the capacity region of a cache-aided {Gaussian} broadcast channel
  with multi-layer messages,'' in \emph{Proc. IEEE ISIT}, Jun. 2018, pp.
  1909--1913.

\bibitem{Salman2019}
M.~{Salman} and M.~K. {Varanasi}, ``The exact capacity-memory tradeoff for
  caching with uncoded prefetching in the two-receiver {Gaussian} broadcast
  channel,'' in \emph{Proc. IEEE ISIT}, Jul. 2019, pp. 1222--1226.

\bibitem{Bidokhti2017a}
S.~S. {Bidokhti}, M.~{Wigger}, and A.~{Yener}, ``Benefits of cache assignment
  on degraded broadcast channels,'' in \emph{Proc. ISIT}, Jun. 2017, pp.
  1222--1226.

\bibitem{Bidokhti2018}
S.~S. Bidokhti, M.~Wigger, and R.~Timo, ``Noisy broadcast networks with
  receiver caching,'' \emph{IEEE Trans. Inf. Theory}, vol.~64, no.~11, pp.
  6996--7016, Nov. 2018.

\bibitem{Zhang2017a}
J.~{Zhang} and P.~{Elia}, ``Wireless coded caching: A topological
  perspective,'' in \emph{Proc. IEEE ISIT}, Jun. 2017, pp. 401--405.

\bibitem{Lampiris2019}
E.~Lampiris, J.~Zhang, O.~Simeone, and P.~Elia, ``Fundamental limits of
  wireless caching under uneven-capacity,'' in \emph{Proc. Int. Zurich Seminar
  Inf. Commun. (IZS)}, Feb. 2020.

\bibitem{Zhang2015}
J.~Zhang, F.~Engelmann, and P.~Elia, ``Coded caching for reducing
  {CSIT}-feedback in wireless communications,'' in \emph{Proc. Allerton}, Sep.
  2015, pp. 1099--1105.

\bibitem{Zhang2017}
J.~Zhang and P.~Elia, ``Fundamental limits of cache-aided wireless {BC}:
  Interplay of coded-caching and {CSIT} feedback,'' \emph{IEEE Trans. Inf.
  Theory}, vol.~63, no.~5, pp. 3142--3160, May 2017.

\bibitem{Piovano2017}
E.~Piovano, H.~Joudeh, and B.~Clerckx, ``On coded caching in the overloaded
  {MISO} broadcast channel,'' in \emph{Proc. IEEE ISIT}, Jun. 2017, pp.
  2795--2799.

\bibitem{Ngo2018}
K.~Ngo, S.~Yang, and M.~Kobayashi, ``Scalable content delivery with coded
  caching in multi-antenna fading channels,'' \emph{IEEE Trans. Wireless
  Commun.}, vol.~17, no.~1, pp. 548--562, Jan 2018.

\bibitem{Lampiris2018a}
E.~{Lampiris} and P.~{Elia}, ``Achieving full multiplexing and unbounded
  caching gains with bounded feedback resources,'' in \emph{Proc. IEEE ISIT},
  Jun. 2018, pp. 1440--1444.

\bibitem{Piovano2019}
E.~{Piovano}, H.~{Joudeh}, and B.~{Clerckx}, ``Generalized degrees of freedom
  of the symmetric cache-aided {MISO} broadcast channel with partial {CSIT},''
  \emph{IEEE Trans. Inf. Theory}, vol.~65, no.~9, pp. 5799--5815, Sep. 2019.

\bibitem{Bergel2018}
I.~{Bergel} and S.~{Mohajer}, ``Cache-aided communications with multiple
  antennas at finite {SNR},'' \emph{IEEE J. Sel. Areas Commun.}, vol.~36,
  no.~8, pp. 1682--1691, Aug. 2018.

\bibitem{Shariatpanahi2019}
S.~P. {Shariatpanahi}, G.~{Caire}, and B.~{Hossein Khalaj}, ``Physical-layer
  schemes for wireless coded caching,'' \emph{IEEE Trans. Inf. Theory},
  vol.~65, no.~5, pp. 2792--2807, May 2019.

\bibitem{Cao2019}
Y.~{Cao} and M.~{Tao}, ``Treating content delivery in multi-antenna coded
  caching as general message sets transmission: A {DoF} region perspective,''
  \emph{IEEE Trans. Wireless Commun.}, vol.~18, no.~6, pp. 3129--3141, Jun.
  2019.

\bibitem{Ji2016}
M.~Ji, G.~Caire, and A.~F. Molisch, ``Fundamental limits of caching in wireless
  {D2D} networks,'' \emph{IEEE Trans. Inf. Theory}, vol.~62, no.~2, pp.
  849--869, Feb. 2016.

\bibitem{Maddah-Ali2015}
M.~A. Maddah-Ali and U.~Niesen, ``Cache-aided interference channels,'' in
  \emph{Proc. IEEE ISIT}, Jun. 2015, pp. 809--813.

\bibitem{Naderializadeh2017}
N.~Naderializadeh, M.~A. Maddah-Ali, and A.~S. Avestimehr, ``Fundamental limits
  of cache-aided interference management,'' \emph{IEEE Trans. Inf. Theory},
  vol.~63, no.~5, pp. 3092--3107, May 2017.

\bibitem{Xu2017}
F.~Xu, M.~Tao, and K.~Liu, ``Fundamental tradeoff between storage and latency
  in cache-aided wireless interference networks,'' \emph{IEEE Trans. Inf.
  Theory}, vol.~63, no.~11, pp. 7464--7491, Nov. 2017.

\bibitem{Cao2017}
Y.~Cao, M.~Tao, F.~Xu, and K.~Liu, ``Fundamental storage-latency tradeoff in
  cache-aided {MIMO} interference networks,'' \emph{IEEE Trans. Wireless
  Commun.}, vol.~16, no.~8, pp. 5061--5076, Aug. 2017.

\bibitem{Hachem2018}
J.~Hachem, U.~Niesen, and S.~N. Diggavi, ``Degrees of freedom of cache-aided
  wireless interference networks,'' \emph{IEEE Trans. Inf. Theory}, vol.~64,
  no.~7, pp. 5359--5380, Jul. 2018.

\bibitem{Lampiris2017}
E.~Lampiris, J.~Zhang, and P.~Elia, ``Cache-aided cooperation with no {CSIT},''
  in \emph{Proc. IEEE ISIT}, Jun. 2017, pp. 2960--2964.

\bibitem{Piovano2020}
E.~{Piovano}, H.~{Joudeh}, and B.~{Clerckx}, ``Centralized and decentralized
  cache-aided interference management in heterogeneous parallel channels,''
  \emph{IEEE Trans. Commun.}, vol.~68, no.~3, pp. 1881--1896, Mar. 2020.

\bibitem{Sengupta2017}
A.~Sengupta, R.~Tandon, and O.~Simeone, ``Fog-aided wireless networks for
  content delivery: Fundamental latency tradeoffs,'' \emph{IEEE Trans. Inf.
  Theory}, vol.~63, no.~10, pp. 6650--6678, Oct. 2017.

\bibitem{Zhang2019}
J.~{Zhang} and O.~{Simeone}, ``Fundamental limits of cloud and cache-aided
  interference management with multi-antenna edge nodes,'' \emph{IEEE Trans.
  Inf. Theory}, vol.~65, no.~8, pp. 5197--5214, Aug. 2019.

\bibitem{Lampiris2018}
E.~{Lampiris} and P.~{Elia}, ``Adding transmitters dramatically boosts
  coded-caching gains for finite file sizes,'' \emph{IEEE J. Sel. Areas
  Commun.}, vol.~36, no.~6, pp. 1176--1188, Jun. 2018.

\bibitem{Parrinello2020}
E.~{Parrinello}, A.~{\"Unsal}, and P.~{Elia}, ``Fundamental limits of coded
  caching with multiple antennas, shared caches and uncoded prefetching,''
  \emph{IEEE Trans. Inf. Theory}, vol.~66, no.~4, pp. 2252--2268, Apr. 2020.

\bibitem{Paschos2016}
G.~{Paschos}, E.~{Bastug}, I.~{Land}, G.~{Caire}, and M.~{Debbah}, ``Wireless
  caching: technical misconceptions and business barriers,'' \emph{IEEE Commun.
  Magazine}, vol.~54, no.~8, pp. 16--22, Aug. 2016.

\bibitem{Etkin2008}
R.~H. Etkin, D.~N.~C. Tse, and H.~Wang, ``Gaussian interference channel
  capacity to within one bit,'' \emph{IEEE Trans. Inf. Theory}, vol.~54,
  no.~12, pp. 5534--5562, Dec. 2008.

\bibitem{Bresler2010}
G.~{Bresler}, A.~{Parekh}, and D.~N.~C. {Tse}, ``The approximate capacity of
  the many-to-one and one-to-many {Gaussian} interference channels,''
  \emph{IEEE Trans. Inf. Theory}, vol.~56, no.~9, pp. 4566--4592, Sep. 2010.

\bibitem{Jafar2010}
S.~A. {Jafar} and S.~{Vishwanath}, ``Generalized degrees of freedom of the
  symmetric {Gaussian} {$K$} user interference channel,'' \emph{IEEE Trans.
  Inf. Theory}, vol.~56, no.~7, pp. 3297--3303, Jul. 2010.

\bibitem{ElGamal2011}
A.~El~Gamal and Y.-H. Kim, \emph{Network information theory}.\hskip 1em plus
  0.5em minus 0.4em\relax Cambridge university press, 2011.

\end{thebibliography}
%%%
\end{document}